\documentclass[11pt]{article}

\usepackage[margin=1in]{geometry}
\usepackage{amsmath,amssymb,amsthm,mathtools}
\usepackage{mathrsfs}
\usepackage{bm}
\usepackage{booktabs}
\usepackage{array}
\usepackage{enumitem}
\usepackage[dvipsnames]{xcolor}
\usepackage{tikz}
\usetikzlibrary{positioning,arrows.meta}
\usepackage{microtype}
\usepackage{hyperref}
\usepackage[nameinlink,capitalise,noabbrev]{cleveref}
\usepackage[numbers,sort&compress]{natbib}

\hypersetup{
  colorlinks=true,
  linkcolor=MidnightBlue,
  citecolor=ForestGreen,
  urlcolor=BrickRed
}

\newtheorem{theorem}{Theorem}[section]
\newtheorem{proposition}[theorem]{Proposition}
\newtheorem{corollary}[theorem]{Corollary}
\newtheorem{lemma}[theorem]{Lemma}

\theoremstyle{definition}
\newtheorem{definition}[theorem]{Definition}
\newtheorem{example}[theorem]{Example}
\theoremstyle{remark}
\newtheorem{remark}[theorem]{Remark}

\newcommand{\Ee}{\mathbb E}

\newcommand{\1}{\boldsymbol{1}}

\newcommand{\Pois}{\operatorname{Pois}}

\newcommand{\cM}{\mathcal M}
\newcommand{\ml}[2]{\mathcal{L}(#1\!\to\!#2)}

\newcommand{\dd}{\,\mathrm d}

\newcommand{\resultlink}[2]{\hyperref[#1]{\Cref*{#1}: #2}}

\title{Minimax Quantile Bounds via Information Measures}
\author{Amedeo Roberto Esposito}
\date{\today}

\begin{document}
\maketitle

\begin{abstract}
We develop a unified information-theoretic framework for lower bounding
minimax quantiles. The starting point is a loss-adapted
Neyman--Pearson metaconverse that bounds the minimax success probability
at every loss threshold and confidence level. The bound separates the
small-ball behaviour of the prior under the loss from the statistical
distinguishability of the observation model, and is optimised over an auxiliary
output distribution. Different relaxations of this Neyman--Pearson bound
yield converses based on \(f\)-informativity, Sibson mutual information
\(I_\alpha\), Maximal Leakage, and Amemiya norms. Classical Fano and
Le Cam lower bounds are recovered as special cases.

The framework also clarifies why different information measures are
suited to different recovery criteria. Maximal Leakage is exact for a
class of symmetric exact-recovery problems. We use this identity to
derive finite-sample bounds on the full minimax exact-recovery risk in
the balanced Gaussian weighted stochastic block model, as well as
two-sided finite-sample minimax-quantile bounds for low-rank matrix
estimation under isotropic bounded-energy noise. For approximate Hamming
recovery, we exhibit a heterogeneous binary model in which an optimised
finite Sibson order yields a strong converse while the Maximal Leakage
specialisation is trivial. Finally, for one-coordinate Poisson
localisation, a Bennett-type Young function used through its Amemiya norm
recovers the exact success-probability scale, whereas classical Fano and
fixed-power relaxations are strictly weaker. These results show that sharp
converses for minimax quantiles require adapting the information measure to
the recovery resolution, whether exact or approximate, and to the tail
behaviour of the likelihood ratio.
\end{abstract}

\section{Introduction}
Suppose that we wish to estimate an unknown parameter
\(\theta\in\Theta\) from noisy observations \(X^n\) taking values in a
measurable space \(\mathcal X^n\). The law of \(X^n\) depends on
\(\theta\), and the quality of an estimator \(\widehat\theta(X^n)\) is
measured by a loss \(\ell(\widehat\theta,\theta)\). Classical minimax
theory usually summarises this random loss through its expectation.
Here we instead ask a confidence-dependent question: for a prescribed
\(\delta\in(0,1)\), what is the smallest loss threshold for which the
worst-case failure probability is at most \(\delta\)? This is the minimax
quantile problem systematically formulated by Ma et al.
\cite{MaVerchandSamworth2024}. It retains the tail of the loss and can
therefore distinguish observation models that have the same expected-risk rate
but very different probabilities of a large estimation error.

This paper develops a unified approach to converse bounds for minimax
quantiles. The organising object is the Neyman--Pearson function
\(\beta_\alpha(P,Q)\). It plays the role of an information measure by
quantifying the distinguishability between \(P\) and \(Q\). Introduce an auxiliary prior \(P_W\) on
\(\Theta\), let \(X\) be generated conditionally on \(W\), and consider
the success event
\(A_\rho=\{\ell(\widehat\theta(X),W)<\rho\}\). Under the true joint distribution
\(P_{WX}\), the probability of \(A_\rho\) is the average probability of
achieving loss smaller than \(\rho\). Define the small-ball function
as the largest probability that the prior assigns to a loss ball:
\[
L_W(\rho):=
\sup_{\vartheta\in\Theta}
P_W\{\ell(\vartheta,W)<\rho\}.
\]
Under the product reference distribution \(P_WQ_X\), the parameter \(W\) and
the observation \(X\) are independent. Under independence, the probability
of success cannot exceed the small-ball probability, so
\(P_WQ_X(A_\rho)\leq L_W(\rho)\) for every estimator. By the definition of
the inverse Neyman--Pearson function, the probability of the same event under
the true joint distribution satisfies
\[
P_{WX}(A_\rho)
\leq
\beta_{\circ}(P_{WX},P_WQ_X)^{-1}\bigl(L_W(\rho)\bigr).
\]
This inequality holds for every auxiliary distribution \(Q_X\), so we
minimise the right-hand side over \(Q_X\). If the resulting upper bound is
smaller than \(1-\delta\), no estimator can achieve loss below \(\rho\) with
probability \(1-\delta\). Hence the minimax quantile is at least \(\rho\); see
\Cref{thm:npLowerBound} in \Cref{sec:main-results}.

The two quantities play different roles. The small-ball function
\(L_W(\rho)\) depends only on the prior and the loss. The Neyman--Pearson
function depends on the observation model. This allows the exact
Neyman--Pearson bound to be replaced by several tractable inequalities.
Binary data processing for an
\(f\)-divergence gives an \(f\)-informativity converse. A uniform prior on
a finite packing then recovers the \(f\)-Fano method; choosing $f(x)=x\log x$ gives
classical Fano, and the two-point total-variation choice gives Le Cam's
method. Binary R\'enyi data processing gives a converse in terms of
Sibson mutual information \(I_\alpha\), and the limit \(\alpha\to\infty\)
leads to Maximal Leakage. Alternatively, generalised H\"older duality in
Orlicz spaces controls the same success event through the Amemiya norm of
the likelihood ratio. These specialisations are stated in
\Cref{thm:fDivLowerBound,cor:renyiSibson,cor:maximalLeakageQuantile,cor:amemiyaHolder}.
Thus, the Neyman--Pearson bound is the common starting point. The
converses above follow by applying different inequalities to it.
Classical Fano and Le Cam are recovered in
\Cref{rmk:classicalFano,cor:leCam}, and the full hierarchy is summarised
in \Cref{fig:lower-bound-hierarchy}.

No single relaxation is uniformly preferable. Because Maximal Leakage is
based on the pointwise supremum of the conditional densities, it is
particularly well matched to exact identification; see
\Cref{sec:exact-recovery-maximal-leakage,thm:maximalLeakageExact}. When approximate recovery is allowed, success includes a neighbourhood of the
true parameter rather than requiring exact identification. A finite
\(I_\alpha\) can exploit how probability is distributed across this
larger success event, whereas Maximal Leakage fails to do so, as shown in
\Cref{sec:bsc-hamming,prop:finiteSibsonBeatsLeakage}. When rare observations
determine the success probability, fixed power moments can obscure the
correct scale, and a suitably chosen Young function can be sharper; see
\Cref{sec:amemiya-outlier,prop:taggedPoisson}.

\subsection{Contributions.}
The main contributions are the following
\begin{enumerate}
\item We establish a loss-adapted Neyman--Pearson metaconverse that bounds the
minimax success probability at every loss threshold. Inverting this bound
yields minimax-quantile lower bounds for every confidence level
(\Cref{thm:npLowerBound}). The auxiliary prior may be chosen freely, and
the bound is tightened by optimising over the reference distribution
\(Q_X\); see \Cref{sec:main-results}.

\item We derive a common hierarchy of tractable specialisations:
binary \(f\)-divergence inversion and \(f\)-informativity, binary
R\'enyi inversion and Sibson mutual information, Maximal Leakage, and
Amemiya--H\"older bounds. Classical Fano and Le Cam appear as corollaries
of the same construction. These classical inequalities are not presented
as new; their recovery identifies precisely where the usual methods sit
inside the metaconverse. The corresponding results are
\Cref{thm:fDivLowerBound,cor:renyiSibson,cor:maximalLeakageQuantile,cor:amemiyaHolder},
with Fano and Le Cam recovered in
\Cref{cor:fFano,rmk:classicalFano,cor:leCam}; see also
\Cref{fig:lower-bound-hierarchy}.

\item Under a uniform prior on a finite parameter set, we show that the
Bayes success probability equals the normalised exponential of Maximal
Leakage. If the MAP rule is an equaliser, the same identity gives the
exact minimax success probability. The equaliser assumption is essential:
Maximal Leakage need not be exact for arbitrary recovery problems; see
\Cref{thm:maximalLeakageExact} in
\Cref{sec:exact-recovery-maximal-leakage}.

\item As a first application, we derive, to the best of our knowledge,
the first finite-sample bounds on the full minimax exact-recovery risk in
the balanced two-community Gaussian weighted stochastic block model.
Unlike the known first-order threshold \cite{PandeyKulkarni2024}, our
bounds retain the dependence on the network size and signal strength, and
reduce the recovery curve to explicit Gaussian extreme-value
probabilities.
The result is stated in \Cref{thm:gwsbmFiniteLeakage}, with its
strong-converse consequence in \Cref{cor:sbmStrongConverse}; see
\Cref{sec:gaussian-weighted-sbm}.

As a second application, we give, to the best of our knowledge, the first
two-sided finite-sample minimax-quantile bounds for continuous low-rank
matrix estimation under isotropic uniform Frobenius-ball noise. These
bounds identify, up to universal constants, the joint dependence on the
intrinsic low-rank dimension and the confidence level; see
\Cref{prop:boundedEnergyLowRankQuantile} in
\Cref{sec:boundedEnergyLowRank}.

\item We prove two strict separation results. For approximate recovery
over a heterogeneous binary channel, an optimised finite Sibson order
gives an exponentially vanishing upper bound on minimax success, and
hence a strong converse, while the Maximal Leakage specialisation is
trivial; see
\Cref{prop:heterogeneousSibsonConverse,prop:finiteSibsonBeatsLeakage,ex:heterogeneousSibsonSeparation}
in \Cref{sec:bsc-hamming}. For one-coordinate Poisson localisation, the Bennett-type Young
function used through its Amemiya norm recovers the exact
\(\log M/(M\log\log M)\) success scale. Classical Fano proves only that
success vanishes, at the much weaker \(1/\log M\) scale. The latter
example shows that leveraging generalised norms is useful not as a universal
replacement for information divergences, but when the success probability
is determined by rare observations that fixed-order moments do not capture;
see \Cref{prop:taggedPoisson} in
\Cref{sec:amemiya-outlier}.
\end{enumerate}

\subsection{Related work}

\paragraph{Classical minimax lower bounds and minimax quantiles.}
Fano's inequality and Le Cam's two-point method are foundational
reductions from estimation to testing
\cite{Fano1961,LeCam1973,Yu1997,Wainwright2019}. Their extensions include
general \(f\)-divergence bounds, continuum and distance-based Fano
inequalities, and formulations for general random variables
\cite{Gushchin2003,Guntuboyina2011,DuchiWainwright2013,
GerchinovitzMenardStoltz2020}. Most of this literature targets expected
minimax risk or a fixed error probability. Ma et al.
\cite{MaVerchandSamworth2024} introduced minimax quantiles as a general
high-probability criterion and developed quantile versions of Fano and
Le Cam, together with local-to-global reductions. We use the same
quantile objective, but start from a single optimised binary-testing
metaconverse. Their high-probability Fano and Le Cam bounds are recovered
as particular relaxations, while our framework also accommodates
\(f\)-informativity, Sibson mutual information, Maximal Leakage and Amemiya norms.

\paragraph{Testing metaconverses.}
The use of an optimal binary test begins with the
Neyman--Pearson lemma \cite{NeymanPearson1933}. Polyanskiy et al.
\cite{PolyanskiyPoorVerdu2010} used this idea to derive the meta-converse
for finite-blocklength channel coding. They compare the joint distribution
of the transmitted message and the channel output with a reference
distribution under which these variables are independent. This comparison
bounds the probability of correct decoding, and optimising over the
auxiliary output distribution tightens the bound.
Venkataramanan et al.
\cite{VenkataramananJohnson2018} adapted this perspective to multiple
hypothesis testing and high-dimensional estimation, showing how it can
sharpen Fano and yield strong converses. Xu et al.
\cite{XuRaginsky2017} connected the \(\beta\)-function, prior small-ball
probabilities and Bayes risk; they also proved a conditional version with
an auxiliary random variable. Relations among the Neyman--Pearson function,
divergences and Bayes risk are developed more generally
in \cite{ReidWilliamson2011}. Our approach instead begins with the event
that the loss is below a certain threshold. We optimise the inverse
Neyman--Pearson bound over \(Q_X\), invert the result to obtain a lower
bound on the minimax quantile, and then derive bounds involving different
information measures and Amemiya norms.

\paragraph{\(f\)-divergences and informativity.}
The general theory of \(f\)-divergences originates with
\cite{AliSilvey1966,Csiszar1967}, and \(f\)-informativity was introduced
in \cite{Csiszar1972}. Binary \(f\)-divergence forms of Fano-type minimax
bounds were developed in \cite{Gushchin2003,Guntuboyina2011}. Chen et al.
\cite{ChenGuntuboyinaZhang2016} gave a general \(f\)-informativity
framework for Bayes-risk lower bounds under arbitrary priors and losses.
The binary inverse used here is consequently a known ingredient, and our
\(f\)-Fano corollary agrees with these results after translating the
success event into a zero--one loss. Our starting point is instead the
optimised Neyman--Pearson bound, which admits specialisations beyond
\(f\)-informativity.

More broadly, Esposito et al.
\cite{EspositoVandenbroucqueGastpar2024} used dual representations and
data processing to obtain Bayes-risk bounds from a wide range of
information measures, including \(f\)-divergences, R\'enyi divergences
and Sibson mutual information. The present work differs in its
minimax-quantile objective, its direct optimisation of the full
Neyman--Pearson success curve, and its emphasis on sharp separation
examples showing that the preferred relaxation changes between exact and
approximate recovery and according to whether rare observations determine
the success probability.

\paragraph{R\'enyi divergence, Sibson mutual information and Maximal Leakage.}
R\'enyi divergence and Sibson mutual information were introduced in
\cite{Renyi1961,Sibson1969}; modern accounts of their properties and
variational representations include
\cite{vanErvenHarremoes2014,Verdu2015,EspositoGastparIssa2025}. Maximal
Leakage quantifies the largest multiplicative increase in the probability
of correctly guessing any randomised function of a ``secret'' random
variable \cite{IssaWagnerKamath2020}. This
interpretation explains its natural
role in exact recovery. Our contribution is not the
definition of these measures, but their placement in a single quantile
metaconverse and the resulting comparison across recovery criteria:
Maximal Leakage is exact for symmetric exact recovery problems, whereas
a finite Sibson order can yield a strong converse for approximate recovery
when Maximal Leakage gives no nontrivial exponent.

\paragraph{Interactive and other unified lower-bound frameworks.}
Recent work has unified Assouad, Fano and Le Cam reductions in
interactive decision problems \cite{ChenFosterHanQianRakhlinXu2024}, and
high-probability Fano and Le Cam methods have been extended to
interactive and privacy-constrained minimax quantiles
\cite{BongoleOechteringSkoglund2026,
BongoleZamaniOechteringSkoglund2026}. Those results address interaction,
learnability or privacy constraints. The present paper instead studies a
general non-interactive estimation setting and enlarges the range of bounds
available after the testing reduction, from \(f\)-informativity to
finite-order Sibson mutual information, Maximal Leakage and Amemiya norms chosen
to reflect the likelihood-ratio distribution.
\section{Preliminaries}
\subsection{Minimax quantile risk}
Let $(\Theta, d)$ be a pseudo-metric space \textit{i.e.}, $d:\Theta \times \Theta \to [0,\infty) $ is a function such that for every $\theta,\zeta,\omega \in \Theta$ one has that $d(\theta,\theta)=0$, $d(\theta,\zeta)=d(\zeta,\theta)$ and $d(\theta,\zeta)\leq d(\theta,\omega)+d(\omega,\theta)$. However, differently from a metric space, one may have $d(\theta,\zeta)=0$ even if $\theta\neq \zeta$. For each \(\theta\in\Theta\), let \(\mathscr P_\theta\) be a nonempty collection of probability measures on \((\mathcal X,\mathcal A)\). Under i.i.d. sampling, define the corresponding \(n\)-sample class by
\[
\mathscr P_\theta^{(n)}
:=
\{P^{\otimes n}:P\in\mathscr P_\theta\}.
\]
Given \(\theta\), we observe \(X^n\sim P^{\otimes n}\) for some \(P\in\mathscr P_\theta\).
Let $g:[0,\infty) \to [0,\infty)$ be a non-decreasing function and define a loss function $\ell:\Theta \times\Theta \to [0,\infty)$ as follows
$$ \ell(\theta,\zeta)= g(d(\theta,\zeta)),$$
for $\theta,\zeta \in \Theta$.
Let
\[
\mathcal E_n
:=
\left\{
\widehat\theta:\mathcal X^n\to\Theta:
\widehat\theta\text{ is measurable}
\right\}
\]
denote the class of estimators. For \(\widehat\theta\in\mathcal E_n\), the random variable \(\widehat\theta(X^n)\) based on the observation \(X^n\). Under parameter \(\theta\), the observation has distribution\[X^n\sim P_{X^n\mid\theta}.\]
For \(\delta\in(0,1]\), \(\theta\in\Theta\), \(\widehat\theta\in\mathcal E_n\), and \(P_{X^n\mid\theta}\in\mathscr P_{X^n\mid\theta}\), define the $(1-\delta)$-th quantile as follows
\[
Q_{1-\delta}
\bigl(\widehat\theta,\theta;P_{X^n\mid\theta}\bigr)
:=
\inf\left\{
r\ge0:
P_{X^n\mid\theta}
\left(
\ell\bigl(\widehat\theta(X^n),\theta\bigr)\le r
\right)
\ge1-\delta
\right\}.
\]

When the observation law is omitted, \(Q_{1-\delta}\) denotes the corresponding worst-case quantile:
\[
Q_{1-\delta}(\widehat\theta,\theta)
:=
\sup_{P_{X^n\mid\theta}\in\mathscr P_{X^n\mid\theta}}
Q_{1-\delta}
\bigl(\widehat\theta,\theta;P_{X^n\mid\theta}\bigr).
\]
When \(\theta\) determines the observation law uniquely, \(\mathscr P_{X^n\mid\theta}\) is a singleton, and the two notions of quantile coincide.
\begin{definition}[Minimax quantile~\cite{MaVerchandSamworth2024}]
\label{def:minimax-quantile}
Let \(\delta \in (0,1]\) the minimax \((1-\delta)\)-quantile of the loss $\ell$ for estimating \(\theta\) via \(X^n\) can be defined as follows

\[
\mathcal M_n(\delta)
:=
\inf_{\widehat\theta\in\mathcal E_n}
\sup_{\theta\in\Theta}
Q_{1-\delta}(\widehat\theta,\theta).
\]
\end{definition}
A related quantity, which is more amenable to analysis and easier to interpret operationally, is the so-called lower minimax quantile.
\begin{definition}[Lower minimax quantile~\cite{MaVerchandSamworth2024}]
\label{def:lower-minimax-quantile}
Let \(\delta \in (0,1]\), the lower minimax \((1-\delta)\)-quantile is defined as follows
\[
\mathcal M_{n,-}(\delta) := \inf\left\{ r\geq 0 : 
\inf_{\widehat\theta\in\mathcal E_n}
\sup_{\theta\in\Theta}\sup_{P_{X^n\mid\theta}\in\mathscr P_{X^n\mid\theta}}P_{X^n\mid\theta}
\left(
\ell\bigl(\widehat\theta(X^n),\theta\bigr)> r
\right)
\le \delta
\right\}.
\]
The lower minimax quantile is the smallest loss threshold for which one can choose an estimator whose worst-case failure probability is at most \(\delta\).
Moreover, since one has that $\mathcal M_n(\delta) \geq \mathcal M_{n,-}(\delta)$ (\cite[Theorem 4]{MaVerchandSamworth2024}) providing a lower-bound on the lower minimax quantile yields a lower-bound on the minimax quantile itself. Moreover, if one can prove that for some $r$
\[
\inf_{\widehat\theta\in\mathcal E_n}
\sup_{\theta\in\Theta}\sup_{P_{X^n\mid\theta}\in\mathscr P_{X^n\mid\theta}}P_{X^n\mid\theta}
\left(
\ell\bigl(\widehat\theta(X^n),\theta\bigr)> r
\right)
> \delta
\]
then one has that $\mathcal M_n(\delta)\geq \mathcal M_{n,-}(\delta) \geq r$.
\end{definition}
For a sequence of estimation problems, we call a converse \emph{strong}
if, whenever its stated impossibility condition holds, its lower bound on
the minimax failure probability converges to one~\cite{VenkataramananJohnson2018}.  
A fundamental object that appears throughout the manuscript is the so-called small-ball probability~\cite{XuRaginsky2017,ChenGuntuboyinaZhang2016,EspositoVandenbroucqueGastpar2024}.
\begin{definition}[Small-ball function]
For an auxiliary prior $P_W$ on $\Theta$ and a threshold $\rho>0$, define
\begin{equation}
L_W(\rho)
:=
\sup_{\vartheta\in\Theta}
P_W\bigl(\ell(\vartheta,W)<\rho\bigr).
\label{eq:smallBallFunction}
\end{equation}
\end{definition}
\subsection{Information measures}
We now introduce the various families of information measures we will use to lower bound the risk.
\begin{definition}[$f$-divergence and $f$-information]
Let $f:(0,\infty)\to\mathbb R$ be convex with $f(1)=0$.  When $P\ll Q$,
\begin{equation}
D_f(P\|Q)
:=
\int f\left(\frac{\mathrm dP}{\mathrm dQ}\right)\mathrm dQ,
\end{equation}
with the standard lower-semicontinuous extension when $P$ has a singular
part~\cite{AliSilvey1966,Csiszar1967}.  For $p,q\in[0,1]$, its binary form is
\begin{equation}
d_f(p\|q)
:=
qf\left(\frac pq\right)
+(1-q)f\left(\frac{1-p}{1-q}\right),
\end{equation}
For fixed $q$, we invert $p\mapsto d_f(p\|q)$ over $p\in[q,1]$ by defining
\begin{equation}
d_f^{-1}(a\|q)
:=
\sup\{p\in[q,1]:d_f(p\|q)\leq a\}.
\end{equation}
See~\cite[Eq.~(5) and Proposition~1]{MajumdarMeiPacelli2023} for this inverse and its monotonicity on $[q,1]$.
The $f$-informativity is instead defined as follows~\cite{Csiszar1972,ChenGuntuboyinaZhang2016}
\begin{equation}
I_f(W,X)
:=
\inf_{Q_X}D_f(P_{WX}\|P_WQ_X).
\end{equation}
For a general $f$-divergence, the optimisation over $Q_X$ need not admit
a closed-form solution, which can make $f$-informativity less amenable to
direct analysis and applications.
\end{definition}
\begin{definition}[Sibson mutual information and Maximal Leakage]
For $\alpha>1$, the R\'enyi divergence and Sibson mutual information of order
$\alpha$ are, respectively~\cite{Renyi1961,vanErvenHarremoes2014,Sibson1969,Verdu2015},
\begin{align}
D_\alpha(P\|Q)
&:=
\frac1{\alpha-1}
\log\int
\left(\frac{\mathrm dP}{\mathrm dQ}\right)^\alpha\mathrm dQ,
\\
I_\alpha(W,X)
&:=
\inf_{Q_X}D_\alpha(P_{WX}\|P_WQ_X).
\end{align}
For $p,q\in[0,1]$, define the binary R\'enyi divergence and, for fixed
$q$, its inverse over $p\in[q,1]$ by
\begin{align}
d_\alpha(p\|q)
&:=
\frac1{\alpha-1}
\log\left\{
p^\alpha q^{1-\alpha}
+(1-p)^\alpha(1-q)^{1-\alpha}
\right\},
\\
d_\alpha^{-1}(a\|q)
&:=
\sup\{p\in[q,1]:d_\alpha(p\|q)\leq a\}.
\end{align}
Maximal Leakage is the limit~\cite{IssaWagnerKamath2020}
\begin{equation}
\ml{W}{X}
:=
\lim_{\alpha\to\infty}I_\alpha(W,X).
\label{eq:maximalLeakageDefinition}
\end{equation}
If the channel is dominated by a measure $\mu$ and has conditional
densities $p_{X\mid W=w}$, this limit has the following closed-form expression~\cite{IssaWagnerKamath2020}
\begin{equation}
\ml{W}{X}
=
\log\int
\operatorname*{ess\,sup}_{w\in\operatorname{supp}(P_W)}
p_{X\mid W=w}(x)\,\mu(\mathrm dx).
\label{eq:maximalLeakageIntegralDefinition}
\end{equation}
\end{definition}
Young functions provide a third family of bounds, distinct from
\(f\)-informativity and Sibson mutual information. We first define complementary
Young functions and the two norms used below.
\begin{definition}[Complementary Young functions and Amemiya norms~\cite{RaoRen1991,HudzikMaligranda2000}]
A Young function is a convex, non-decreasing function
$\Psi:[0,\infty)\to[0,\infty]$ with $\Psi(0)=0$ and
$\Psi(t)\to\infty$ as $t\to\infty$. Its complementary Young function is
\begin{equation}
\Psi^\star(s):=\sup_{t\geq0}\{st-\Psi(t)\}.
\end{equation}
For a probability measure $Q$, write
\begin{equation}
\|Y\|_{L_\Psi^\mathrm L(Q)}
:=
\inf\left\{c>0:\mathbb E_Q\Psi\left(\frac{|Y|}{c}\right)\leq1\right\}
\end{equation}
for the Luxemburg norm and
\begin{equation}
\|Z\|_{L_{\Psi^\star}^\mathrm A(Q)}
:=
\inf_{t>0}
\frac{1+\mathbb E_Q\Psi^\star(t|Z|)}{t}
=
\sup\left\{\mathbb E_Q|ZY|:\|Y\|_{L_\Psi^\mathrm L(Q)}\leq1\right\}
\end{equation}
for the Amemiya norm.  The equality between the infimum and supremum
representations holds without additional assumptions
\cite{HudzikMaligranda2000}.
The supremum representation shows that the Amemiya norm is the norm dual
to the Luxemburg norm. This dual norm is often called the Orlicz norm;
throughout the remainder of the paper, we use only the term
\emph{Amemiya norm}.
Consequently, one has the generalised H\"older-Rogers' inequality~\cite{RaoRen1991,HudzikMaligranda2000}
\[
\mathbb E_Q|ZY|
\leq
\|Z\|_{L_{\Psi^\star}^\mathrm A(Q)}
\|Y\|_{L_\Psi^\mathrm L(Q)}.
\]
\end{definition}
Going beyond the classical \(L_\alpha\) norms also allows us to treat settings involving heavy-tailed random variables~\cite{EspositoMondelli}. In such settings, the \(L_\alpha\) norms may be infinite even when the KL divergence is finite, and the resulting Fano-type lower bound may fail to recover the correct minimax rate.
\section{Main Results}
\label{sec:main-results}
We now provide a general Information-Theoretic result that allows us to lower-bound the minimax quantile in a variety of problems, with a few corresponding relaxations that can be useful in certain applications. The lower-bound argument introduces an auxiliary probability measure \(\pi\) on the parameter space \(\Theta\). Let \(W\sim\pi\), and, conditional on \(W=\theta\), let \(X^n\sim P_{X^n\mid\theta}\) for some \(P_{X^n\mid\theta}\in\mathscr P_{X^n\mid\theta}\). Since the worst-case failure probability is larger than its average under any such auxiliary distribution,
\begin{equation}
\inf_{\widehat\theta\in\mathcal E_n}
\sup_{\theta\in\Theta}
\sup_{P_{X^n\mid\theta}\in\mathscr P_{X^n\mid\theta}}
P_{X^n\mid\theta}
\{\ell(\widehat\theta(X^n),\theta)\ge\rho\}
\ge
\inf_{\widehat\theta\in\mathcal E_n}
P_{WX^n}
\{\ell(\widehat\theta(X^n),W)\ge\rho\}.\label{eq:BayesianReduction}
\end{equation}
Consequently, lower-bounding the right-hand side of~\Cref{eq:BayesianReduction} leads to a lower-bound on the quantile through the argument described above. In order to do so, let us introduce the Neyman-Pearson function~\cite{NeymanPearson1933,PolyanskiyPoorVerdu2010}
\begin{equation}
    \beta_\alpha(P,Q) := \inf\{ \mathbb{E}_Q[\varphi]: 0\leq \varphi \leq 1, \mathbb E_P[\varphi]\geq \alpha \}.
\end{equation}
The Neyman--Pearson (NP) function was connected to Bayesian risk by
Xu et al.~\cite{XuRaginsky2017}. Let us also define its generalised inverse.
\begin{align}
\beta_{\circ}(P,Q)^{-1}(u)
&:=
\sup\{\alpha\in[0,1]:\beta_\alpha(P,Q)\leq u\}
\\
&=
\sup\{\mathbb E_P[\varphi]:0\leq\varphi\leq1,
\ \mathbb E_Q[\varphi]\leq u\}.
\end{align}
The definition of $\beta_\alpha$ above is identical to~\cite[Eq.~(100)]{PolyanskiyPoorVerdu2010}. For the event $A_\rho=\{\ell(\widehat\theta(X),W)<\rho\}$, Appendix~A of~\cite{XuRaginsky2017} proves
\[
\beta_{P_{WX}(A_\rho)}(P_{WX},P_WP_X)\leq L_W(\rho).
\]
The theorem below proves the corresponding inequality for every $Q_X$, takes the infimum over $Q_X$, and then applies~\eqref{eq:BayesianReduction}. Xu et al. also prove a conditional version involving an auxiliary random variable, which is not considered here.
We can now state the first result, connecting the inverse NP-function to the minimax quantile.
\begin{theorem}\label{thm:npLowerBound}
    For every estimator $\widehat \theta$, auxiliary prior $P_W$, channel $P_{X|W}$ one has the following lower-bound 
    \begin{equation}
          P_{WX}(\ell(\widehat \theta(X),W)\geq \rho) \geq 1-\inf_{Q_X}\beta_{\circ}(P_{WX},P_WQ_X)^{-1}(L_W(\rho)).
    \end{equation}
    Consequently, let \(\delta \in (0,1)\), if \[\inf_{Q_X}\beta_{\circ}(P_{WX},P_WQ_X)^{-1}(L_W(\rho)) <1-\delta \] then
    \begin{equation}
        \mathcal M_{-}(\delta) \geq \rho \text{ and } \mathcal M(\delta) \geq \rho.
    \end{equation}
\end{theorem}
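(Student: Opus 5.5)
The plan is to fix an estimator $\widehat\theta$, a prior $P_W$, a channel $P_{X|W}$ and a threshold $\rho$. We then treat the indicator of the success event $A_\rho=\{\ell(\widehat\theta(X),W)<\rho\}$ as a test in the variational description of $\beta_\circ^{-1}$. Set $\varphi(w,x):=\1\{\ell(\widehat\theta(x),w)<\rho\}$. This is measurable because $\widehat\theta$ is measurable and $\ell=g\circ d$ is assumed measurable, and it takes values in $[0,1]$.

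The first step bounds the reference mass. Fix an arbitrary distribution $Q_X$ on $\mathcal X$. By Tonelli and the product structure of $P_WQ_X$,
\[
\mathbb E_{P_WQ_X}[\varphi]=\int P_W\bigl(\ell(\widehat\theta(x),W)<\rho\bigr)\,Q_X(\mathrm dx)\le \sup_{\vartheta\in\Theta}P_W\bigl(\ell(\vartheta,W)<\rho\bigr)=L_W(\rho).
\]
The inequality holds because, for each fixed $x$, the point $\vartheta=\widehat\theta(x)$ is a particular element of $\Theta$. This is precisely where independence under the reference measure is used.

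The second step uses the supremum form of the generalised inverse. Since $\varphi$ is feasible, that is $0\le\varphi\le1$ and $\mathbb E_{P_WQ_X}[\varphi]\le L_W(\rho)$, we get
\[
P_{WX}(A_\rho)=\mathbb E_{P_{WX}}[\varphi]\le \beta_\circ(P_{WX},P_WQ_X)^{-1}(L_W(\rho)).
\]
This holds for every $Q_X$, so we may take the infimum over $Q_X$. Passing to the complement gives the first displayed inequality, $P_{WX}(\ell\ge\rho)=1-P_{WX}(A_\rho)$.

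For the consequence, suppose the infimum is strictly below $1-\delta$. Then every estimator satisfies $P_{WX}(\ell(\widehat\theta(X),W)\ge\rho)>\delta$. Two points need care here.

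\emph{Uniformity over estimators.} The bound does not depend on $\widehat\theta$, so the infimum over estimators of the Bayes failure probability still exceeds $\delta$ strictly. For this I would read $\inf_{Q_X}(\cdot)<1-\delta$ as supplying a fixed gap $\varepsilon>0$, uniform in $\widehat\theta$.

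\emph{Strict versus non-strict loss events.} The minimax definitions use $\{\ell>r\}$, while we have controlled $\{\ell\ge\rho\}$. By \eqref{eq:BayesianReduction}, the worst-case probability of $\{\ell\ge\rho\}$ exceeds $\delta$ for every estimator. For any $r<\rho$ we have $\{\ell>r\}\supseteq\{\ell\ge\rho\}$, so $r$ is not admissible in the definition of $\mathcal M_{n,-}(\delta)$. Hence $\mathcal M_{n,-}(\delta)\ge\rho$.

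Finally, $\mathcal M_n(\delta)\ge\mathcal M_{n,-}(\delta)$ by \cite[Theorem 4]{MaVerchandSamworth2024}, as recalled after \Cref{def:lower-minimax-quantile}, which gives the second claim. I expect the only delicate point to be this strict/non-strict inequality bookkeeping and the uniformity of the gap over estimators; the rest is routine.
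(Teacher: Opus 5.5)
Your proposal is correct and follows the paper's proof essentially step for step. You bound $\mathbb E_{P_WQ_X}$ of the success indicator by $L_W(\rho)$ using independence under the reference measure, and deduce $P_{WX}(A_\rho)\le\beta_{\circ}(P_{WX},P_WQ_X)^{-1}(L_W(\rho))$ from the equivalent test form of the generalised inverse, where the paper instead passes through $\beta_{P_{WX}(A_\rho)}\le L_W(\rho)$. You then take the infimum over $Q_X$ and transfer to $\mathcal M_-(\delta)$ and $\mathcal M(\delta)$ via the Bayesian reduction and $\mathcal M(\delta)\ge\mathcal M_-(\delta)$, and your explicit handling of the estimator-independent gap and of $\{\ell\ge\rho\}$ versus $\{\ell>r\}$ for $r<\rho$ fills in a step the paper leaves implicit.
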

One can then obtain, as a corollary, bounds involving arbitrary
$f$-divergences and generalised Amemiya norms.
\begin{corollary}\label{thm:fDivLowerBound}
Under the same assumptions as~\Cref{thm:npLowerBound} one has the following lower-bound
\begin{equation}
      P_{WX}(\ell(\widehat \theta(X),W)\geq \rho) \geq 1-\inf_{Q_X}d_{f}^{-1}(D_f(P_{WX}\|P_WQ_X)\|L_W(\rho)).
\end{equation}
  Consequently, let \(\delta \in (0,1)\), if \[\inf_{Q_X}d_{f}^{-1}(D_f(P_{WX}\|P_WQ_X)\|L_W(\rho)) <1-\delta \] then
    \begin{equation}
        \mathcal M_{-}(\delta) \geq \rho \text{ and } \mathcal M(\delta) \geq \rho.
    \end{equation}
\end{corollary}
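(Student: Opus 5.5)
The plan is to deduce the corollary from \Cref{thm:npLowerBound} by showing that, for every fixed $Q_X$,
\[
\beta_{\circ}(P_{WX},P_WQ_X)^{-1}(L_W(\rho))\;\le\; d_f^{-1}\bigl(D_f(P_{WX}\|P_WQ_X)\,\|\,L_W(\rho)\bigr).
\]
Taking the infimum over $Q_X$ on both sides and substituting into \Cref{thm:npLowerBound} then gives both the probability bound and the quantile conclusions. The latter need no further work, since they follow from the displayed inequality exactly as in the theorem.

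To prove the pointwise inequality, fix $Q_X$ and write $P=P_{WX}$, $Q=P_WQ_X$ and $q=L_W(\rho)$. Take any test $0\le\varphi\le1$ with $\mathbb E_Q[\varphi]\le q$, and set $p=\mathbb E_P[\varphi]$ and $q'=\mathbb E_Q[\varphi]\le q$. Apply data processing for $f$-divergences to the binary channel that outputs $1$ with probability $\varphi$. This is a Markov kernel, so randomised tests are allowed, and it gives
\[
d_f(p\|q')\le D_f(P\|Q).
\]
It then remains to replace $q'$ by $q$.

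\textbf{Case $p\le q$.} The claim is immediate, because $d_f^{-1}(a\|q)\ge q$ by definition whenever the defining set is nonempty, and $d_f(q\|q)=0\le a$ ensures it is.

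\textbf{Case $p>q\ge q'$.} I will use the standard monotonicity of the binary divergence: for fixed $p$, the map $s\mapsto d_f(p\|s)$ is nonincreasing for $s\le p$. This follows from joint convexity of $d_f$ and $d_f(p\|p)=0$, by the convex-combination argument $d_f(p\|s)\le \lambda d_f(p\|q')+(1-\lambda)d_f(p\|p)$ for $s$ between $q'$ and $p$. Hence
\[
d_f(p\|q)\le d_f(p\|q')\le D_f(P\|Q).
\]
Since $p\in[q,1]$, the definition of $d_f^{-1}$ then gives $p\le d_f^{-1}(D_f(P\|Q)\|q)$.

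Taking the supremum over admissible $\varphi$ and using the sup-representation of $\beta_\circ^{-1}$ completes the pointwise inequality.

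The main obstacle is the reduction from $q'$ to $q$, because the binary divergence is evaluated at $\mathbb E_Q[\varphi]$ rather than at the small-ball value itself. The monotonicity in the second argument resolves this, and I would cite \cite[Proposition~1]{MajumdarMeiPacelli2023} for it, or use the convexity argument above. The edge cases $q\in\{0,1\}$ and infinite $D_f$ make the bound trivial or reduce to the conventions of the lower-semicontinuous extension, and I would check them briefly.
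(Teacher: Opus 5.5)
Your proof is correct and follows the paper's argument. Like the paper, you apply data processing for \(f\)-divergences to a test, then use monotonicity of \(q\mapsto d_f(p\|q)\) on \([0,p]\) to replace \(\mathbb E_{P_WQ_X}[\varphi]\) by \(L_W(\rho)\). Working directly with the supremum representation of \(\beta_\circ(P_{WX},P_WQ_X)^{-1}\) lets you skip the paper's \(\epsilon\)-approximate tests and its two lower-semicontinuity limits (\(\epsilon\to0\) and \(\alpha\uparrow b\)), and your joint-convexity argument proves the monotonicity that the paper only asserts.
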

\begin{remark}
    One can translate~\Cref{thm:fDivLowerBound} into a statement involving
    $f$-informativity.  Suppose that $1-\delta\geq L_W(\rho)$ and
    \[ I_f(W,X) < d_f(1-\delta\|L_W(\rho)),\] by definition of infimum there exists a $\tilde Q_X$ such that
    \[
    D_f(P_{WX}\|P_W \tilde{Q}_X) < d_f(1-\delta\|L_W(\rho)).
    \]
    By monotonicity of the binary divergence on the branch
    $[L_W(\rho),1]$, this implies that
    \begin{equation}
        \inf_{Q_X}d_{f}^{-1}(D_f(P_{WX}\|P_WQ_X)\|L_W(\rho)) <1-\delta.
    \end{equation}
    By~\Cref{thm:fDivLowerBound} this implies that
    \[
       \mathcal M_{-}(\delta) \geq \rho \text{ and } \mathcal M(\delta) \geq \rho.
    \]
    The same reasoning applies to \(f\)-mutual information by taking
    \(Q_X=P_X\), where \(P_X\) is the marginal induced by \(P_{WX}\).
\end{remark}
To obtain a prior-free minimax lower bound, take the prior to be uniform
on a finite separated set. Its cardinality controls the small-ball
probability and gives the following \(f\)-Fano inequality.

\begin{corollary}
\label{cor:fFano}
Under the same assumptions as~\Cref{thm:fDivLowerBound}, let
\[
\{\theta_1,\ldots,\theta_M\}\subseteq\Theta
\]
be a finite packing such that, for every $\vartheta\in\Theta$,
\begin{equation}
    \left|
    \left\{
    j\in[M]:
    \ell(\vartheta,\theta_j)<\rho
    \right\}
    \right|
    \leq 1.
\end{equation}
Let $P_W$ be the uniform distribution over
$\{\theta_1,\ldots,\theta_M\}$. Then
\begin{equation}
    L_W(\rho)\leq \frac{1}{M}.
\end{equation}
Consequently, one has
\begin{equation}
    P_{WX}\bigl(\ell(\widehat\theta(X),W)\geq\rho\bigr)
    \geq
    1-
    \inf_{Q_X}
    d_f^{-1}
    \left(
    D_f(P_{WX}\|P_WQ_X)
    \middle\|
    \frac{1}{M}
    \right).
\end{equation}
In particular, let $\delta\in(0,1)$ with $\delta\leq1-1/M$ and suppose that
\begin{equation}
   I_f(W,X)
    <
    d_f\left(
    1-\delta
    \middle\|
    \frac{1}{M}
    \right).
\end{equation}
Then
\begin{equation}
    \mathcal M_{-}(\delta)\geq\rho
    \qquad\text{and}\qquad
    \mathcal M(\delta)\geq\rho.
\end{equation}
\end{corollary}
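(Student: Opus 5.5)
The plan is to reduce everything to \Cref{thm:fDivLowerBound} and the remark that follows it. The only genuinely new ingredient is the bound $L_W(\rho)\leq 1/M$, together with the fact that the inverse binary divergence is monotone in its second argument.

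\emph{Step 1: the small-ball bound.} Fix $\vartheta\in\Theta$. Since $P_W$ is uniform on $\{\theta_1,\ldots,\theta_M\}$,
\[
P_W\bigl(\ell(\vartheta,W)<\rho\bigr)=\frac1M\bigl|\{j\in[M]:\ell(\vartheta,\theta_j)<\rho\}\bigr|\leq\frac1M
\]
by the packing hypothesis. Taking the supremum over $\vartheta$ gives $L_W(\rho)\leq 1/M$. If the packing points are not distinct, I read the uniform distribution over the index set $[M]$, and the same count applies.

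\emph{Step 2: the displayed failure bound.} \Cref{thm:fDivLowerBound} already gives the bound with $L_W(\rho)$ in place of $1/M$. I need $q\mapsto d_f^{-1}(a\|q)$ to be non-decreasing, so that replacing $L_W(\rho)$ by the larger value $1/M$ only weakens the inequality. I expect this to be the main obstacle, because the branch $[q,1]$ itself moves with $q$. I would try to avoid proving monotonicity in general and instead go back to the proof of \Cref{thm:fDivLowerBound}. In the inverse Neyman--Pearson step, the test $\varphi=\1_{A_\rho}$ satisfies $P_WQ_X(A_\rho)\leq L_W(\rho)\leq 1/M$. So $q':=P_WQ_X(A_\rho)\leq 1/M$ and $p:=P_{WX}(A_\rho)$ satisfy $d_f(p\|q')\leq D_f(P_{WX}\|P_WQ_X)$ by data processing. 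If $p\leq 1/M$ the bound is trivial, since $d_f^{-1}(\cdot\|1/M)\geq 1/M$. Otherwise $p\geq 1/M\geq q'$, and convexity of $q\mapsto d_f(p\|q)$ with minimum value $0$ at $q=p$ shows that $d_f(p\|\cdot)$ is non-increasing on $[0,p]$. Hence
\[
d_f(p\|1/M)\leq d_f(p\|q')\leq D_f(P_{WX}\|P_WQ_X),
\]
so $p\leq d_f^{-1}\bigl(D_f(P_{WX}\|P_WQ_X)\|1/M\bigr)$. Taking the infimum over $Q_X$ gives the display.

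\emph{Step 3: the informativity statement.} Suppose $I_f(W,X)<d_f(1-\delta\|1/M)$ with $1-\delta\geq 1/M$. Choose $\tilde Q_X$ with $D_f(P_{WX}\|P_W\tilde Q_X)<d_f(1-\delta\|1/M)$. The map $d_f(\cdot\|1/M)$ is non-decreasing on $[1/M,1]$, being convex with its minimum at $1/M$. Therefore every $p\in[1/M,1]$ with $d_f(p\|1/M)\leq D_f(P_{WX}\|P_W\tilde Q_X)$ satisfies $p<1-\delta$. One subtlety is that the supremum defining $d_f^{-1}$ could equal $1-\delta$. The strict inequality on $D_f$, together with continuity of $d_f(\cdot\|1/M)$ on the interior of the branch, rules this out; the case $1-\delta=1$ is excluded since $\delta>0$.

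This yields
\[
\inf_{Q_X}d_f^{-1}\bigl(D_f(P_{WX}\|P_WQ_X)\big\|1/M\bigr)<1-\delta,
\]
so the minimax failure probability at level $\rho$ exceeds $\delta$. Combining this with \eqref{eq:BayesianReduction} and the definition of the lower minimax quantile, exactly as in \Cref{thm:npLowerBound}, gives $\mathcal M_-(\delta)\geq\rho$ and $\mathcal M(\delta)\geq\rho$.
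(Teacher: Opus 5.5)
Your proof is correct and follows the paper's three-step outline: the small-ball count, then the $f$-divergence bound at level $1/M$, then the $f$-informativity remark. Your Step~2, however, takes a more direct route.

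The paper just applies \Cref{thm:fDivLowerBound} ``with $u=1/M$''. There the substitution of $1/M$ for $L_W(\rho)$ happens at the level of the Neyman--Pearson inverse, where it is free. Indeed, $P_WQ_X(A_\rho)\leq L_W(\rho)\leq 1/M$ already gives $P_{WX}(A_\rho)\leq\beta_{\circ}(P_{WX},P_WQ_X)^{-1}(1/M)$. The proof of \Cref{thm:fDivLowerBound} then establishes $\beta_{\circ}(P_{WX},P_WQ_X)^{-1}(u)\leq d_f^{-1}(D_f(P_{WX}\|P_WQ_X)\|u)$ for arbitrary $u$, via near-optimal tests $\varphi_\epsilon$, a limit $\epsilon\to0$, and lower semicontinuity. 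So the obstacle you anticipated does not arise in the paper's route.

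You instead bypass $\beta_\circ$ altogether. You apply data processing to the single test $\1_{A_\rho}$ and use that $d_f(p\|\cdot)$ is non-increasing on $[0,p]$, which is the same monotonicity the paper uses inside its proof. This is the classical $f$-Fano argument. It is shorter and free of limiting steps. What it gives up is exhibiting the corollary as a relaxation of the sharper bound $\inf_{Q_X}\beta_{\circ}(P_{WX},P_WQ_X)^{-1}(1/M)$, which is what places the corollary in the paper's hierarchy.

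In Step~3 you are more careful than the paper's remark, which infers the strict inequality $\inf_{Q_X}d_f^{-1}(\cdots)<1-\delta$ from monotonicity alone. Your use of continuity of the convex map $d_f(\cdot\|1/M)$ at the interior point $1-\delta$ is what actually yields strictness. The boundary case $1-\delta=1/M$ is vacuous, since then $d_f(1-\delta\|1/M)=0\leq I_f(W,X)$.
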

This result corresponds to a generalisation of Fano's method to arbitrary $f$-divergences. For a uniform prior on a finite packing, the resulting $f$-Fano bound is equivalent to earlier $f$-divergence bounds~\cite{Gushchin2003,Guntuboyina2011}. Applying~\cite[Theorem~2]{ChenGuntuboyinaZhang2016} to the loss $\mathbf 1\{\ell\geq\rho\}$ gives the same $f$-informativity condition after taking complements. The upper inverse $d_f^{-1}$ also appears in~\cite[Theorem~1]{MajumdarMeiPacelli2023}. The results in~\cite{ChenFosterHanQianRakhlinXu2024,BongoleOechteringSkoglund2026,BongoleZamaniOechteringSkoglund2026} include analogous statements for interactive problems or bounded transformations of the loss. Therefore,~\Cref{cor:fFano} is not a new $f$-divergence form of Fano's inequality; here it is obtained from~\Cref{thm:npLowerBound} and is stated as a sufficient condition for the minimax-quantile lower bound.
In particular, with a specific choice of $f$ one recovers Fano's inequality.
\begin{remark} \label{rmk:classicalFano}
Consider the Kullback--Leibler divergence~\cite{KullbackLeibler1951},
corresponding to $f(t)=t\log t$.
Let $W$ be uniformly distributed over the packing
$\{\theta_1,\ldots,\theta_M\}$ and let
$P_{X\mid W=\theta_j}=P_j$. Then
\begin{equation}
    \inf_{Q_X}
    D_{\mathrm{KL}}(P_{WX}\|P_WQ_X)
    =
    \frac1M
    \inf_{Q_X}
    \sum_{j=1}^M
    D_{\mathrm{KL}}(P_j\|Q_X)
    =
    I(W;X).
\end{equation}
For the Kullback--Leibler divergence, the minimiser is the marginal
$P_X=\frac1M\sum_{j=1}^M P_j$. Therefore, the infimum on the left is
$D_{\mathrm{KL}}(P_{WX}\|P_WP_X)$, which is precisely the mutual
information $I(W;X)$.

By~\Cref{cor:fFano}, the following condition
\begin{equation}
    I(W;X)
    <
    d_{\mathrm{KL}}
    \left(
    1-\delta
    \middle\|
    \frac1M
    \right)
\end{equation}
implies
\begin{equation}
    \mathcal M_{-}(\delta)\geq\rho
    \qquad\text{and}\qquad
    \mathcal M(\delta)\geq\rho.
\end{equation}
Since
\begin{align}
    d_{\mathrm{KL}}
    \left(
    1-\delta
    \middle\|
    \frac1M
    \right)
    &=
    (1-\delta)\log M
    -
    h_2(\delta)
    -
    \delta\log\left(1-\frac1M\right),
\end{align}
and
\begin{equation}
    \sup_{\delta\in[0,1]}
    \left\{
    h_2(\delta)
    +
    \delta\log\left(1-\frac1M\right)
    \right\}
    =
    \log\left(2-\frac1M\right),
\end{equation}
one has
\begin{equation}
    d_{\mathrm{KL}}
    \left(
    1-\delta
    \middle\|
    \frac1M
    \right)
    \geq
    (1-\delta)\log M
    -
    \log\left(2-\frac1M\right).
\end{equation}
Therefore, the sufficient condition
\begin{equation}
    \frac{
    I(W;X)+\log\left(2-\frac1M\right)
    }{
    \log M
    }
    <
    1-\delta
\end{equation}
implies
\begin{equation}
    \mathcal M_{-}(\delta)\geq\rho
    \qquad\text{and}\qquad
    \mathcal M(\delta)\geq\rho.
\end{equation}

Equivalently, using
\[
    I(W;X)
    =
    \frac1M
    \inf_{Q_X}
    \sum_{j=1}^M
    D_{\mathrm{KL}}(P_j\|Q_X),
\]
if, for some $\varepsilon\in(0,1)$,
\begin{equation}
    \frac{
    \frac1M
    \inf_{Q_X}
    \sum_{j=1}^M
    D_{\mathrm{KL}}(P_j\|Q_X)
    +
    \log\left(2-\frac1M\right)
    }{
    \log M
    }
    \leq
    1-\varepsilon,
\end{equation}
then, for every $\delta\in(0,\varepsilon)$,
\begin{equation}
    \frac{
    I(W;X)+\log\left(2-\frac1M\right)
    }{
    \log M
    }
    <
    1-\delta,
\end{equation}
and consequently
\begin{equation}
    \mathcal M_{-}(\delta)\geq\rho
    \qquad\text{and}\qquad
    \mathcal M(\delta)\geq\rho.
\end{equation}
This recovers the Fano-type minimax-quantile lower bound~\cite[Lemma 7]{MaVerchandSamworth2024}, with $\rho$ corresponding to the loss threshold
induced by their packing construction.

The preceding exact condition
\[
    I(W;X)
    <
    d_{\mathrm{KL}}
    \left(
    1-\delta
    \middle\|
    \frac1M
    \right)
\]
is slightly sharper, since~\cite[Lemma 7]{MaVerchandSamworth2024} follows after
the relaxation
\[
    d_{\mathrm{KL}}
    \left(
    1-\delta
    \middle\|
    \frac1M
    \right)
    \geq
    (1-\delta)\log M
    -
    \log\left(2-\frac1M\right).
\]
\end{remark}
The approach highlighted above is extremely general. Different choices of \(f\) yield different types of well-known bounds.~\Cref{rmk:classicalFano} recovers the famous Fano's inequality with $f(t)=t\log t$, while the following Corollary shows how to recover Le Cam's two-point method~\cite{Fano1961,LeCam1973,Yu1997}.
The choice $f(t)=\frac12|t-1|$ yields the total variation distance
\[
\operatorname{TV}(P,Q)
:=\frac12\int\left|\frac{\mathrm dP}{\mathrm d\mu}-\frac{\mathrm dQ}{\mathrm d\mu}\right|\mathrm d\mu,
\]
where $\mu$ is any common dominating measure~\cite{AliSilvey1966,Csiszar1967}.
\begin{corollary}
\label{cor:leCam}
Under the same assumptions as~\Cref{thm:fDivLowerBound}, let
$\theta_1,\theta_2\in\Theta$ and let
$P_i\in\mathscr P_{X\mid\theta_i}$, $i\in\{1,2\}$.
Suppose that, for every $\vartheta\in\Theta$,
\begin{equation}
    \left|
    \left\{
    i\in\{1,2\}:
    \ell(\vartheta,\theta_i)<\rho
    \right\}
    \right|
    \leq 1.
\end{equation}
Then
\begin{equation}
    \inf_{\widehat\theta}
    \sup_{\theta\in\Theta}
    \sup_{P_{X\mid\theta}\in\mathscr P_{X\mid\theta}}
    P_{X\mid\theta}
    \left\{
    \ell(\widehat\theta(X),\theta)\geq\rho
    \right\}
    \geq
    \frac{1-\operatorname{TV}(P_1,P_2)}{2}.
\end{equation}
Consequently, if $\delta\in(0,1/2)$ and
\begin{equation}
    \operatorname{TV}(P_1,P_2)<1-2\delta,
\end{equation}
then
\begin{equation}
    \mathcal M_{-}(\delta)\geq\rho
    \qquad\text{and}\qquad
    \mathcal M(\delta)\geq\rho.
\end{equation}
\end{corollary}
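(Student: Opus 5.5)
We obtain \Cref{cor:leCam} by specialising \Cref{thm:fDivLowerBound} to a two-point prior and the total-variation generator \(f(t)=\tfrac12|t-1|\). Take \(P_W\) uniform on \(\{\theta_1,\theta_2\}\) and let the channel be \(P_{X\mid W=\theta_i}=P_i\). The separation hypothesis says that every ball \(\{\theta:\ell(\vartheta,\theta)<\rho\}\) contains at most one of the two points. Exactly as in \Cref{cor:fFano} with \(M=2\), this gives \(L_W(\rho)\le 1/2\). By \eqref{eq:BayesianReduction}, the left-hand side of the claim is at least \(\inf_{\widehat\theta}P_{WX}\{\ell(\widehat\theta(X),W)\ge\rho\}\). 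It therefore suffices to show that, for every estimator, the success probability \(P_{WX}(A_\rho)\) is at most \((1+\operatorname{TV}(P_1,P_2))/2\).

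\textbf{Choice of reference measure.} Choose \(Q_X=P_X=\tfrac12(P_1+P_2)\). Then
\[
D_f(P_{WX}\|P_WQ_X)=\tfrac12\sum_{i}\operatorname{TV}(P_i,\bar P)=\tfrac12\operatorname{TV}(P_1,P_2),
\]
because \(P_i-\bar P=\pm\tfrac12(P_1-P_2)\). Next invert the binary total variation, \(d_f(p\|q)=|p-q|\), over the branch \(p\ge q\). This gives \(d_f^{-1}(a\|q)=\min\{q+a,1\}\). Since \(d_f^{-1}(a\|q)\) is nondecreasing in \(q\), we may replace \(L_W(\rho)\) by \(1/2\). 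The result is
\[
P_{WX}(A_\rho)\le \tfrac12+\tfrac12\operatorname{TV}(P_1,P_2),
\]
which is exactly the bound \(\tfrac{1-\operatorname{TV}}{2}\) on the failure probability.

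\textbf{Main obstacle: monotonicity in \(q\).} The delicate step is this monotonicity in \(q\), since the corollary is stated in terms of \(L_W(\rho)\) itself. For total variation it is immediate from the explicit formula \(\min\{q+a,1\}\). As a fallback, one can bypass \(d_f^{-1}\) entirely and argue directly on the success event. Write \(A_\rho=\{(w,x):x\in B_w\}\) with \(B_i=\{x:\ell(\widehat\theta(x),\theta_i)<\rho\}\); separation makes \(B_1\) and \(B_2\) disjoint. Then
\[
P_{WX}(A_\rho)=\tfrac12\bigl(P_1(B_1)+P_2(B_2)\bigr)\le\tfrac12\bigl(1+P_1(B_1)-P_2(B_1)\bigr)\le\tfrac12\bigl(1+\operatorname{TV}(P_1,P_2)\bigr),
\]
using \(P_2(B_2)\le 1-P_2(B_1)\). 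The two routes give the same bound, which confirms the specialisation.

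\textbf{The quantile statement.} If \(\operatorname{TV}(P_1,P_2)<1-2\delta\), the minimax failure probability is at least \((1-\operatorname{TV})/2>\delta\). The closing remark of \Cref{def:lower-minimax-quantile} then yields \(\mathcal M_{-}(\delta)\ge\rho\), and \(\mathcal M(\delta)\ge\mathcal M_{-}(\delta)\) gives the second inequality. Equivalently, the infimum in \Cref{thm:fDivLowerBound} is below \(1-\delta\), so its conclusion applies directly.
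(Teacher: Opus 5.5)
Your proposal is correct and follows the paper's route: a uniform two-point prior giving $L_W(\rho)\le 1/2$, the total-variation generator with $d_f^{-1}(v\|q)=(q+v)\wedge 1$, and evaluation of $D_f(P_{WX}\|P_WQ_X)=\tfrac12\operatorname{TV}(P_1,P_2)$ at a convenient reference measure. The differences are minor. You take $Q_X=\tfrac12(P_1+P_2)$ where the paper takes $Q_X=P_1$, and the paper additionally proves that $\tfrac12\operatorname{TV}(P_1,P_2)$ is the exact infimum, which is not needed. You also make explicit the monotonicity of $d_f^{-1}(v\|\cdot)$ that the paper uses silently when it substitutes $u=1/2$ for $L_W(\rho)$, and your disjoint-sets argument is a valid direct check of the bound.
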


\begin{remark}
When the loss is of the form
\[
    \ell(\vartheta,\theta)=g\bigl(d(\vartheta,\theta)\bigr),
\]
with $g$ non-decreasing, let
\[
    \eta:=\frac12d(\theta_1,\theta_2).
\]
The triangle inequality implies that the two success sets at threshold
$g(\eta)$ are disjoint. Thus, taking
\(
    \rho=g(\eta)
\)
in~\Cref{cor:leCam}, the condition
\[
    \operatorname{TV}(P_1,P_2)<1-2\delta
\]
yields
\begin{equation}
    \mathcal M_{-}(\delta)\geq g(\eta).
\end{equation}
This recovers~\cite[Lemma 5]{MaVerchandSamworth2024}.
\end{remark}
Stepping away from $f$-divergences, we can recover a lower-bound on the risk involving R\'enyi divergences and Sibson mutual information~\cite{EspositoVandenbroucqueGastpar2024,EspositoGastparIssa2025}.
Said result also follows from the Neyman--Pearson converse in~\Cref{thm:npLowerBound} and the data-processing
inequality for the binary R\'enyi divergence.

\begin{corollary}
\label{cor:renyiSibson}
Under the same assumptions as~\Cref{thm:npLowerBound}, let $\alpha>1$. Then
\begin{equation}
P_{WX}\bigl(\ell(\widehat\theta(X),W)\geq\rho\bigr)
\geq
1-
\inf_{Q_X}
d_\alpha^{-1}
\left(
D_\alpha(P_{WX}\|P_WQ_X)
\middle\|
L_W(\rho)
\right),
\end{equation}
where $d_\alpha^{-1}(\cdot\|u)$ denotes the generalised inverse of
$p\mapsto d_\alpha(p\|u)$ over $p\in[u,1]$.

Consequently, if $\delta\in(0,1)$, $1-\delta\geq L_W(\rho)$, and
\begin{equation}
I_\alpha(W,X)
<
d_\alpha
\left(
1-\delta
\middle\|
L_W(\rho)
\right),
\end{equation}
then
\begin{equation}
\mathcal M_{-}(\delta)\geq\rho
\qquad\text{and}\qquad
\mathcal M(\delta)\geq\rho.
\end{equation}
\end{corollary}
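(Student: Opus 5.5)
The argument runs through the success event and the binary data-processing inequality for R\'enyi divergence, exactly parallel to the proof of \Cref{thm:fDivLowerBound}. Fix an estimator \(\widehat\theta\), a threshold \(\rho\), and an arbitrary \(Q_X\). Let \(A_\rho=\{\ell(\widehat\theta(X),W)<\rho\}\), and set \(p:=P_{WX}(A_\rho)\) and \(q:=P_WQ_X(A_\rho)\).

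First I will show \(q\le L_W(\rho)\), as in \Cref{thm:npLowerBound}. Under \(P_WQ_X\) the variables \(W\) and \(X\) are independent, so by Fubini
\[
q=\int P_W\bigl(\ell(\widehat\theta(x),W)<\rho\bigr)\,Q_X(\mathrm dx)\le L_W(\rho).
\]

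Next I apply data processing to the deterministic map \((w,x)\mapsto\mathbf 1_{A_\rho}(w,x)\). This gives
\[
d_\alpha(p\|q)\le D_\alpha(P_{WX}\|P_WQ_X),
\]
valid for \(\alpha>1\); if \(D_\alpha\) is infinite there is nothing to prove.

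The main obstacle is converting this into a bound on \(p\) with \(q\) replaced by the larger \(L_W(\rho)\). The claim needed is that for \(p\ge u\ge q\) one has \(d_\alpha(p\|u)\le d_\alpha(p\|q)\). This holds because \(u\mapsto d_\alpha(p\|u)\) is nonincreasing on \([0,p]\): binary R\'enyi divergence is convex in its second argument for \(\alpha>1\), since \(q^{1-\alpha}\) and \((1-q)^{1-\alpha}\) are convex and \(\log\) of a sum preserves the relevant monotone structure, and it vanishes at \(u=p\). Alternatively, one can compute the derivative of \(p^\alpha u^{1-\alpha}+(1-p)^\alpha(1-u)^{1-\alpha}\) in \(u\), which is \((1-\alpha)\bigl[(p/u)^\alpha-((1-p)/(1-u))^\alpha\bigr]\le 0\) for \(u\le p\).

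The two cases then go as follows. If \(p\le L_W(\rho)\), then trivially \(p\le d_\alpha^{-1}(\cdot\|L_W(\rho))\), since the inverse is at least \(L_W(\rho)\) by definition. Otherwise \(p\in[L_W(\rho),1]\) and \(d_\alpha(p\|L_W(\rho))\le D_\alpha(P_{WX}\|P_WQ_X)\), so by the definition of the generalised inverse as a supremum, \(p\le d_\alpha^{-1}\bigl(D_\alpha(P_{WX}\|P_WQ_X)\|L_W(\rho)\bigr)\). Taking complements and the infimum over \(Q_X\) yields the first display. Equivalently, one could note that the Neyman--Pearson inverse of \Cref{thm:npLowerBound} is dominated by this quantity, since every test reduces to a binary channel.

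For the consequence, assume \(I_\alpha(W,X)<d_\alpha(1-\delta\|L_W(\rho))\). Pick \(\tilde Q_X\) with \(D_\alpha(P_{WX}\|P_W\tilde Q_X)<d_\alpha(1-\delta\|L_W(\rho))\). The map \(p\mapsto d_\alpha(p\|u)\) is nondecreasing on \([u,1]\), by the symmetric derivative computation or by convexity together with its minimum at \(p=u\). Hence every \(p\) in the sublevel set satisfies \(p<1-\delta\), and the supremum is also below \(1-\delta\); if needed, use continuity of \(d_\alpha\) in \(p\) so the strict inequality survives the supremum. Then every estimator has Bayes failure probability greater than \(\delta\), and \eqref{eq:BayesianReduction} together with the remark after \Cref{def:lower-minimax-quantile} gives \(\mathcal M_-(\delta)\ge\rho\) and \(\mathcal M(\delta)\ge\rho\).
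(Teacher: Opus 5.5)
Your proof is correct and is essentially the paper's argument. The paper first bounds the Neyman--Pearson inverse, $\beta_\circ(P_{WX},P_WQ_X)^{-1}(L_W(\rho))\le d_\alpha^{-1}(D_\alpha(P_{WX}\|P_WQ_X)\|L_W(\rho))$, by binary R\'enyi data processing and then invokes \Cref{thm:npLowerBound}. You instead apply data processing directly to the test $\mathbf 1_{A_\rho}$, using $P_WQ_X(A_\rho)\le L_W(\rho)$ and the monotonicity of $d_\alpha$ in each argument; this avoids near-optimal tests, but it does not record the intermediate comparison with the Neyman--Pearson converse. In the consequence, you invoke continuity of $p\mapsto d_\alpha(p\|L_W(\rho))$ so that the inequality stays strict after taking the supremum. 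The paper's argument uses only monotonicity, which by itself does not give that strictness.
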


\begin{remark}
\label{rmk:holderSibsonBound}
A simpler, but weaker, form follows from
\begin{align}
d_\alpha(p|u)&=\frac{1}{\alpha-1}\log\left[p^\alpha u^{1-\alpha}+(1-p)^\alpha(1-u)^{1-\alpha}\right]\geq\frac{1}{\alpha-1}\log p^\alpha u^{1-\alpha}.
\end{align}

Consequently,
\begin{equation}
P_{WX}\bigl(\ell(\widehat\theta(X),W)<\rho\bigr)\leq L_W(\rho)^{\frac{\alpha-1}{\alpha}}\exp\left(\frac{\alpha-1}{\alpha}I_\alpha(W,X)\right).
\end{equation}
This displayed inequality is~\cite[Theorem~8]{EspositoVandenbroucqueGastpar2024}. The preceding bound instead uses the full binary R\'enyi divergence $d_\alpha(p\|q)$.

Thus, if
\begin{equation}
L_W(\rho)^{\frac{\alpha-1}{\alpha}}\exp\left(\frac{\alpha-1}{\alpha}I_\alpha(W,X)\right)<1-\delta,
\end{equation}
then
\[\mathcal M_{-}(\delta)\geq\rho\qquad\text{and}\qquad\mathcal M(\delta)\geq\rho.\]

The binary-R\'enyi formulation above is sharper, since it retains the full binary divergence rather than discarding the complementary term.
\end{remark}

The following Amemiya bound also follows directly from the Neyman--Pearson
converse in~\Cref{thm:npLowerBound}, using the generalised H\"older
inequality to control the estimator-induced success event.

\begin{corollary}
\label{cor:amemiyaHolder}
Under the same assumptions as~\Cref{thm:npLowerBound}, let
$\Phi$ and $\Psi$ be complementary Young functions. Then
\begin{equation}
P_{WX}\bigl(\ell(\widehat\theta(X),W)\geq\rho\bigr)
\geq
1-
\inf_{Q_X}
\frac{
\left\|
\frac{\mathrm dP_{WX}}{\mathrm d(P_WQ_X)}
\right\|_{L_\Phi^{\mathrm A}(P_WQ_X)}
}{
\Psi^{-1}\left(1/L_W(\rho)\right)
}.
\end{equation}
Consequently, if $\delta\in(0,1)$ and
\begin{equation}
\inf_{Q_X}
\frac{
\left\|
\frac{\mathrm dP_{WX}}{\mathrm d(P_WQ_X)}
\right\|_{L_\Phi^{\mathrm A}(P_WQ_X)}
}{
\Psi^{-1}\left(1/L_W(\rho)\right)
}
<
1-\delta,
\end{equation}
then
\begin{equation}
\mathcal M_{-}(\delta)\geq\rho
\qquad\text{and}\qquad
\mathcal M(\delta)\geq\rho.
\end{equation}
\end{corollary}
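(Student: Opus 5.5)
We fix an estimator $\widehat\theta$, an auxiliary prior $P_W$ and a reference distribution $Q_X$ with $P_{WX}\ll P_WQ_X$; if this fails, the Amemiya norm of the density is $+\infty$ and the bound is trivial. We write $Z:=\frac{\mathrm dP_{WX}}{\mathrm d(P_WQ_X)}$ and $A_\rho=\{\ell(\widehat\theta(X),W)<\rho\}$. The strategy is to bound the success probability $P_{WX}(A_\rho)$ directly. This is the same quantity controlled by the inverse Neyman--Pearson function in \Cref{thm:npLowerBound}: for any test $0\le\varphi\le1$ with $\mathbb E_{P_WQ_X}[\varphi]\le u$, we bound $\mathbb E_{P_{WX}}[\varphi]$. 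Taking $\varphi=\1_{A_\rho}$ and $u=L_W(\rho)$ suffices. The independence step in the proof of \Cref{thm:npLowerBound} gives $P_WQ_X(A_\rho)\le L_W(\rho)$: conditioning on $X=x$, the estimate $\widehat\theta(x)$ is a fixed point $\vartheta$, so $P_W(\ell(\vartheta,W)<\rho)\le L_W(\rho)$; integrating over $Q_X$ gives the claim.

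The key step is the change of measure followed by the generalised H\"older--Rogers inequality:
\[
P_{WX}(A_\rho)=\mathbb E_{P_WQ_X}\bigl[Z\,\1_{A_\rho}\bigr]\le \|Z\|_{L_\Phi^{\mathrm A}(P_WQ_X)}\,\|\1_{A_\rho}\|_{L_\Psi^{\mathrm L}(P_WQ_X)}.
\]
We apply this with the roles of the complementary pair arranged so that $\Phi$ plays the part of $\Psi^\star$, which is legitimate because $\Phi$ and $\Psi$ are complementary. We then compute the Luxemburg norm of an indicator. For an event of probability $q$, $\mathbb E\,\Psi(\1_A/c)=q\Psi(1/c)$, so
\[
\|\1_A\|_{L_\Psi^{\mathrm L}}=\inf\{c>0: \Psi(1/c)\le 1/q\}=\frac{1}{\Psi^{-1}(1/q)},
\]
where $\Psi^{-1}$ is the right-continuous generalised inverse. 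Since $\Psi^{-1}$ is non-decreasing and $q=P_WQ_X(A_\rho)\le L_W(\rho)$, we get $\|\1_{A_\rho}\|\le 1/\Psi^{-1}(1/L_W(\rho))$.

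Combining these gives $P_{WX}(A_\rho)\le \|Z\|_{L_\Phi^{\mathrm A}}/\Psi^{-1}(1/L_W(\rho))$ for every $Q_X$. Taking the infimum over $Q_X$ and complementing yields the displayed bound. The quantile consequence follows exactly as in \Cref{thm:npLowerBound}. The bound holds for every $\widehat\theta$, so we take the infimum over estimators and apply the Bayesian reduction \eqref{eq:BayesianReduction}. This shows that the minimax failure probability at threshold $\rho$ exceeds $\delta$, and the definition of the lower minimax quantile then gives $\mathcal M_-(\delta)\ge\rho$ and $\mathcal M(\delta)\ge\rho$.

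The main obstacle is the indicator-norm computation and its monotonicity when $\Psi$ takes the value $+\infty$ or has flat pieces, since then $\Psi^{-1}$ must be the generalised inverse $\Psi^{-1}(s)=\sup\{t:\Psi(t)\le s\}$. I would check that the infimum defining the Luxemburg norm is attained, using left-continuity of a convex Young function on the region where it is finite. I would also treat the edge cases $q=0$ (the norm is $0$ and the bound is trivial) and $L_W(\rho)=0$ separately.
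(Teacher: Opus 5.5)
Your proposal is correct and follows essentially the same route as the paper's proof. Both change measure to $P_WQ_X$, apply the generalised H\"older inequality pairing the Amemiya norm of the likelihood ratio with the Luxemburg norm of $\1_{A_\rho}$, use $\|\1_{A_\rho}\|_{L_\Psi^{\mathrm L}}=1/\Psi^{-1}\bigl(1/(P_WQ_X)(A_\rho)\bigr)$ together with monotonicity of $\Psi^{-1}$ and $(P_WQ_X)(A_\rho)\le L_W(\rho)$, and then take the infimum over $Q_X$. Your extra care with the edge cases is fine, but you do not need attainment of the Luxemburg infimum, because $\inf\{c>0:\Psi(1/c)\le 1/q\}=1/\sup\{s\ge0:\Psi(s)\le 1/q\}$ holds directly.
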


\begin{remark}
\label{rmk:holderLowerBound}
Under the same assumptions as~\Cref{thm:npLowerBound}, let $\alpha>1$ and let $\alpha'=\alpha/(\alpha-1)$ denote its H\"older conjugate. Then
\begin{equation}
P_{WX}\bigl(\ell(\widehat\theta(X),W)\geq\rho\bigr)
\geq
1-
L_W(\rho)^{\frac{\alpha-1}{\alpha}}
\inf_{Q_X}
\left\|
\frac{\mathrm dP_{WX}}{\mathrm d(P_WQ_X)}
\right\|_{L^\alpha(P_WQ_X)}.
\end{equation}
Equivalently,
\begin{equation}
P_{WX}\bigl(\ell(\widehat\theta(X),W)\geq\rho\bigr)
\geq
1-
L_W(\rho)^{\frac{\alpha-1}{\alpha}}
\exp\left\{
\frac{\alpha-1}{\alpha}
I_\alpha(W,X)
\right\}.
\end{equation}
Consequently, if $\delta\in(0,1)$ and
\begin{equation}
L_W(\rho)^{\frac{\alpha-1}{\alpha}}
\exp\left\{
\frac{\alpha-1}{\alpha}
I_\alpha(W,X)
\right\}
<
1-\delta,
\end{equation}
then
\begin{equation}
\mathcal M_{-}(\delta)\geq\rho
\qquad\text{and}\qquad
\mathcal M(\delta)\geq\rho.
\end{equation}
The second bound above recovers the H\"older relaxation in
\Cref{rmk:holderSibsonBound}. Thus, the same relaxation can be obtained in
two equivalent ways: either by lower-bounding
the binary R\'enyi divergence through
\[
d_\alpha(p\|u)
\geq
\frac{1}{\alpha-1}\log(p^\alpha u^{1-\alpha}),
\]
or directly from H\"older's inequality applied to the Radon--Nikodym
derivative. The binary-R\'enyi bound in~\Cref{cor:renyiSibson} is sharper
because it retains both terms of the binary divergence, although it is less
amenable to explicit computation.
\end{remark}
Letting $\alpha\to\infty$ in~\Cref{cor:renyiSibson} gives the following
result, which may be of independent interest.
\begin{corollary}
    \label{cor:maximalLeakageQuantile}
    Under the same assumptions as~\Cref{thm:npLowerBound},
    \begin{equation}
        P_{WX}(\ell(\hat \theta(X),W)<\rho)\leq L_W(\rho) \exp\{\ml{W}{X}\}. \label{eq:maximalLeakageSuccessBound}
    \end{equation}
    Consequently, if $\delta \in (0,1)$ and 
    \begin{equation}
        L_W(\rho)\exp\{\ml{W}{X}\} < 1-\delta
    \end{equation}
    then
    \begin{equation}
\mathcal M_{-}(\delta)\geq\rho
\qquad\text{and}\qquad
\mathcal M(\delta)\geq\rho.
\end{equation}
\end{corollary}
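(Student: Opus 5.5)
The natural route is the H\"older relaxation of \Cref{rmk:holderSibsonBound} (equivalently \Cref{rmk:holderLowerBound}), followed by the limit \(\alpha\to\infty\). A direct argument through the Neyman--Pearson bound is available as a cross-check. Set \(p:=P_{WX}(A_\rho)\) with \(A_\rho=\{\ell(\widehat\theta(X),W)<\rho\}\) and \(u:=L_W(\rho)\). For every \(\alpha>1\), \Cref{rmk:holderSibsonBound} gives
\[
p\leq u^{\frac{\alpha-1}{\alpha}}\exp\Bigl(\tfrac{\alpha-1}{\alpha}I_\alpha(W,X)\Bigr).
\]
The left-hand side does not depend on \(\alpha\). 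Letting \(\alpha\to\infty\), we have \((\alpha-1)/\alpha\to1\) and \(I_\alpha(W,X)\to\ml{W}{X}\) by the definition \eqref{eq:maximalLeakageDefinition}. Hence \(p\leq u\exp\{\ml{W}{X}\}\), which is \eqref{eq:maximalLeakageSuccessBound}. If \(\ml{W}{X}=\infty\), the bound is trivial. If \(u=0\), then \(u^{(\alpha-1)/\alpha}=0\) for every \(\alpha\) and the bound is again immediate.

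The main obstacle is whether passing to the limit is legitimate. \(I_\alpha\) is non-decreasing in \(\alpha\), since R\'enyi divergence is monotone in its order and an infimum of non-decreasing functions is non-decreasing. The limit in \eqref{eq:maximalLeakageDefinition} therefore exists in \([0,\infty]\). The product of the two convergent factors converges whenever the limit is finite, so no uniformity issue arises.

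As a safeguard against relying on the limit, I would also give a direct proof through \Cref{thm:npLowerBound} in the dominated case. Choose \(Q_X\) with density
\[
q(x)=\frac{\operatorname*{ess\,sup}_w p_{X\mid W=w}(x)}{Z},
\qquad
Z=\exp\{\ml{W}{X}\}.
\]
Then the likelihood ratio satisfies
\[
\frac{\mathrm dP_{WX}}{\mathrm d(P_WQ_X)}(w,x)=\frac{p_{X\mid W=w}(x)}{q(x)}\leq Z
\]
for \(P_WQ_X\)-a.e. \((w,x)\). Consequently, for any test \(0\leq\varphi\leq1\) with \(\mathbb E_{P_WQ_X}\varphi\leq u\), we get \(\mathbb E_{P_{WX}}\varphi\leq Zu\). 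This bounds \(\beta_\circ(P_{WX},P_WQ_X)^{-1}(u)\leq Zu\), and \Cref{thm:npLowerBound} then gives the same inequality.

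The consequence follows as in the earlier corollaries. Suppose \(L_W(\rho)\exp\{\ml{W}{X}\}<1-\delta\). Then every estimator has \(P_{WX}(\ell(\widehat\theta(X),W)\geq\rho)>\delta\). By \eqref{eq:BayesianReduction}, the minimax failure probability at threshold \(\rho\) exceeds \(\delta\). \Cref{def:lower-minimax-quantile} then gives \(\mathcal M_-(\delta)\geq\rho\), and \(\mathcal M(\delta)\geq\mathcal M_-(\delta)\) gives the second claim.
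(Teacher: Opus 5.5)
Your proposal is correct and follows the paper's route. The paper obtains \Cref{cor:maximalLeakageQuantile} by letting $\alpha\to\infty$ in the Sibson converse through its H\"older relaxation, as in \Cref{fig:lower-bound-hierarchy}, and then concludes via \eqref{eq:BayesianReduction} as you do. One small point: the infimum over estimators stays above $\delta$ because the bound $P_{WX}(\ell(\widehat\theta(X),W)\geq\rho)\geq1-L_W(\rho)\exp\{\ml{W}{X}\}>\delta$ is uniform in $\widehat\theta$, not merely because each estimator's failure probability exceeds $\delta$. Your direct cross-check with $Q_X\propto\operatorname*{ess\,sup}_w p_{X\mid W=w}$ is also valid: it bounds the likelihood ratio by $\exp\{\ml{W}{X}\}$, hence $\beta_\circ(P_{WX},P_WQ_X)^{-1}(u)\leq u\exp\{\ml{W}{X}\}$, which gives a limit-free proof in the dominated case.
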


\begin{figure}[t]
\centering
\resizebox{0.98\linewidth}{!}{%
\begin{tikzpicture}[
    node distance=7mm and 11mm,
    box/.style={
        draw,
        rounded corners,
        align=center,
        inner sep=4pt,
        font=\scriptsize
    },
    equiv/.style={{Latex[length=1.7mm]}-{Latex[length=1.7mm]}, thick},
    arr/.style={-{Latex[length=1.7mm]}, thick},
    relax/.style={-{Latex[length=1.7mm]}, thick, dashed},
    lab/.style={font=\tiny, align=center}
]

\node[box] (NP)
{Neyman--Pearson converse\\\Cref{thm:npLowerBound}};

\node[box, below left=of NP] (fdiv)
{$f$-divergence binary inverse\\\Cref{thm:fDivLowerBound}};

\node[box, below right=of NP] (amemiya)
{Amemiya--H\"older\\\Cref{cor:amemiyaHolder}};

\node[box, below=of fdiv] (finf)
{$f$-informativity\\\Cref{thm:fDivLowerBound}};

\node[box, below left=of finf] (ffano)
{$f$-Fano, $M$-point packing\\\Cref{cor:fFano}};

\node[box, below right=of finf] (renyi)
{R\'enyi binary inverse\\\Cref{cor:renyiSibson}};

\node[box, below=of renyi] (sibson)
{Sibson \(I_\alpha\) converse\\\Cref{cor:renyiSibson}};

\node[box, below=of sibson] (explicit)
{H\"older relaxation\\in terms of \(I_\alpha\)\\\Cref{rmk:holderLowerBound}};

\node[box, right=of explicit] (leakage)
{Maximal Leakage\\\Cref{cor:maximalLeakageQuantile}};

\node[box, below=of amemiya] (holder)
{H\"older $L_\alpha$ bound\\\Cref{rmk:holderLowerBound}};

\node[box, below=of ffano] (fano)
{Classical Fano inequality\\\Cref{rmk:classicalFano}};

\node[box, left=of fano] (lecam)
{Le Cam's two-point method\\\Cref{cor:leCam}};

\draw[arr] (NP) -- (fdiv);
\draw[arr] (NP) -- (amemiya);

\draw[arr] (fdiv) -- (finf);
\draw[arr] (finf) -- (ffano);
\draw[arr] (finf) -- (renyi);

\draw[arr]
(ffano) --
node[left, lab] {KL}
(fano);

\draw[arr]
(ffano.west) -|
node[pos=0.25, above, lab] {$M=2$, TV}
(lecam.north);

\draw[arr] (renyi) -- (sibson);

\draw[relax]
(sibson) --
(explicit);

\draw[arr]
(explicit) --
node[above, lab] {$\alpha\to\infty$}
(leakage);

\draw[arr] (amemiya) -- (holder);
\draw[equiv] (holder.south west) -- (explicit.north east);

\node[
    below=11mm of fano,
    anchor=west,
    font=\tiny
] (legend1)
{\tikz\draw[arr] (0,0)--(0.7,0);
 \quad specialisation / exact implication};

\node[
    below=2.5mm of legend1.west,
    anchor=west,
    font=\tiny
] (legend2)
{\tikz\draw[relax] (0,0)--(0.7,0);
 \quad relaxation};

\node[
    below=3mm of legend2.west,
    anchor=west,
    font=\tiny
] (legend3)
{\tikz\draw[equiv] (0,0)--(0.7,0);
 \quad equivalent formulations};

\end{tikzpicture}%
}

\caption{
Hierarchy of the minimax-quantile lower bounds. Solid arrows denote
specialisations or exact implications; dashed arrows denote
relaxations. The H\"older relaxation in terms of \(I_\alpha\) follows either by
relaxing the binary R\'enyi bound or directly from H\"older's
inequality. Letting $\alpha\to\infty$ gives the Maximal Leakage bound.
}
\label{fig:lower-bound-hierarchy}
\end{figure}
\section{Applications}
\subsection{Exact Recovery and Maximal Leakage}
\label{sec:exact-recovery-maximal-leakage}
The following result identifies a class of symmetric problems in which the
Maximal Leakage converse coincides with both the Neyman--Pearson converse
and the exact minimax risk.
\begin{theorem}
\label{thm:maximalLeakageExact}
Let
\(\mathcal V=\{\theta_1,\ldots,\theta_M\}\subseteq\Theta\)
be a finite loss-ball packing at threshold $\rho$, meaning that
\begin{equation}
\sup_{\vartheta\in\Theta}
\left|
\left\{j\in[M]:\ell(\vartheta,\theta_j)<\rho\right\}
\right|
\leq1.
\label{eq:lossBallPackingDefinition}
\end{equation}
Let $W$ be uniformly distributed over $\mathcal V$, with
\(P_{X\mid W=\theta_j}=P_{X\mid\theta_j}\) for \(j=1,\ldots,M\).
In particular, \(L_W(\rho)=1/M\).

Suppose that the uniform-prior MAP estimator
$\widehat\theta_{\mathrm{MAP}}$ is an equaliser on $\mathcal V$, namely,
for some $s_{\mathrm{MAP}}\in[0,1]$,
\begin{equation}
P_{X\mid\theta_j}
\left(
\widehat\theta_{\mathrm{MAP}}(X)=\theta_j
\right)
=
s_{\mathrm{MAP}}
\qquad
\text{for every }j=1,\ldots,M.
\label{eq:mapEqualizer}
\end{equation}
Then
\begin{align}
&\inf_{\widehat\theta\in\mathcal E}
\max_{j=1,\ldots,M}
P_{X\mid\theta_j}
\left(
\ell\bigl(\widehat\theta(X),\theta_j\bigr)\geq\rho
\right)
\notag\\
&\qquad=
1-
P_{WX}
\left(
\widehat\theta_{\mathrm{MAP}}(X)=W
\right)
\label{eq:minimaxMapEquality}\\
&\qquad=
1-
\inf_{Q_X}
\beta_{\circ}(P_{WX},P_WQ_X)^{-1}
\left(\frac1M\right)
\label{eq:minimaxNPEquality}\\
&\qquad=
1-
\frac{\exp\{\ml{W}{X}\}}{M}.
\label{eq:minimaxLeakageEquality}
\end{align}
In particular,
\begin{equation}
P_{WX}
\left(
\widehat\theta_{\mathrm{MAP}}(X)=W
\right)
=
\inf_{Q_X}
\beta_{\circ}(P_{WX},P_WQ_X)^{-1}
\left(\frac1M\right)
=
\frac{\exp\{\ml{W}{X}\}}{M}.
\label{eq:mapNPLeakageEquality}
\end{equation}
\end{theorem}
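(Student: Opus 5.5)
The proof is a sandwich: an upper bound on the Bayes success probability via \Cref{cor:maximalLeakageQuantile} and \Cref{thm:npLowerBound}, and a matching lower bound realised by the MAP rule. Abbreviate $s_{\mathrm{Bayes}}:=P_{WX}(\widehat\theta_{\mathrm{MAP}}(X)=W)$. Fix a dominating measure $\mu$ with densities $p_j$ of $P_{X\mid\theta_j}$. Since $W$ is uniform on a finite set, the essential supremum in \eqref{eq:maximalLeakageIntegralDefinition} is a maximum, so
\[
\exp\{\ml{W}{X}\}=\int\max_j p_j(x)\,\mu(\mathrm dx).
\]
The MAP rule picks an index attaining $\max_j p_j(x)$, hence
\[
s_{\mathrm{Bayes}}=\frac1M\int\max_j p_j\,\mathrm d\mu=\frac{\exp\{\ml{W}{X}\}}{M}.
\]
This settles the last equality in \eqref{eq:mapNPLeakageEquality} directly. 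The packing condition \eqref{eq:lossBallPackingDefinition} gives $L_W(\rho)\le 1/M$, and the point $\vartheta=\theta_1$ gives $\ge 1/M$ since $\ell(\theta_1,\theta_1)=g(0)<\rho$. Here I assume $\rho>g(0)$, which is implicit in the statement's claim $L_W(\rho)=1/M$.

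For the middle equality I sandwich the Neyman--Pearson quantity. First, the inverse NP function at $1/M$ dominates the success of the feasible test $\varphi(w,x)=\1\{\widehat\theta_{\mathrm{MAP}}(x)=w\}$. Under $P_WQ_X$ this test has mean exactly $1/M$ for every $Q_X$, because for each $x$ exactly one $w$ matches. Hence
\[
\beta_{\circ}(P_{WX},P_WQ_X)^{-1}(1/M)\ge s_{\mathrm{Bayes}}\quad\text{for every }Q_X.
\]
Second, I need the infimum over $Q_X$ to be at most $\exp\{\ml{W}{X}\}/M$. For this I take $Q^\star=\max_j p_j\,\mu/Z$ with $Z=\exp\{\ml{W}{X}\}$. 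Then $\mathrm dP_{WX}/\mathrm d(P_WQ^\star)(j,x)=Z\,p_j(x)/\max_k p_k(x)\le Z$. So for any test $\varphi$ with $\mathbb E_{P_WQ^\star}\varphi\le 1/M$ we get $\mathbb E_{P_{WX}}\varphi\le Z\,\mathbb E_{P_WQ^\star}\varphi\le Z/M$. Combining the two bounds with the identity above gives \eqref{eq:mapNPLeakageEquality}. Equivalently, one could invoke the $\alpha\to\infty$ route of \Cref{cor:maximalLeakageQuantile}, but the direct choice of $Q^\star$ avoids limiting arguments.

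For the minimax equalities \eqref{eq:minimaxMapEquality}--\eqref{eq:minimaxLeakageEquality}, the two directions are as follows.

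\textbf{Lower bound.} For any $\widehat\theta$, the maximum failure over $j$ is at least the average $P_{WX}(\ell(\widehat\theta(X),W)\ge\rho)$. By \Cref{thm:npLowerBound} with $L_W(\rho)=1/M$, this average is at least $1-\inf_{Q_X}\beta_\circ(\cdot)^{-1}(1/M)=1-s_{\mathrm{Bayes}}$.

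\textbf{Upper bound.} Use $\widehat\theta_{\mathrm{MAP}}$ itself: $\ell(\theta_j,\theta_j)=g(0)<\rho$, so success of exact identification implies loss below $\rho$. Each failure probability is then at most $1-s_{\mathrm{MAP}}$. The equaliser assumption \eqref{eq:mapEqualizer} gives $s_{\mathrm{MAP}}=s_{\mathrm{Bayes}}$, the average of the equal per-class successes.

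The main obstacle is the upper bound on $\inf_{Q_X}\beta^{-1}$, i.e.\ identifying the optimal reference measure $Q^\star$ and checking the likelihood-ratio bound. The other delicate points are measure-theoretic: measurable tie-breaking in the MAP rule, and replacing the essential supremum by a pointwise maximum, which is valid for finite uniform $W$.
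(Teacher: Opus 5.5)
Your proof is correct. Its skeleton matches the paper's: both collapse the chain $P_{WX}(\widehat\theta_{\mathrm{MAP}}(X)=W)\le\inf_{Q_X}\beta_{\circ}(P_{WX},P_WQ_X)^{-1}(1/M)\le\exp\{\ml{W}{X}\}/M$, whose two ends coincide. The difference is in how the links are justified.

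For the second inequality, the paper appeals to the Maximal Leakage relaxation (\Cref{cor:maximalLeakageQuantile}). That corollary is only indicated as the $\alpha\to\infty$ limit of the Sibson bound. The paper then closes the chain with the operational characterisation of Maximal Leakage. You instead exhibit the reference law $Q^\star\propto\max_j p_j$, which is the optimiser at $\alpha=\infty$. Its likelihood ratio is bounded by $Z=\exp\{\ml{W}{X}\}$. You also compute the MAP success as $\frac1M\int\max_j p_j\,\mathrm d\mu$ by hand. This is self-contained and avoids the limiting argument. It bounds the Neyman--Pearson quantity itself rather than the success of a given estimator, which is what the sandwich actually needs. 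It also identifies an explicit minimising $Q_X$.

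For the minimax identity \eqref{eq:minimaxMapEquality}, the paper does not use the metaconverse at all. The packing condition turns any estimator into a decoder on $\mathcal V$ whose exact-recovery success dominates its threshold-$\rho$ success. Bayes optimality of the MAP rule and the equaliser property then finish the argument. You instead obtain the converse side from \Cref{thm:npLowerBound} once $\inf_{Q_X}\beta_{\circ}(P_{WX},P_WQ_X)^{-1}(1/M)$ has been evaluated. The paper's route is more elementary for this step; yours shows directly that the metaconverse is attained.

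Your remark that $\rho>g(0)$ is needed is also correct. The paper uses it tacitly, both for $L_W(\rho)=1/M$ and when asserting that the MAP rule attains the bound at threshold $\rho$.
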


For a uniform finite prior, the MAP-success identity in~\eqref{eq:mapNPLeakageEquality} follows from the closed form for Maximal Leakage~\cite{IssaWagnerKamath2020}. Under the equaliser assumption, the average MAP success probability equals its success probability under every parameter in the packing, which gives the first equality in~\eqref{eq:minimaxMapEquality}.
The next example showcases an application of this principle to provide finite sample rates for community detection in Stochastic Block Models.
\subsubsection{Exact Recovery: Gaussian weighted stochastic block model}
\label{sec:gaussian-weighted-sbm}
Stochastic block models form a standard class of latent-variable models for
community detection in networks~\cite{HollandLaskeyLeinhardt1983}. Each vertex is assigned an unobserved
community label, and, conditionally on these labels, the edge observations
are independent, with distributions depending only on whether their
two vertices belong to the same community. The resulting inference problem is
to recover the latent partition from the observed network.

While the classical stochastic block model records only the presence or
absence of an edge, many applications provide real-valued similarities,
correlations, or interaction strengths. Weighted stochastic block models
accommodate such observations by assigning different edge-weight
distributions to within-community and between-community pairs~\cite{AicherJacobsClauset2015}. The Gaussian
model considered below is a particularly tractable representative of this
class: the difference between the two Gaussian means determines the signal
strength, while the common variance controls the noise level. The Gaussian specification provides an analytically tractable model without making the recovery problem trivial. Although the likelihood difference between the true labelling and any fixed alternative labelling is Gaussian, exact recovery requires controlling these differences jointly over an exponentially large and strongly dependent family of alternative labellings. The Gaussian structure nevertheless makes these comparisons sufficiently explicit to derive finite-sample bounds and to study the critical recovery window beyond the known first-order threshold.
We focus on a balanced partition into two communities of equal size. Since the symbols \(+1\) and \(-1\) merely name the two communities, the labellings \(\sigma\) and \(-\sigma\) represent the same partition. We therefore identify labellings that differ only by a global sign
reversal. More formally, let $n=2m$ be even and define
\(\widetilde\Theta_n:=\{\sigma\in\{-1,+1\}^n:
\langle\sigma,\mathbf 1_n\rangle=0\}\).
Since $\sigma$ and $-\sigma$ represent the same partition, the
identifiable parameter space is
\(\Theta_n:=\widetilde\Theta_n/\!\sim\), where \(\sigma\sim-\sigma\),
and we denote the equivalence class of $\sigma$ by $[\sigma]$. In
particular, \(|\Theta_n|=\frac12\binom{n}{n/2}\).

For $[\sigma]\in\Theta_n$, let $P_\sigma$ denote the distribution of
the symmetric weighted adjacency matrix
\(A=(A_{ij})_{i,j\in[n]}\), with \(A_{ii}=0\),
whose upper-triangular entries are conditionally independent and
satisfy
\begin{equation}
A_{ij}\mid[\sigma]
\sim
\begin{cases}
\mathcal N(\mu_{1,n},\tau^2),
&
\sigma_i=\sigma_j,
\\[1mm]
\mathcal N(\mu_{2,n},\tau^2),
&
\sigma_i\neq\sigma_j,
\end{cases}
\qquad
1\leq i<j\leq n,
\label{eq:gwsbmModel}
\end{equation}
where $\mu_{1,n}>\mu_{2,n}$ and $\tau>0$. For notational convenience, let
\(\overline\mu_n:=(\mu_{1,n}+\mu_{2,n})/2\) and
\(\Delta_n:=\mu_{1,n}-\mu_{2,n}\).
Equivalently,
\begin{equation}
A_{ij}
=
\overline\mu_n
+
\frac{\Delta_n}{2}\sigma_i\sigma_j
+
\tau Z_{ij},
\qquad
Z_{ij}\overset{\mathrm{iid}}{\sim}\mathcal N(0,1).
\label{eq:gwsbmSignalNoiseForm}
\end{equation} 
The distribution in~\eqref{eq:gwsbmModel} is independent of the
choice of representative of $[\sigma]$.

An estimator is a measurable map
\(\widehat\sigma:\mathbb R^{n\times n}\to\Theta_n\).
Under exact-recovery loss, its minimax risk is
\begin{equation}
\mathcal R_n^\star
:=
\inf_{\widehat\sigma}
\sup_{[\sigma]\in\Theta_n}
P_\sigma
\left(
\widehat\sigma(A)\neq[\sigma]
\right).
\label{eq:gwsbmMinimaxRisk}
\end{equation}

Under the uniform distribution on $\Theta_n$, the MAP estimator
coincides with the maximum-likelihood estimator and can be written as
\begin{equation}
\widehat\sigma_{\mathrm{ML}}(A)
\in
\operatorname*{arg\,max}_{[\sigma]\in\Theta_n}
\sum_{1\leq i<j\leq n}
A_{ij}\sigma_i\sigma_j.
\label{eq:gwsbmMLE}
\end{equation}
Indeed, up to additive constants and multiplication by the positive
factor $\Delta_n/(2\tau^2)$, the log-likelihood is
\[
\sum_{1\leq i<j\leq n}
\left(A_{ij}-\overline\mu_n\right)\sigma_i\sigma_j.
\]
Since every balanced labelling satisfies
\[
\sum_{1\leq i<j\leq n}\sigma_i\sigma_j
=
\frac12
\left[
\left(\sum_{i=1}^n\sigma_i\right)^2-n
\right]
=
-\frac n2,
\]
the term involving $\overline\mu_n$ is constant over $\Theta_n$,
which yields~\eqref{eq:gwsbmMLE}.
The objective in~\eqref{eq:gwsbmMLE} is invariant under
$\sigma\mapsto-\sigma$. Moreover, the action of the permutation group
on $\Theta_n$ is transitive, and the family $\{P_\sigma:\sigma\in\Theta_n\}$
is equivariant under the corresponding permutations of the rows and columns
of $A$. Hence the
uniform-prior MAP rule is an equaliser and is minimax optimal:
\begin{equation}
\mathcal R_n^\star
=
P_\sigma
\left(
\widehat\sigma_{\mathrm{ML}}(A)\neq[\sigma]
\right),
\qquad
[\sigma]\in\Theta_n.
\label{eq:gwsbmMLEMinimax}
\end{equation}

Following~\cite{PandeyKulkarni2024}, we define the signal-to-noise
ratio as follows
\begin{equation}
\operatorname{SNR}_n
:=
\frac{n(\mu_{1,n}-\mu_{2,n})^2}{8\tau^2\log n}.
\label{eq:gwsbmSNR}
\end{equation}
Under the critical scaling
\begin{equation}
\mu_{1,n}
=
\alpha\sqrt{\frac{\log n}{n}},
\qquad
\mu_{2,n}
=
\beta\sqrt{\frac{\log n}{n}},
\qquad
\alpha>\beta,
\label{eq:gwsbmCriticalScaling}
\end{equation}
this quantity is constant:
\(\operatorname{SNR}_n=\operatorname{SNR}
:=(\alpha-\beta)^2/(8\tau^2)\).

The sharp first-order exact-recovery threshold is known to occur at
$\operatorname{SNR}=1$~\cite{PandeyKulkarni2024}. More precisely,
under~\eqref{eq:gwsbmCriticalScaling},
\begin{equation}
\operatorname{SNR}<1
\quad\Longrightarrow\quad
\mathcal R_n^\star\longrightarrow1,
\label{eq:gwsbmKnownImpossibility}
\end{equation}
whereas
\begin{equation}
\operatorname{SNR}>1
\quad\Longrightarrow\quad
\mathcal R_n^\star\longrightarrow0,
\label{eq:gwsbmKnownPossibility}
\end{equation}
and the latter limit is attained by the maximum-likelihood estimator.
Suitable semidefinite and spectral relaxations also attain exact recovery
above the same threshold.  Thus, at first order, the statistically
achievable region is also achieved by known polynomial-time methods.

The appearance of~\eqref{eq:gwsbmSNR} is consistent with the
order-$1/2$ R\'enyi-divergence characterisation of exact recovery in
weighted stochastic block models~\cite{JogLoh2015}. Indeed,
\(D_{1/2}(\mathcal N(\mu_{1,n},\tau^2)\|\mathcal
N(\mu_{2,n},\tau^2))=\Delta_n^2/(4\tau^2)\), and therefore
\(\operatorname{SNR}_n=\frac{n}{2\log n}
D_{1/2}(\mathcal N(\mu_{1,n},\tau^2)\|\mathcal
N(\mu_{2,n},\tau^2))\).

The established results
\eqref{eq:gwsbmKnownImpossibility}--\eqref{eq:gwsbmKnownPossibility}
determine the first-order all-or-nothing transition, but do not
characterise the minimax risk when
\(\operatorname{SNR}_n=1+o(1)\). We call this regime the \emph{critical
window}. The purpose of the following analysis is to obtain explicit
finite-sample bounds and to characterise the minimax exact-recovery risk
throughout that window.
To connect this problem with the preceding information-theoretic
framework, let $W$ be uniformly distributed over $\Theta_n$, and,
conditionally on $W=[\sigma]$, let $A$ have distribution $P_\sigma$.
By permutation invariance, the uniform-prior maximum-likelihood rule is
an equaliser. Consequently, the exact-recovery identity in
\Cref{thm:maximalLeakageExact} gives
\begin{equation}
1-\mathcal R_n^\star
=
P_{WA}
\left(
\widehat\sigma_{\mathrm{ML}}(A)=W
\right)
=
\frac{\exp\{\ml{W}{A}\}}{|\Theta_n|}.
\label{eq:gwsbmLeakageRiskIdentity}
\end{equation}
Thus, obtaining finite-sample guarantees for exact recovery is equivalent
to estimating the normalised exponential
\(\exp\{\ml{W}{A}\}/|\Theta_n|\).
The pointwise supremum in the formula for Maximal Leakage does not factorise
over the edges. Indeed, if $s_{ij}:=\sigma_i\sigma_j$, then
$s_{ij}s_{jk}s_{ik}=1$ for every distinct $i,j,k$, while balance requires
$\sum_{i=1}^n\sigma_i=0$. Thus, the edge signs cannot be optimised
independently. We therefore estimate the leakage by
analysing the likelihood comparisons between the true partition and
balanced exchanges of vertices between the two communities.
Fix $[\sigma^\star]\in\Theta_n$ and choose a representative
$\sigma^\star\in\widetilde\Theta_n$. Define
\(C_+:=\{i\in[n]:\sigma_i^\star=+1\}\) and
\(C_-:=\{i\in[n]:\sigma_i^\star=-1\}\).
Since $\sigma^\star$ is balanced, \(|C_+|=|C_-|=m\).

Define \(\lambda_n:=(\mu_{1,n}-\mu_{2,n})/(2\tau)\).
For every $1\leq i<j\leq n$, let
\(Y_{ij}:=\frac{\sigma_i^\star\sigma_j^\star}{\tau}
(A_{ij}-(\mu_{1,n}+\mu_{2,n})/2)\).
Set \(Y_{ji}:=Y_{ij}\) and \(Y_{ii}:=0\). Under
$P_{\sigma^\star}$, \(Y_{ij}=\lambda_n+Z_{ij}\) for
\(1\leq i<j\leq n\), where
\((Z_{ij})_{1\leq i<j\leq n}\) are independent standard Gaussian
variables.

For every $i\in[n]$, define \(D_i:=\sum_{j\neq i}Y_{ij}\) and
\(R_i:=\{D_i-(n-1)\lambda_n\}/\sqrt{n-2}\). Also define
\(M_{+,n}:=\max_{i\in C_+}(-R_i)\) and
\(M_{-,n}:=\max_{j\in C_-}(-R_j)\).

The centred Gaussian vector $(R_i)_{i\in[n]}$ satisfies
\[
\operatorname{Var}(R_i)
=
\frac{n-1}{n-2},
\qquad
\operatorname{Cov}(R_i,R_j)
=
\frac{1}{n-2},
\qquad
i\neq j.
\]
Let \(G,\xi_1,\ldots,\xi_n\) be mutually independent random variables with distribution
$\mathcal N(0,1)$. The Gaussian vector
\[
\left(
-\xi_i+\frac{G}{\sqrt{n-2}}
\right)_{i\in[n]}
\]
has the same mean vector and covariance matrix as
$(R_i)_{i\in[n]}$. Consequently,
\[
(R_i)_{i\in[n]}
\overset{\mathrm d}{=}
\left(
-\xi_i+\frac{G}{\sqrt{n-2}}
\right)_{i\in[n]}.
\]

Define \(U_m:=\max_{i\in C_+}\xi_i\) and
\(V_m:=\max_{j\in C_-}\xi_j\).
Since $C_+$ and $C_-$ are disjoint, $U_m$ and $V_m$ are independent
copies of the maximum of $m$ independent standard Gaussian random
variables, and both are independent of $G$. It follows jointly that
\[
M_{+,n}
\overset{\mathrm d}{=}
U_m-\frac{G}{\sqrt{n-2}},
\qquad
M_{-,n}
\overset{\mathrm d}{=}
V_m-\frac{G}{\sqrt{n-2}},
\]
and therefore
\[
M_{+,n}+M_{-,n}
\overset{\mathrm d}{=}
U_m+V_m-\frac{2G}{\sqrt{n-2}}.
\]

We are now ready to state the main finite-sample result.

\begin{theorem}
\label{thm:gwsbmFiniteLeakage}
For every integer \(1\leq L\leq\lfloor m/2\rfloor\) and every \(t>0\),
one has
\begin{align}
&
1-
P\left(
U_m+V_m-\frac{2G}{\sqrt{n-2}}
\geq
2\vartheta_n-\varepsilon_{n,L}(t)
\right)
-r_n(t)-B_{n,L}
\notag\\
&\leq
\frac{\exp\{\ml{W}{A}\}}{|\Theta_n|}
\notag\\
&\leq
1-
P\left(
U_m+V_m-\frac{2G}{\sqrt{n-2}}
\geq
2\vartheta_n+\varepsilon_{n,1}(t)
\right)
+r_n(t),
\label{eq:gwsbmFiniteLeakageSandwich}
\end{align}
Here
\[
\begin{aligned}
r_n(t)&:=n(n-1)\overline\Phi(t),\\
B_{n,L}&:=
\sum_{k=L+1}^{\lfloor m/2\rfloor}
\binom{m}{k}^{2}
\overline\Phi\!\left(2\lambda_n\sqrt{k(m-k)}\right),\\
\vartheta_n&:=\frac{(n-1)\lambda_n}{\sqrt{n-2}},
&
\varepsilon_{n,L}(t)&:=
\frac{2(2L-1)(\lambda_n+t)}{\sqrt{n-2}}.
\end{aligned}
\]

Consequently,
\begin{align}
&
P\left(
U_m+V_m-\frac{2G}{\sqrt{n-2}}
\geq
2\vartheta_n+\varepsilon_{n,1}(t)
\right)
-r_n(t)
\notag\\
&\leq
\mathcal R_n^\star
\notag\\
&\leq
P\left(
U_m+V_m-\frac{2G}{\sqrt{n-2}}
\geq
2\vartheta_n-\varepsilon_{n,L}(t)
\right)
+r_n(t)+B_{n,L}.
\label{eq:gwsbmFiniteRiskSandwich}
\end{align}
\end{theorem}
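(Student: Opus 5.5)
By \eqref{eq:gwsbmLeakageRiskIdentity}, \(\exp\{\ml{W}{A}\}/|\Theta_n|=1-\mathcal R_n^\star=P_{\sigma^\star}(\widehat\sigma_{\mathrm{ML}}(A)=[\sigma^\star])\) for any fixed \([\sigma^\star]\). It therefore suffices to prove \eqref{eq:gwsbmFiniteRiskSandwich}, and \eqref{eq:gwsbmFiniteLeakageSandwich} follows by taking complements. We work under \(P_{\sigma^\star}\). Every competing balanced labelling is obtained from \(\sigma^\star\) by exchanging a set \(S\subseteq C_+\) with a set \(T\subseteq C_-\), where \(|S|=|T|=k\). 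Up to the global sign, we may take \(1\le k\le\lfloor m/2\rfloor\). Writing \(\sigma^{S,T}\) for the exchanged labelling, the change in the ML objective is
\[
\Delta(S,T)=-2\tau\sum_{i\in S\cup T}\ \sum_{j\notin S\cup T}Y_{ij}+(\text{within-block corrections}).
\]
More precisely, the pairs whose sign flips are exactly those with one endpoint in \(S\cup T\) and the other outside it. Hence \(\widehat\sigma_{\mathrm{ML}}\) fails exactly when \(\sum_{i\in S\cup T}\sum_{j\notin S\cup T}Y_{ij}\le 0\) for some \((S,T)\); ties have probability zero. The plan is to express this cut sum through the degree statistics \(D_i\) and the pairwise entries \(Y_{ij}\), and to show that the single swaps \(k=1\) drive the error event.

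\emph{Lower bound on the risk.} Restrict to \(k=1\), \(S=\{i\}\), \(T=\{j\}\). The cut sum equals \(D_i+D_j-4Y_{ij}\) (check the exact coefficient of \(Y_{ij}\)). On the event \(\mathcal G_t:=\{|Z_{ij}|\le t\ \forall i\ne j\}\), whose complement has probability at most \(r_n(t)=n(n-1)\overline\Phi(t)\) by a union bound over ordered pairs, we have \(|Y_{ij}|\le\lambda_n+t\). Substituting \(D_i=(n-1)\lambda_n+\sqrt{n-2}\,R_i\) and dividing by \(\sqrt{n-2}\), the condition becomes \(-R_i-R_j\ge 2\vartheta_n-O((\lambda_n+t)/\sqrt{n-2})\). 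Maximising over \(i\in C_+\), \(j\in C_-\) gives \(M_{+,n}+M_{-,n}\ge 2\vartheta_n+\varepsilon_{n,1}(t)\), up to the sign convention chosen in \(\varepsilon_{n,1}\). This event, intersected with \(\mathcal G_t\), implies that ML fails. Applying the distributional identity for \(M_{+,n}+M_{-,n}\) yields the left inequality of \eqref{eq:gwsbmFiniteRiskSandwich}.

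\emph{Upper bound on the risk.} Split the error event according to the size \(k\) of the best competing swap.

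For \(k>L\): each fixed \((S,T)\) has a Gaussian cut sum with mean \(2\cdot 2k(m-k)\lambda_n\) (count the cross pairs: \(|S\cup T|\cdot(n-2k)\) minus the same-sign pairs, which gives \(4k(m-k)\) flipped pairs) and variance equal to the number of flipped pairs, namely \(4k(m-k)\). Its probability of being nonpositive is therefore \(\overline\Phi(2\lambda_n\sqrt{k(m-k)})\). A union bound over the \(\binom mk^2\) choices of \((S,T)\) gives exactly \(B_{n,L}\).

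For \(1\le k\le L\): write the cut sum as \(\sum_{i\in S\cup T}D_i\) minus twice the internal sum \(\sum_{i,j\in S\cup T}Y_{ij}\) restricted to pairs not flipped. On \(\mathcal G_t\) this correction is at most \(2k(2k-1)(\lambda_n+t)\). If the cut sum is nonpositive, then averaging shows that some \(i\in S\), \(j\in T\) satisfy \(D_i+D_j\le\frac1k\times\) correction \(\le 2(2k-1)(\lambda_n+t)\le 2(2L-1)(\lambda_n+t)\). After normalising, this gives \(M_{+,n}+M_{-,n}\ge 2\vartheta_n-\varepsilon_{n,L}(t)\). The averaging step uses that \(\sum_{S\cup T}D_i\) splits into \(k\) disjoint pairs, one from each side. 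Adding the three pieces (\(\mathcal G_t^c\), \(k\le L\), \(k>L\)) yields the right inequality.

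\emph{Main obstacle.} The delicate step is the exact bookkeeping of the small-\(k\) case. We must identify precisely which pairs inside \(S\cup T\) flip sign: pairs within \(S\) and within \(T\) keep their sign, while \(S\)--\(T\) pairs flip twice and so also keep theirs. The resulting correction must match \(2(2L-1)(\lambda_n+t)/\sqrt{n-2}\) after normalisation. The same count also fixes the constant in the \(k=1\) lower bound. I would first verify the \(k=1\) identity explicitly and then do the general-\(k\) count by the same double-counting argument.
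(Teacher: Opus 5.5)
This is essentially the paper's proof. It uses the same exchange characterisation of the ML region, the same bounded-noise event, and the swap of the two extremal vertices $i^\star\in C_+$, $j^\star\in C_-$ for the risk lower bound. For the risk upper bound it splits at $L$, with the $\binom{m}{k}^2$ union bound for $k>L$ and your pairing/averaging step for $k\le L$, which is equivalent to the paper's bound $-\sum_{i\in U\cup V}R_i\le kM_{+,n}+kM_{-,n}$. The only organisational difference is that you bound the risk and pass to the leakage via the identity $1-\mathcal R_n^\star=\exp\{\ml{W}{A}\}/|\Theta_n|$, whereas the paper bounds $P_{\sigma^\star}(A\in\mathcal D_{\sigma^\star})$ directly.

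Do fix the slip you flagged. The $k=1$ cut sum is $D_i+D_j-2Y_{ij}$, because each of $D_i$ and $D_j$ contains $Y_{ij}$ exactly once. It is this coefficient $2$ that yields exactly $\varepsilon_{n,1}(t)=2(\lambda_n+t)/\sqrt{n-2}$; with $4$ you would get $2\varepsilon_{n,1}(t)$, so this is not a sign-convention issue. Similarly, there are no within-block corrections and no same-sign pairs to subtract. All $2k(n-2k)=4k(m-k)$ cross pairs flip, and the $\overline\mu_n$ terms cancel by balance.
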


\begin{proof}
For every $[\sigma]\in\Theta_n$, let $p_\sigma$ denote the density of
$P_\sigma$ with respect to Lebesgue measure on
$\mathbb R^{\binom{n}{2}}$. Since $W$ is uniformly distributed over
$\Theta_n$, the exponential of Maximal Leakage is
\begin{equation}
\exp\{\ml{W}{A}\}
=
\int_{\mathbb R^{\binom{n}{2}}}
\max_{[\sigma]\in\Theta_n}
p_\sigma(a)
\,\mathrm da.
\label{eq:gwsbmLeakageIntegral}
\end{equation}

For every $[\sigma]\in\Theta_n$, define its maximum-likelihood region by
\begin{equation}
\mathcal D_\sigma
:=
\left\{
a:
p_\sigma(a)>
p_{\sigma'}(a)
\text{ for every }
[\sigma']\neq[\sigma]
\right\}.
\label{eq:gwsbmMLRegion}
\end{equation}
Likelihood ties have Lebesgue measure zero, and therefore
\begin{align}
\exp\{\ml{W}{A}\}
&=
\sum_{[\sigma]\in\Theta_n}
\int_{\mathcal D_\sigma}
p_\sigma(a)
\,\mathrm da.
\label{eq:gwsbmLeakageDecisionRegions}
\end{align}
The action of the permutation group on $\Theta_n$ is transitive, and
simultaneously permuting the rows and columns of $A$ maps
$\mathcal D_\sigma$ onto the corresponding maximum-likelihood region.
Consequently, all the integrals in
\eqref{eq:gwsbmLeakageDecisionRegions} are equal. For the fixed
$[\sigma^\star]\in\Theta_n$,
\begin{equation}
\frac{\exp\{\ml{W}{A}\}}{|\Theta_n|}
=
\int_{\mathcal D_{\sigma^\star}}
p_{\sigma^\star}(a)
\,\mathrm da
=
P_{\sigma^\star}
\left(
A\in\mathcal D_{\sigma^\star}
\right).
\label{eq:gwsbmNormalisedLeakageRegion}
\end{equation}
Thus, the normalised exponential of Maximal Leakage is the
$P_{\sigma^\star}$-measure of the region on which the true labelling
maximises the likelihood.

We now characterise this region through balanced exchanges. Let
\(U\subseteq C_+\) and \(V\subseteq C_-\) satisfy \(|U|=|V|=k\),
and let $\sigma^{U,V}$ be the labelling obtained from
$\sigma^\star$ by reversing the signs of the vertices in $U\cup V$.
Define
\begin{equation}
T_{U,V}
:=
\sum_{\substack{1\leq i<j\leq n\\
|\{i,j\}\cap(U\cup V)|=1}}
Y_{ij}.
\label{eq:gwsbmExchangeStatistic}
\end{equation}
A direct calculation of the Gaussian likelihood ratio gives
\begin{equation}
\log
\frac{
p_{\sigma^{U,V}}(A)
}{
p_{\sigma^\star}(A)
}
=
-2\lambda_n T_{U,V}.
\label{eq:gwsbmExchangeLikelihoodRatio}
\end{equation}
Since $\lambda_n>0$, the competing labelling has likelihood at least
as large as the true labelling if and only if
\begin{equation}
T_{U,V}\leq0.
\label{eq:gwsbmDetrimentalExchange}
\end{equation}
Every element of $\Theta_n$ can be represented in this way, and,
because a labelling and its global sign reversal are identified, it is
enough to consider \(1\leq k\leq\lfloor m/2\rfloor\).
Up to an event of probability zero,
\begin{equation}
\mathcal D_{\sigma^\star}
=
\bigcap_{k=1}^{\lfloor m/2\rfloor}
\bigcap_{\substack{U\subseteq C_+,\,V\subseteq C_-\\
|U|=|V|=k}}
\left\{
T_{U,V}>0
\right\}.
\label{eq:gwsbmMLRegionIntersections}
\end{equation}

Let
\begin{equation}
\mathcal E_t
:=
\left\{
\max_{1\leq i<j\leq n}|Z_{ij}|\leq t
\right\}.
\label{eq:gwsbmBoundedNoiseEvent}
\end{equation}
A union bound gives
\begin{align}
P_{\sigma^\star}
\left(
\mathcal E_t^c
\right)
&\leq
2\binom{n}{2}\overline\Phi(t)\\
&=
n(n-1)\overline\Phi(t)\\
&=
r_n(t).
\label{eq:gwsbmBoundedNoiseProbability}
\end{align}

We first derive an upper bound on
$\exp\{\ml{W}{A}\}/|\Theta_n|$. Let
$i^\star\in C_+$ and $j^\star\in C_-$ satisfy
\(-R_{i^\star}=M_{+,n}\) and \(-R_{j^\star}=M_{-,n}\).
For the exchange of $i^\star$ and $j^\star$,
\begin{align}
T_{\{i^\star\},\{j^\star\}}
&=
D_{i^\star}+D_{j^\star}-2Y_{i^\star j^\star}\\
&=
\sqrt{n-2}
\left(
2\vartheta_n-M_{+,n}-M_{-,n}
\right)
-
2\left(
\lambda_n+Z_{i^\star j^\star}
\right).
\label{eq:gwsbmMostDetrimentalOneExchange}
\end{align}
If \(M_{+,n}+M_{-,n}\geq
2\vartheta_n+\varepsilon_{n,1}(t)\) and $\mathcal E_t$ occurs, then
\begin{align}
T_{\{i^\star\},\{j^\star\}}
&\leq
-2(\lambda_n+t)
-
2(\lambda_n-t)\\
&=
-4\lambda_n
<
0.
\end{align}
It follows from~\eqref{eq:gwsbmMLRegionIntersections} that
\begin{equation}
\left\{
M_{+,n}+M_{-,n}
\geq
2\vartheta_n+\varepsilon_{n,1}(t)
\right\}
\cap
\mathcal E_t
\subseteq
\mathcal D_{\sigma^\star}^c.
\label{eq:gwsbmLeakageUpperInclusion}
\end{equation}
Using~\eqref{eq:gwsbmNormalisedLeakageRegion} and
\eqref{eq:gwsbmBoundedNoiseProbability},
\begin{align}
\frac{\exp\{\ml{W}{A}\}}{|\Theta_n|}
&=
P_{\sigma^\star}
\left(
A\in\mathcal D_{\sigma^\star}
\right)\\
&\leq
1-
P_{\sigma^\star}
\left(
M_{+,n}+M_{-,n}
\geq
2\vartheta_n+\varepsilon_{n,1}(t)
\right)
+r_n(t).
\label{eq:gwsbmLeakageUpperBound}
\end{align}

We next derive the lower bound. Suppose that
$T_{U,V}\leq0$ for some \(U\subseteq C_+\) and \(V\subseteq C_-\)
with \(|U|=|V|=k\leq L\).
Writing $U\cup V$ for the set of $2k$ exchanged vertices, one has
\begin{align}
\sum_{i\in U\cup V}D_i
&=
T_{U,V}
+
2
\sum_{\substack{1\leq i<j\leq n\\
i,j\in U\cup V}}
Y_{ij}.
\label{eq:gwsbmVertexScoreDecomposition}
\end{align}
On $\mathcal E_t$, the second sum contains
\(\binom{2k}{2}=k(2k-1)\) terms, each bounded above by
$\lambda_n+t$. Since $T_{U,V}\leq0$, it follows that
\(\sum_{i\in U\cup V}D_i\leq2k(2k-1)(\lambda_n+t)\).
Using \(D_i=(n-1)\lambda_n+\sqrt{n-2}\,R_i\),
we obtain
\begin{equation}
-\sum_{i\in U\cup V}R_i
\geq
2k\vartheta_n
-
\frac{
2k(2k-1)(\lambda_n+t)
}{
\sqrt{n-2}
}.
\label{eq:gwsbmDetrimentalSmallExchangeLower}
\end{equation}
Moreover,
\(-\sum_{i\in U\cup V}R_i\leq kM_{+,n}+kM_{-,n}\).
Consequently, every detrimental exchange with $k\leq L$, on
$\mathcal E_t$, implies
\begin{align}
M_{+,n}+M_{-,n}
&\geq
2\vartheta_n
-
\frac{
2(2k-1)(\lambda_n+t)
}{
\sqrt{n-2}
}\\
&\geq
2\vartheta_n-\varepsilon_{n,L}(t).
\label{eq:gwsbmSmallExchangeImplication}
\end{align}

It remains to control exchanges with $k>L$. For fixed
$U\subseteq C_+$ and $V\subseteq C_-$ satisfying
$|U|=|V|=k$, the statistic $T_{U,V}$ is the sum of
$4k(m-k)$ independent random variables with distribution
$\mathcal N(\lambda_n,1)$. Hence
\(P_{\sigma^\star}(T_{U,V}\leq0)
=\overline\Phi(2\lambda_n\sqrt{k(m-k)})\).
There are $\binom{m}{k}^2$ possible pairs $(U,V)$. Therefore,
\begin{align}
&P_{\sigma^\star}
\left(
T_{U,V}\leq0
\text{ for some }
U,V
\text{ with }
L<|U|=|V|
\leq
\left\lfloor\frac m2\right\rfloor
\right)\\
&\qquad\leq
\sum_{k=L+1}^{\lfloor m/2\rfloor}
\binom{m}{k}^2
\overline\Phi
\left(
2\lambda_n\sqrt{k(m-k)}
\right)\\
&\qquad=
B_{n,L}.
\label{eq:gwsbmLargeExchangeProbability}
\end{align}

Combining~\eqref{eq:gwsbmMLRegionIntersections},
\eqref{eq:gwsbmSmallExchangeImplication}, and
\eqref{eq:gwsbmLargeExchangeProbability}, one obtains
\begin{align}
\mathcal D_{\sigma^\star}^c
\subseteq{}&
\left\{
M_{+,n}+M_{-,n}
\geq
2\vartheta_n-\varepsilon_{n,L}(t)
\right\}
\cup
\mathcal E_t^c
\notag\\
&\cup
\left\{
T_{U,V}\leq0
\text{ for some }
U,V
\text{ with }
L<|U|=|V|
\leq
\left\lfloor\frac m2\right\rfloor
\right\}.
\end{align}
Using~\eqref{eq:gwsbmNormalisedLeakageRegion},
\eqref{eq:gwsbmBoundedNoiseProbability}, and
\eqref{eq:gwsbmLargeExchangeProbability}, it follows that
\begin{align}
\frac{\exp\{\ml{W}{A}\}}{|\Theta_n|}
&\geq
1-
P_{\sigma^\star}
\left(
M_{+,n}+M_{-,n}
\geq
2\vartheta_n-\varepsilon_{n,L}(t)
\right)
\notag\\
&\qquad
-r_n(t)-B_{n,L}.
\label{eq:gwsbmLeakageLowerBound}
\end{align}

Finally, using
\(M_{+,n}+M_{-,n}\overset{\mathrm d}{=}
U_m+V_m-2G/\sqrt{n-2}\)
in~\eqref{eq:gwsbmLeakageUpperBound} and
\eqref{eq:gwsbmLeakageLowerBound} proves
\eqref{eq:gwsbmFiniteLeakageSandwich}.

Since the maximum-likelihood estimator is an equaliser and minimax,
\(\mathcal R_n^\star
=1-\exp\{\ml{W}{A}\}/|\Theta_n|\).
Taking complements in the Maximal Leakage bounds gives
\eqref{eq:gwsbmFiniteRiskSandwich}.
\end{proof}
To the best of our knowledge,~\Cref{thm:gwsbmFiniteLeakage} provides
the first explicitly stated two-sided finite-sample bounds on the exact-recovery minimax risk
$\mathcal R_n^\star$ in the Gaussian weighted stochastic block model.
Previous results characterise the asymptotic exact-recovery threshold
through the order-$1/2$ R\'enyi divergence or the signal-to-noise ratio,
but do not provide a computable two-sided nonasymptotic estimate  of the minimax risk in the form given
above~\cite{JogLoh2015,PandeyKulkarni2024}. The finite-sample bounds in \Cref{thm:gwsbmFiniteLeakage} immediately
recover the sharp first-order exact-recovery transition.

\begin{corollary}
\label{cor:sbmStrongConverse}
Suppose that \(\operatorname{SNR}_n\to\gamma\neq1\). Then
\[
R_n^\star\longrightarrow
\begin{cases}
1, & \gamma<1,\\
0, & \gamma>1.
\end{cases}
\]
\end{corollary}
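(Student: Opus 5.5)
The plan is to read both limits off the risk sandwich \eqref{eq:gwsbmFiniteRiskSandwich} of \Cref{thm:gwsbmFiniteLeakage}. We fix $L=1$ and choose $t=t_n:=\sqrt{6\log n}$. We then show that every error term vanishes, and that the remaining Gaussian extreme-value probability tends to $1$ or to $0$ according to the sign of $\gamma-1$.

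\emph{Step 1: scalings.} From \eqref{eq:gwsbmSNR} and $\lambda_n=\Delta_n/(2\tau)$ we get $\lambda_n^2=2\operatorname{SNR}_n\log n/n$. Hence
\[
\vartheta_n=\frac{(n-1)\lambda_n}{\sqrt{n-2}}=\sqrt{2\operatorname{SNR}_n\log n}\,\bigl(1+o(1)\bigr),
\qquad
\frac{2\vartheta_n}{\sqrt{2\log n}}\to2\sqrt\gamma .
\]
The Gaussian tail bound $\overline\Phi(x)\le e^{-x^2/2}$ gives $r_n(t_n)\le n^2n^{-3}\to0$. Since $L=1$ is fixed and $\lambda_n+t_n=O(\sqrt{\log n})$, we also have $\varepsilon_{n,1}(t_n)=O(\sqrt{\log n/n})\to0$. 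Likewise $2G/\sqrt{n-2}\to0$ in probability.

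\emph{Step 2: extreme values.} Because $m=n/2$, the classical Gaussian maximum asymptotics give $U_m/\sqrt{2\log n}\to1$ and $V_m/\sqrt{2\log n}\to1$ in probability. Therefore
\[
\frac{U_m+V_m-2G/\sqrt{n-2}\mp\varepsilon_{n,1}(t_n)}{\sqrt{2\log n}}\longrightarrow2 \quad\text{in probability.}
\]
If $\gamma<1$, the threshold ratio satisfies $2\sqrt\gamma<2$. The probability in the lower bound of \eqref{eq:gwsbmFiniteRiskSandwich} therefore tends to $1$. Together with $r_n(t_n)\to0$, this gives $\mathcal R_n^\star\to1$. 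If $\gamma>1$, then $2\sqrt\gamma>2$. The probability in the upper bound therefore tends to $0$, and it remains only to show $B_{n,1}\to0$.

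\emph{Step 3: the large-exchange term (main obstacle).} We bound $\binom mk^2\le(em/k)^{2k}$ and
\[
\overline\Phi\bigl(2\lambda_n\sqrt{k(m-k)}\bigr)\le\exp\bigl(-2\lambda_n^2k(m-k)\bigr)=\exp\Bigl(-\tfrac{4\operatorname{SNR}_n\,k(m-k)}{n}\log n\Bigr)=n^{-2\operatorname{SNR}_n k(1-k/m)}.
\]
We pick $\epsilon>0$ small enough that $\gamma(1-\epsilon)>1$, and we split the sum at $k=\epsilon m$.

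For $2\le k\le\epsilon m$, each term is at most $\exp\{2k[\log(em)-\gamma(1-\epsilon)(1+o(1))\log n]\}=n^{-2kc}$ for some $c>0$. The sum of these terms is $O(n^{-4c})\to0$.

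For $\epsilon m<k\le m/2$, we have $k(m-k)/n\ge k/4$ and $\log(em/k)\le\log(e/\epsilon)$. Each term is therefore at most $\exp\{2k\log(e/\epsilon)-\gamma k\log n(1+o(1))\}$. This decays super-polynomially, so at most $m$ such terms still sum to $o(1)$.

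Hence $B_{n,1}\to0$, and $\mathcal R_n^\star\to0$ when $\gamma>1$. The delicate point is the regime of moderate $k$, where the binomial entropy must be beaten by the factor $(1-k/m)$ in the exponent. If the two-range split turns out to be too lossy near $k=\epsilon m$, I would instead take $L=L_n$ growing slowly, with $L_n\sqrt{\log n/n}\to0$, so that $\varepsilon_{n,L_n}(t_n)$ still vanishes.
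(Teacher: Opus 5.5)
Your proposal is correct and follows the route the paper intends. The paper gives no separate proof and only remarks that the corollary is read off from the sandwich \eqref{eq:gwsbmFiniteRiskSandwich}; your choices $L=1$ and $t_n=\sqrt{6\log n}$, together with the two-range estimate of $B_{n,1}$ (using $\gamma(1-\epsilon)>1$ for $2\le k\le\epsilon m$ and $k(m-k)/n\ge k/4$ beyond), correctly supply the omitted details, so the fallback with a growing $L_n$ is not needed.
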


The result represents a strong converse: when \(\gamma<1\), the
optimal probability of exact recovery \(1-R_n^\star\) converges to zero.
When \(\gamma>1\), the maximum-likelihood estimator achieves exact
recovery with probability tending to one. The example we just explored represents a detection problem. 
The next example employs instead Maximal Leakage in a continuous estimation setting.
\subsubsection{Continuous Estimation: Low-rank denoising with isotropic bounded-energy noise}
\label{sec:boundedEnergyLowRank}

Low-rank matrix denoising asks for estimation of an unknown matrix from a noisy
version of all its entries, under the structural assumption that its effective
dimension is much smaller than the ambient dimension. In the canonical
Gaussian model, singular-value shrinkage and thresholding are known to attain
sharp or asymptotically sharp Frobenius risks; see, for example,
\cite{DonohoGavish2014}. Uniform laws on matrix $p$-balls arise in asymptotic
convex geometry~\cite{KaufmannThale2022}, and rotationally invariant matrix
noise has also been considered in asymptotic spiked models
\cite{DudejaLiuMa2026}. The model below combines these two themes: the signal
is an arbitrary low-rank matrix, while the noise is isotropic and supported
on a bounded Frobenius ball.

Let \(D:=d_1d_2\), \(1\leq r\leq d_1\wedge d_2\), \(\sigma>0\), and
\(\Theta_r:=\{M\in\mathbb R^{d_1\times d_2}:
\operatorname{rank}(M)\leq r\}\).
We observe
\begin{equation}
    Y=M+\tau_D U,
    \qquad
    \tau_D:=\sigma\sqrt{D+2},
    \qquad
    U\sim\operatorname{Unif}\{A:\|A\|_F\leq1\},
    \qquad
    M\in\Theta_r.
\label{eq:boundedEnergyLowRankModel}
\end{equation}
After vectorisation, $U$ is uniform on the Euclidean unit ball in
$\mathbb R^D$. In particular,
\(\operatorname{Cov}(\operatorname{vec}(\tau_DU))=\sigma^2I_D\),
so $\sigma^2$ has the same per-coordinate variance interpretation as in white
Gaussian matrix denoising. Unlike white Gaussian noise, however, the entries
are dependent, the law is rotationally invariant, and the perturbation
satisfies \(\|Y-M\|_F\leq\sigma\sqrt{D+2}\) almost surely.
Equivalently, conditional on $M$, the observation $Y$ is uniformly distributed
on the Frobenius ball
\(B_F(M,\tau_D):=\{A\in\mathbb R^{d_1\times d_2}:
\|A-M\|_F\leq\tau_D\}\).
We denote this distribution by $P_M$.

Standard information-theoretic lower bounds for high-dimensional matrix
estimation usually separate two difficulties. The first is the size of the
parameter space: one constructs a finite Frobenius packing of low-rank
matrices, reduces estimation to multiple testing, and applies Fano's
inequality after controlling pairwise Kullback--Leibler or R\'enyi
divergences. The second is the confidence level, which is commonly handled
by a two-point Le Cam argument. For additive white Gaussian noise, writing
\(P_M^{\mathrm G}:=
\mathcal N_D(\operatorname{vec}M,\sigma^2I_D)\),
the first step is particularly effective because
\(D_{\mathrm{KL}}(P_M^{\mathrm G}\|P_{M'}^{\mathrm G})
=\|M-M'\|_F^2/(2\sigma^2)\),
so the packing geometry translates directly into a bound depending on the
effective number of free parameters.

The bounded-support channel in
\eqref{eq:boundedEnergyLowRankModel} behaves differently. Distinct signal
matrices generate uniform distributions on translated Frobenius balls with
nonidentical supports. Consequently,
\begin{equation}
D_{\mathrm{KL}}(P_M\|P_{M'})=\infty,
\qquad
D_\alpha(P_M\|P_{M'})=\infty,
\quad \alpha>1,
\label{eq:lowRankPairwiseDivergencesInfinite}
\end{equation}
for every $M\neq M'$. Hence the usual pairwise-KL or pairwise-R\'enyi packing
argument is vacuous. In contrast, the Maximal Leakage associated with a
compact continuous prior is finite and reduces to the volume of a tubular
neighbourhood of the prior support, yielding a nonvacuous minimax-quantile
lower bound.

The goal is to estimate the full signal matrix $M$. We measure performance using the
normalised squared Frobenius loss
\begin{equation}
    \ell_F(\widehat M,M)
    :=
    \frac1D\|\widehat M-M\|_F^2.
\label{eq:boundedEnergyLowRankLoss}
\end{equation}
Thus $\ell_F(\widehat M,M)$ is the average squared estimation error per
matrix entry. We apply the definitions of
\Cref{def:minimax-quantile,def:lower-minimax-quantile} with
\(\Theta=\Theta_r\), \(X=Y\), \(P_{X\mid\theta}=P_M\), and
\(\ell=\ell_F\).
Accordingly, $\mathcal M(\delta)$ and $\mathcal M_-(\delta)$ below are error
thresholds for the present low-rank denoising problem, not probabilities of
exact recovery.

Put
\begin{equation}
    k:=r\max(d_1,d_2),
    \qquad
    s_r:=r(d_1+d_2-r),
\label{eq:lowRankDimensions}
\end{equation}
so that $k\leq s_r\leq2k$. The number $s_r$ is the dimension of the
rank-$r$ stratum. Moreover, $k$ is the largest possible dimension of a
linear space consisting entirely of matrices of rank at most $r$, by the
bounded-rank subspace theorem of Flanders~\cite{Flanders1962}.

Let \(\kappa_j:=\pi^{j/2}/\Gamma(j/2+1)\), \(j\geq0\),
denote the volume of the Euclidean unit ball in $\mathbb R^j$, with
$\kappa_0=1$. For $\delta\in(0,1/2)$, define
$q_{D,\delta}\in(0,1)$ by
\begin{equation}
    \mathbb P(U_1\geq q_{D,\delta})=\delta,
\label{eq:ballCoordinateQuantile}
\end{equation}
where $U_1$ is any fixed coordinate of a vector uniformly distributed on
the unit ball in $\mathbb R^D$.

\begin{lemma}
\label{lem:ballVolumeAndCapBounds}
Uniformly over $1\leq k\leq D$,
\begin{equation}
\frac{k}{eD}
\leq
\left(
\frac{\kappa_D}{\kappa_k\kappa_{D-k}}
\right)^{2/k}
\leq
\frac{2k}{D}.
\label{eq:lowRankVolumeRatioBounds}
\end{equation}
Moreover, there are universal constants $c_1,C_1>0$ such that
\begin{equation}
c_1\min\left\{\frac{\log(1/\delta)}D,1\right\}
\leq
q_{D,\delta}^2
\leq
C_1\min\left\{\frac{\log(1/\delta)}D,1\right\},
\qquad
0<\delta\leq\frac14.
\label{eq:lowRankBallCoordinateRate}
\end{equation}
\end{lemma}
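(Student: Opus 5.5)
For the volume-ratio bounds in \eqref{eq:lowRankVolumeRatioBounds}, rewrite the ratio with the Beta function and bound it by counting rather than by Stirling's formula. Since \(\kappa_j=\pi^{j/2}/\Gamma(j/2+1)\), one has
\[
\frac{\kappa_D}{\kappa_k\kappa_{D-k}}=\frac{\Gamma(k/2+1)\Gamma((D-k)/2+1)}{\Gamma(D/2+1)}=\frac{1}{(D/2+1)\,\mathrm{B}(k/2+1,(D-k)/2+1)^{-1}\cdots},
\]
which is cleanest to regard as a generalised inverse binomial coefficient \(\binom{D/2}{k/2}^{-1}\), where \(\binom{x}{y}:=\Gamma(x+1)/(\Gamma(y+1)\Gamma(x-y+1))\).

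The goal then becomes \((k/(2D))^{k/2}\lesssim\binom{D/2}{k/2}^{-1}\le (ek/D)^{k/2}\) in the form needed. Write \(a=D/2\) and \(b=k/2\).

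The upper bound on the ratio is equivalent to \(\binom{a}{b}\ge (a/b)^b\), which gives the ratio raised to \(2/k\) at most \(k/D\le 2k/D\). To prove it, use log-convexity of \(\Gamma\): the function \(y\mapsto \log\binom{a}{y}-y\log(a/y)\) should be controlled at the endpoints and its concavity used in between. Alternatively, use the product representation for integer \(a,b\) and interpolate by log-convexity for half-integers, accepting the slack factor \(2\).

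The lower bound is equivalent to \(\binom{a}{b}\le (ea/b)^b\). This follows from \(\binom{a}{b}\le a^b/\Gamma(b+1)\), which holds by Gautschi/Wendel: \(\Gamma(a+1)/\Gamma(a-b+1)\le a^b\) for \(0\le b\le a\). Combine this with \(\Gamma(b+1)\ge (b/e)^b\). This yields the ratio raised to \(2/k\) at least \(b/(ea)=k/(eD)\).

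The edge cases \(k=D\) (where \(\kappa_0=1\)) and small \(b<1\) need checking separately. The main nuisance is making the half-integer Gamma inequalities hold uniformly.

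For the cap quantile in \eqref{eq:lowRankBallCoordinateRate}, use the explicit marginal law of a coordinate of the uniform distribution on the unit ball in \(\mathbb R^D\). Its density is
\[
f_D(u)=\frac{\kappa_{D-1}}{\kappa_D}(1-u^2)^{(D-1)/2},\qquad |u|\le1,
\]
so \(\delta=\int_q^1 f_D\). Bound \(\kappa_{D-1}/\kappa_D\asymp\sqrt D\) by the same Gamma-ratio estimate.

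For the upper bound on \(q^2\), use \((1-u^2)^{(D-1)/2}\le e^{-(D-1)u^2/2}\). This gives \(\delta\le C\sqrt D\int_q^\infty e^{-(D-1)u^2/2}\,du\le C' e^{-(D-1)q^2/2}\), after handling the case \(q\sqrt D\lesssim1\) via \(\delta\le1/2\). Hence \(q^2\le C\log(1/\delta)/D\), and trivially \(q^2\le1\).

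For the lower bound, show that \(P(U_1\ge q)>\delta\) whenever \(q^2=c\min\{\log(1/\delta)/D,1\}\). On \([q,2q]\) with \(2q\le 1/\sqrt2\), use \(1-u^2\ge e^{-2u^2}\), so the tail is at least \(c\sqrt D\,q\,e^{-4Dq^2}\). In the regime \(Dq^2\ge1\) this is at least \(e^{-5Dq^2}\ge\delta\) for small \(c\). The regime where \(\log(1/\delta)\) is comparable to \(D\) needs \(q\) bounded below by a constant. There the condition \(\delta\le1/4\) together with the same estimate at fixed \(q=c\) gives \(\delta\le e^{-c'D}\), i.e. \(\log(1/\delta)/D\ge c'\). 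The regime \(Dq^2<1\) is covered because \(\delta\le1/4<P(U_1\ge c/\sqrt D)\) for small \(c\), by a central-limit-scale lower bound from the density.

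Matching these regimes cleanly with universal constants is the main obstacle. I would handle it by splitting on whether \(\log(1/\delta)\le D\).
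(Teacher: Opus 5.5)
Your cap-quantile argument is essentially the paper's: exponential domination of $(1-u^2)^{(D-1)/2}$ for the upper bound, and for the lower bound $q_{D,\delta}^2\gtrsim 1/D$ from $\delta\le 1/4$ followed by integrating the density over a short interval above $q$. It goes through once you split according to whether $Dq^2$ is below or above a small absolute constant. The small case, where $\delta\le1/4$ enters, also covers the capped regime when $D$ is bounded.

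The genuine gap is in the volume-ratio bounds, and it sits exactly at $k=1$, i.e.\ $b=1/2$. That case belongs to the uniform claim, and you postpone it. There, both of your intermediate inequalities are false, not merely unverified.

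First, the upper bound is not equivalent to $\binom{a}{b}\ge(a/b)^b$; that is strictly stronger, and it fails at $k=1$. For $D=2$, $R:=\kappa_2/\kappa_1^2=\pi/4$, so $R^{2/k}=\pi^2/16\approx0.62>k/D=1/2$. For $k=1$ and large $D$, $R^2\sim\pi/(2D)>1/D$. So the factor $2$ in $2k/D$ is genuinely needed.

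Your concavity route cannot work either. The function $g(y)=\log\binom{a}{y}-y\log(a/y)$ vanishes at $y=0$ and $y=a$, but $g(y)=y\log y+O(y)<0$ as $y\downarrow0$. Hence $g$ is not concave on $[0,a]$. The fallback (``interpolate by log-convexity, accepting the slack factor $2$'') is not an argument.

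Second, in the lower bound, $\Gamma(a+1)/\Gamma(a-b+1)\le a^b$ holds for $b\ge1$ but is reversed for $0<b<1$ by Gautschi's inequality. For example, $\Gamma(2)/\Gamma(3/2)=2/\sqrt\pi>1$.

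The missing idea is the paper's case split, which is also where the constant $2$ comes from. In your notation ($a=D/2$, $b=k/2$):
\begin{itemize}
\item If $k\ge D/2$, then $2k/D\ge1$ and the trivial bound $R\le1$ suffices. This follows from $B_D\subseteq B_k\times B_{D-k}$, or from concavity of $y\mapsto\log\binom{a}{y}$ with zero endpoints, which is the correct use of your concavity idea.
\item If $2\le k<D/2$, then $\Gamma(a+1)/\Gamma(a-b+1)\ge(a-b)^b$ and $\Gamma(b+1)\le b^b$ give $R^{2/k}\le b/(a-b)\le 2b/a=2k/D$.
\item If $k=1$, Gautschi's $\Gamma(a+1)/\Gamma(a+1/2)\ge a^{1/2}$ gives $R^2\le\pi/(4a)=\pi/(2D)\le 2/D$.
\end{itemize}

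For the lower bound the paper avoids special cases altogether. Write $R=b\int_0^1t^{b-1}(1-t)^{a-b}\,\mathrm dt$ and restrict the integral to $[0,b/a]$. This yields $R\ge(b/a)^b(1-b/a)^{a-b}\ge(b/(ea))^b$ via $\log y\le y-1$, uniformly in $1\le k\le D$.
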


\begin{proof}
Put $a=k/2$, $b=(D-k)/2$, and $n=a+b$. Then
\(R_{a,b}:=\kappa_D/(\kappa_k\kappa_{D-k})
=\Gamma(a+1)\Gamma(b+1)/\Gamma(n+1)
=a\int_0^1t^{a-1}(1-t)^b\,\mathrm dt\).
Restricting the integral to $[0,a/n]$ and using
$y\log y\geq y-1$ gives
\(R_{a,b}\geq(a/n)^a(1-a/n)^b\geq(a/(en))^a\).
For the reverse bound, $R_{a,b}\leq1$ because
$B_D\subseteq B_k\times B_{D-k}$. This suffices when $a\geq b$. If
$b\geq a\geq1$, the standard Gamma-ratio bounds
\(\Gamma(n+1)/\Gamma(b+1)\geq b^a\) and
\(\Gamma(a+1)\leq a^a\) give
\(R_{a,b}^{1/a}\leq a/b\leq2a/n\).
The only remaining value is $a=1/2$; Wendel's inequality gives
\(\Gamma(b+3/2)/\Gamma(b+1)\geq(b+1/2)^{1/2}\),
and hence the same conclusion. Taking the power $1/a=2/k$ proves
\eqref{eq:lowRankVolumeRatioBounds}.

For the cap bound, write
\(T_D(t):=\mathbb P(U_1\geq t)
=c_D\int_t^1(1-x^2)^{(D-1)/2}\,\mathrm dx\), where
\(c_D:=\kappa_{D-1}/\kappa_D\asymp\sqrt D\).
Let $A=(D+1)/2$ and $G_D(t)=\tfrac12(1-t^2)^A$. The functions
$G_D$ and $T_D$ agree at $0$ and $1$, while
\((G_D-T_D)'(t)=(c_D-At)(1-t^2)^{A-1}\).
Since $c_D\leq A$, this derivative changes sign at most once, so
\(T_D(t)\leq G_D(t)\leq\frac12\exp(-At^2)\).
At $t=q_{D,\delta}$ this proves the upper bound in
\eqref{eq:lowRankBallCoordinateRate}.

For the lower bound, $\delta\leq1/4$ and the bound
$c_D\leq C\sqrt D$ imply
\(1/4\leq\mathbb P(0\leq U_1\leq q_{D,\delta})
\leq C\sqrt D\,q_{D,\delta}\),
so $q_{D,\delta}^2\geq c/D$. If $q_{D,\delta}\leq1/2$, integrate the
cap density over the interval of length $h=(8\sqrt D)^{-1}$ beginning at
$q_{D,\delta}$. Since $q_{D,\delta}+h\leq5/8$ and
$\log(1-x^2)\geq-2x^2$ on this interval,
\(\delta=T_D(q_{D,\delta})\geq
c\exp\{-CDq_{D,\delta}^2\}\).
Combining this with $q_{D,\delta}^2\geq c/D$ gives
\(q_{D,\delta}^2\geq c'\log(1/\delta)/D\).
If $q_{D,\delta}>1/2$, the claimed capped bound is immediate. This proves
\eqref{eq:lowRankBallCoordinateRate}.
\end{proof}

We are now ready to state the main finite-sample result for bounded-energy
low-rank denoising.

\begin{theorem}
\label{prop:boundedEnergyLowRankQuantile}
For every $\delta\in(0,1/2)$,
\begin{equation}
\mathcal M_-(\delta)
\geq
\frac{\tau_D^2}{D}
\max\left\{
(1-\delta)^{2/k}
\left(
\frac{\kappa_D}{\kappa_k\kappa_{D-k}}
\right)^{2/k},
q_{D,\delta}^2
\right\}.
\label{eq:boundedEnergyLowRankExactLower}
\end{equation}
Moreover, there are universal constants $0<c<C<\infty$ such that, for every
$\delta\in(0,1/4]$,
\begin{equation}
c\sigma^2
\min\left\{
\frac{s_r+\log(1/\delta)}{D},1
\right\}
\leq
\mathcal M_-(\delta)
\leq
\mathcal M(\delta)
\leq
C\sigma^2
\min\left\{
\frac{s_r+\log(2/\delta)}{D},1
\right\}.
\label{eq:boundedEnergyLowRankRate}
\end{equation}
Thus the full finite-sample quantile rate is determined by the intrinsic
low-rank dimension and the confidence level, up to universal constants.
    
\end{theorem}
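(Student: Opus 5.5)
The plan is to prove the exact lower bound \eqref{eq:boundedEnergyLowRankExactLower} by treating its two terms separately: the first comes from Maximal Leakage with a continuous prior, the second from a two-point Le Cam argument. The rate \eqref{eq:boundedEnergyLowRankRate} then follows from \Cref{lem:ballVolumeAndCapBounds}, together with a matching estimator for the upper bound.

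\emph{First term (dimension).} By Flanders' construction there is a $k$-dimensional linear subspace $S\subseteq\Theta_r$; for instance, take the matrices supported on the first $r$ rows (or columns), so that $\dim S=r\max(d_1,d_2)=k$. Let $W$ be uniform on the Frobenius ball $S\cap B_F(0,R)$ of radius $R$. Conditional on $W$, the observation $Y$ has density $\mathbf 1\{\|y-W\|_F\le\tau_D\}/(\kappa_D\tau_D^D)$. By \eqref{eq:maximalLeakageIntegralDefinition},
\[
\exp\{\ml{W}{Y}\}=\mathrm{vol}_D\bigl(\{y:\operatorname{dist}(y,S\cap B_F(0,R))\le\tau_D\}\bigr)/(\kappa_D\tau_D^D).
\]
Decomposing $y$ into its component in $S$ and its component in $S^\perp$, the tube is contained in $B^{S}(0,R+\tau_D)\times B^{S^\perp}(0,\tau_D)$. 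This gives
\[
\exp\{\ml{W}{Y}\}\le \kappa_k(R+\tau_D)^k\kappa_{D-k}\tau_D^{D-k}/(\kappa_D\tau_D^D).
\]
For the small-ball function, with $\rho=D\epsilon^2$ (so that $\ell_F<\rho$ means $\|\widehat M-M\|_F<\sqrt{D\rho}$ after the rescaling), we get
\[
L_W(\rho)\le \sup_\vartheta P_W(\|\vartheta-W\|_F<\sqrt{D\rho})\le (\sqrt{D\rho}/R)^k,
\]
because projecting $\vartheta$ onto $S$ only decreases distances. Applying \Cref{cor:maximalLeakageQuantile}, the success probability is at most
\[
\Bigl(\tfrac{\sqrt{D\rho}(R+\tau_D)}{R\tau_D}\Bigr)^k\frac{\kappa_k\kappa_{D-k}}{\kappa_D}.
\]
Letting $R\to\infty$ and requiring this to be $<1-\delta$ gives
\[
\rho<\frac{\tau_D^2}{D}(1-\delta)^{2/k}\Bigl(\frac{\kappa_D}{\kappa_k\kappa_{D-k}}\Bigr)^{2/k}.
\]
Taking the supremum over such $\rho$ yields the first term.

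\emph{Second term (confidence).} Take $M_1=0$ and $M_2=2h\,e$, where $e$ is a rank-one unit matrix and $h<\tau_D q_{D,\delta}$. Success at threshold $\rho=h^2/D$ then has disjoint sets (triangle inequality). Rather than using TV via \Cref{cor:leCam}, which would give the wrong dependence on $\delta$, I would use a direct argument: on the event $\{\langle Y-M_1,e\rangle\ge h\}$ under $P_{M_1}$, reflect across the hyperplane $\langle\cdot,e\rangle=h$. This maps $P_{M_1}$ to $P_{M_2}$ and preserves the decision boundary symmetry. Hence, for any estimator, the sum of the two failure probabilities is at least $P(\tau_DU_1\ge h)$, and the maximum of the two is at least half of that. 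More cleanly, use the coupling $Y=M_1+\tau_DU$ and $Y'=$ the reflection of $Y$. Whenever $\tau_D\langle U,e\rangle\ge h$ the pair produces the same observation distribution symmetrically, so one of the hypotheses fails with probability at least $P(U_1\ge h/\tau_D)>\delta$ when $h/\tau_D<q_{D,\delta}$. I expect a careful version of this reflection argument to be the delicate point, since it must give failure $>\delta$ rather than $>\delta/2$. If it does not, I would fall back to establishing the inequality with $q_{D,2\delta}$, which is absorbed into constants in \eqref{eq:boundedEnergyLowRankRate} anyway.

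\emph{Rate.} For the lower bound, \Cref{lem:ballVolumeAndCapBounds} together with $\tau_D^2/D\asymp\sigma^2$ and $(1-\delta)^{2/k}\ge 1/4$ turns the first term into $\gtrsim\sigma^2 k/D\asymp\sigma^2 s_r/D$, and the second term into $\gtrsim\sigma^2\min\{\log(1/\delta)/D,1\}$; the maximum of the two is comparable to their sum. For the upper bound, $\mathcal M(\delta)\le\sigma^2(D+2)/D\cdot\min\{\cdot\}$-type trivial bounds handle the capped case: the estimator $\widehat M=0$ is inadequate, but $\widehat M=Y$ always has loss $\le\tau_D^2/D\le3\sigma^2$. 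In the nontrivial regime, take $\widehat M$ to be the best rank-$r$ approximation of $Y$, which gives $\|\widehat M-M\|_F\le 2\|P_{T}(\tau_DU)\|$-type bounds. Concretely, $\|\widehat M-M\|_F^2\lesssim r\|\tau_DU\|_{\mathrm{op}}^2$. Write $U=g/\|g\|\cdot B^{1/D}$ with $g$ Gaussian. Then $\tau_D U$ is dominated by $\sigma\sqrt{D}\,g/\|g\|$, whose operator norm is $\lesssim\sigma(\sqrt{d_1}+\sqrt{d_2}+\sqrt{\log(2/\delta)})$ with probability $1-\delta$, by Gaussian concentration and the concentration of $\|g\|$ at $\sqrt D$. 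This yields loss $\lesssim\sigma^2(s_r+r\log(2/\delta))/D$. To remove the factor $r$ on the $\log$ term, I would instead use the Gaussian-width bound for the tangent cone: $\sup_{\text{rank}\le 2r,\|A\|_F=1}\langle g,A\rangle$ concentrates at $\lesssim\sqrt{s_r}$ with a $\sqrt{\log(2/\delta)}$ deviation, via Borell–TIS. Since $\|\widehat M-M\|_F\le 2\sup_{A}\langle\tau_DU,A\rangle$ for the rank-constrained least-squares estimator, this gives the claimed $C\sigma^2(s_r+\log(2/\delta))/D$, capped at $3\sigma^2$.
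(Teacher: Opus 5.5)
\textbf{The gap is in the confidence term $q_{D,\delta}^2$ of \eqref{eq:boundedEnergyLowRankExactLower}.} Your reflection count uses only one of the two spherical caps.

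Take $M_1=0$ and $M_2=2he$. The lens $(M_1+\tau_DB_D)\cap(M_2+\tau_DB_D)$ is the union of two pieces: the cap $\{y:\|y\|_F\le\tau_D,\ \langle y,e\rangle\ge h\}$ of the first ball, and its mirror image. So the lens has probability $2\,\mathbb P(U_1\ge h/\tau_D)$ under either hypothesis, and both densities equal $1/(\kappa_D\tau_D^D)$ on it. The success sets $S_i=\{\|\widehat M-M_i\|_F<h\}$ are disjoint, so every point of the lens lies in $S_1^c$ or in $S_2^c$. Hence
\[
P_{M_1}(S_1^c)+P_{M_2}(S_2^c)\ge 2\,\mathbb P(U_1\ge h/\tau_D).
\]
The larger failure probability therefore exceeds $\delta$ whenever $h<\tau_Dq_{D,\delta}$.

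As written, you lower-bound the sum only by $\mathbb P(\tau_DU_1\ge h)$. That gives $\max\ge\frac12\mathbb P(U_1\ge h/\tau_D)$, which yields only $q_{D,2\delta}$. The coupling sentence does not repair this, because $Y$ and its reflection are different observations.

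Your fallback to $q_{D,2\delta}$ does not prove \eqref{eq:boundedEnergyLowRankExactLower}. Even for \eqref{eq:boundedEnergyLowRankRate} it needs extra care. We have $q_{D,1/2}=0$, and \Cref{lem:ballVolumeAndCapBounds} only covers $2\delta\le 1/4$. So for $\delta\in(1/8,1/4]$ you would have to rely on the dimension term alone, using $s_r\ge1$.

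Your dismissal of \Cref{cor:leCam} is also mistaken. It is exactly the route the paper takes, and it loses nothing. For two uniform laws on balls of equal volume,
\[
1-\operatorname{TV}(P_{M_1},P_{M_2})=\frac{\mathrm{vol}(\text{lens})}{\mathrm{vol}(\tau_DB_D)}=2\,\mathbb P(U_1\ge h/\tau_D).
\]
Hence $(1-\operatorname{TV})/2=\mathbb P(U_1\ge h/\tau_D)>\delta$, which is the same computation as the corrected reflection count. Letting $h\uparrow\tau_Dq_{D,\delta}$ then gives the second term.

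\textbf{The rest is sound.} For the dimension term, you replace the paper's exact Steiner polynomial $\Lambda_{D,k}$ with the containment $B_S(R)+\tau_DB_D\subseteq B_S(R+\tau_D)\times B_{S^\perp}(\tau_D)$. This gives
\[
\exp\{\ml{W}{Y}\}\le\frac{\kappa_k\kappa_{D-k}}{\kappa_D}\,(1+R/\tau_D)^k,
\]
which has the same leading coefficient. That coefficient is all that survives as $R\to\infty$, so this is a legitimate and simpler argument. The exact Steiner identity is needed only for the diffuse-prior discussion in \Cref{rmk:lowRankNonKL}.

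Your refined upper-bound route, via $Z_{2r}$, Gaussian concentration and the lower tail of $\|G\|_F$, is the paper's. Make two points explicit. First, $\|G\|_F\ge\sqrt{D/2}$ requires $\log(2/\delta)\lesssim D$. Second, in the complementary regime the minimum in \eqref{eq:boundedEnergyLowRankRate} is bounded below by a constant, so a deterministic bound suffices: your $\widehat M=Y$ with loss at most $3\sigma^2$, or the paper's $\|\widehat M_r-M\|_F\le2\tau_D$.
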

Before going through the proof, let us comment on the result. The two terms in the lower bound represent different statistical
obstructions. The first lower bound captures the intrinsic dimensional cost of estimating
a rank-\(r\) matrix. At fixed confidence, it is of order
\(\sigma^2s_r/D\), where \(s_r=r(d_1+d_2-r)\) is the dimension of the
rank-\(r\) matrix model. The second lower bound captures the additional
cost of requiring failure probability at most \(\delta\), since
\(q_{D,\delta}^2\asymp
\min\{\log(1/\delta)/D,1\}\). Because
\(\max\{a,b\}\geq(a+b)/2\), the two obstructions combine to produce the
rate \(\sigma^2\min\{(s_r+\log(1/\delta))/D,1\}\).
The appearance of the low-rank dimension is consistent with classical
matrix-denoising theory under independent Gaussian noise, where minimax
mean-squared risks and the performance of singular-value thresholding have
been studied extensively; see, for example,
\cite{DonohoGavish2014}. The theorem does not claim a new low-rank
dimension scaling. Its application-specific novelty is instead the
finite-sample minimax-quantile characterisation for isotropic uniform
Frobenius-ball noise, including its explicit dependence on \(\delta\).
Existing work on uniform matrix-ball distributions focuses on their
probabilistic and spectral asymptotics \cite{KaufmannThale2022}, while work
on rotationally invariant matrix noise studies asymptotic spiked models and
algorithmic optimality \cite{DudejaLiuMa2026}. To our knowledge, neither
line of work gives the present two-sided finite-sample minimax-quantile
bound. The lower-bound argument is also qualitatively different from the
usual Gaussian packing proof: all nontrivial pairwise KL and R\'enyi
divergences are infinite, whereas  Maximal Leakage yields the correct lower bound through the exact Steiner
tube-volume formula.
\begin{proof}
We begin with the structural term, for which Maximal Leakage gives a
closed-form geometric converse. The rank-constrained parameter space \(\Theta_r\) contains a
\(k\)-dimensional linear subspace \(S\) of
\(\mathbb R^{d_1\times d_2}\). Specifically, if \(d_2\ge d_1\), let \(S\) consist of the matrices
supported on the first \(r\) rows; otherwise, let \(S\) consist of those
supported on the first \(r\) columns. For \(R>0\), let \(W\) be uniformly
distributed on \(B_S(R):=\{A\in S:\|A\|_F\leq R\}\),
with respect to the \(k\)-dimensional Lebesgue measure on \(S\).
Suppose $0<\rho\leq R^2/D$, and let \(\Pi_S\) denote the orthogonal projection onto \(S\).
For every \(\vartheta\in\Theta_r\) and \(w\in S\),
\(\|w-\Pi_S\vartheta\|_F\le \|w-\vartheta\|_F\).
Hence,
\(\{w\in B_S(R):\|w-\vartheta\|_F<\sqrt{D\rho}\}
\subseteq B_S(R)\cap B_S(\Pi_S\vartheta,\sqrt{D\rho})\).
The \(k\)-dimensional volume of the latter set is at most
\(\kappa_k {D\rho}^{k/2}\). Since \(\sqrt{D\rho}\le R\), this bound is attained by taking
\(\vartheta=0\). Therefore,
\begin{equation}
    L_W(\rho)
=
\left(\frac{\sqrt{D\rho}}{R}\right)^k.
    \label{eq:lowRankPriorSmallBall}
\end{equation}

Let
\( 
B_D:=\{A\in\mathbb R^{d_1\times d_2}:\|A\|_F\le1\}
\)
denote the unit Frobenius ball. Conditional on \(W=w\), the observation
\(Y\) is uniformly distributed on \(w+\tau_D B_D\), and therefore has
Lebesgue density
\(p_{Y\mid W=w}(y)=\mathbf 1\{y\in w+\tau_D B_D\}/
(\kappa_D\tau_D^D)\). Consequently,
\(\operatorname*{ess\,sup}_{w\in B_S(R)}p_{Y\mid W=w}(y)
=\mathbf 1\{y\in B_S(R)+\tau_D B_D\}/
(\kappa_D\tau_D^D)\),
up to a Lebesgue-null boundary set. Therefore~\eqref{eq:maximalLeakageIntegralDefinition} gives
\begin{equation}
\exp\{\ml{W}{Y}\}
=
\frac{
\operatorname{Vol}_D(B_S(R)+\tau_DB_D)
}{
\kappa_D\tau_D^D
}
=:
\Lambda_{D,k}(R/\tau_D).
\label{eq:lowRankLeakageTubeRatio}
\end{equation}
The Steiner formula for parallel bodies~\cite{Schneider2014} and the
intrinsic volumes of a $k$-ball yield the exact polynomial
\begin{equation}
\Lambda_{D,k}(t)
=
\sum_{j=0}^{k}
\binom{k}{j}
\frac{\kappa_k\kappa_{D-j}}
     {\kappa_{k-j}\kappa_D}
t^j.
\label{eq:lowRankSteinerPolynomial}
\end{equation}
Indeed, the Steiner expansion is
\(\operatorname{Vol}_D(K+uB_D)=
\sum_{j=0}^k\kappa_{D-j}V_j(K)u^{D-j}\)
for a $k$-dimensional convex body $K$, while
\(V_j(B_S(R))=\binom{k}{j}\kappa_kR^j/\kappa_{k-j}\).

Combining~\eqref{eq:lowRankPriorSmallBall},
\eqref{eq:lowRankLeakageTubeRatio}, and
\eqref{eq:maximalLeakageSuccessBound} gives, for every estimator
$\widehat M$ and every $0<\rho\leq R^2/D$,
\begin{equation}
P_{WY}\!\left(
    \ell_F(\widehat M(Y),W)<\rho
\right)
\leq
\left(
    \frac{\sqrt{D\rho}}{R}
\right)^k
\Lambda_{D,k}(R/\tau_D).
\label{eq:lowRankMaximalLeakageSuccess}
\end{equation}
Define
\(
\rho_\star
:=
\frac{R^2}{D}
\left\{
    \frac{1-\delta}
         {\Lambda_{D,k}(R/\tau_D)}
\right\}^{2/k}.
\)
At $\rho=\rho_\star$, the right-hand side of
\eqref{eq:lowRankMaximalLeakageSuccess} equals $1-\delta$.  To satisfy
the strict inequality required by
\eqref{eq:maximalLeakageSuccessBound}, fix $\varepsilon\in(0,1)$ and set
$\rho_\varepsilon=(1-\varepsilon)\rho_\star$.  Then
\begin{align}
\left(
    \frac{\sqrt{D\rho_\varepsilon}}{R}
\right)^k
\Lambda_{D,k}(R/\tau_D)
&=
(1-\varepsilon)^{k/2}(1-\delta)
\\
&<
1-\delta.
\end{align}
It follows that \(\mathcal M_-(\delta)\geq\rho_\varepsilon\) for every
\(\varepsilon\in(0,1)\).
Letting \(\varepsilon\downarrow0\) gives, for every \(R>0\),
\begin{equation}
\mathcal M_-(\delta)
\ge
\frac{R^2}{D}
\left\{
    \frac{1-\delta}
         {\Lambda_{D,k}(R/\tau_D)}
\right\}^{2/k}.
\label{eq:lowRankFiniteThresholdLower}
\end{equation}
Since \(\Lambda_{D,k}\) is a polynomial of degree \(k\) with leading
coefficient
\(
\frac{\kappa_k\kappa_{D-k}}{\kappa_D},
\)
we have
\(
\Lambda_{D,k}(R/\tau_D)
\sim
\frac{\kappa_k\kappa_{D-k}}{\kappa_D}
\left(\frac{R}{\tau_D}\right)^k\text{ as }R\to\infty.
\)
Therefore, letting \(R\to\infty\) in
\eqref{eq:lowRankFiniteThresholdLower} yields
\[
\mathcal M_-(\delta)\ge
\frac{\tau_D^2}{D}(1-\delta)^{2/k}
\left(\frac{\kappa_D}{\kappa_k\kappa_{D-k}}\right)^{2/k},
\]
which is the first term in
\eqref{eq:boundedEnergyLowRankExactLower}.

For the confidence term, fix a rank-one matrix $E$ with $\|E\|_F=1$ and,
for $q\in(0,1)$, define \(M_-:=-\tau_DqE\) and
\(M_+:=\tau_DqE\).
Let $P_-$ and $P_+$ denote the distributions of $Y$ under
$M=M_-$ and $M=M_+$, respectively.
Both $M_-$ and $M_+$ belong to $\Theta_r$.  Thus, any estimator that
performs uniformly well over $\Theta_r$ must also perform well when the
unknown signal matrix is known to be one of these two values.
Let $P_-$ and $P_+$ denote the distributions of the observation $Y$
when the true signal matrix is $M_-$ and $M_+$, respectively.
$P_-$ is uniform on the ball $M_-+\tau_DB_D$, while $P_+$ is
uniform on the ball $M_++\tau_DB_D$. Both supports are translates of $\tau_DB_D$,
and the Frobenius distance between their centres is
\(\|M_+-M_-\|_F=2\tau_Dq\).
Define
\(
U_1:=\langle U,E\rangle_F,
\)
the coordinate of the noise matrix in the direction $E$.  By rotational
invariance, $U_1$ has the same distribution as the first coordinate of a
random vector uniformly distributed on the unit ball in $\mathbb R^D$.

The intersection of the two balls
\(
M_-+\tau_DB_D
\text{ and }
M_++\tau_DB_D
\)
is a symmetric lens.  The hyperplane orthogonal to $E$ through the origin
divides this lens into two congruent spherical caps.  After translating
and rescaling either ball to the unit ball, each cap corresponds to the
event $\{U_1\geq q\}$.  Consequently,
\[
\frac{
    \operatorname{Vol}_D\!\left[
        (M_-+\tau_DB_D)\cap(M_++\tau_DB_D)
    \right]
}{
    \operatorname{Vol}_D(\tau_DB_D)
}
=
2\mathbb P(U_1\geq q).
\]
Because $P_-$ and $P_+$ are uniform distributions on sets of equal
volume,
\begin{equation} 
\frac{1-\operatorname{TV}(P_-,P_+)}{2}
=
\frac{
    \operatorname{Vol}_D\!\left[
        (M_-+\tau_DB_D)\cap(M_++\tau_DB_D)
    \right]
}{
    2\operatorname{Vol}_D(\tau_DB_D)
} = \mathbb P(U_1\geq q).
\label{eq:lowRankBallOverlapTV}
\end{equation}
Define $q_{D,\delta}\in(0,1)$ by
\begin{equation}
\mathbb P(U_E\geq q_{D,\delta})=\delta,
\qquad
U_E:=\langle U,E\rangle_F.
\label{eq:lowRankCapQuantileDefinition}
\end{equation}
Since the function
\(
q\longmapsto\mathbb P(U_E\geq q)
\)
is strictly decreasing, every $q<q_{D,\delta}$ satisfies
\(
\mathbb P(U_E\geq q)>\delta.
\)

Combining this inequality with
\eqref{eq:lowRankBallOverlapTV} gives
\((1-\operatorname{TV}(P_-^{(q)},P_+^{(q)}))/2>\delta\).
Therefore,~\Cref{cor:leCam} applies at the loss threshold
$\tau_D^2q^2/D$.  Letting $q\uparrow q_{D,\delta}$ yields
\[
\mathcal M_-(\delta)
\geq
\frac{\tau_D^2}{D}q_{D,\delta}^2,
\]
proving the second term in
\eqref{eq:boundedEnergyLowRankExactLower}.

We next simplify
\eqref{eq:boundedEnergyLowRankExactLower} to obtain the lower bound in
\eqref{eq:boundedEnergyLowRankRate}. For \(0<\delta\le1/4\), we have
\((1-\delta)^{2/k}\ge9/16\). Hence
\Cref{lem:ballVolumeAndCapBounds} and
\eqref{eq:boundedEnergyLowRankExactLower} give
\[
\mathcal M_-(\delta)
\gtrsim
\frac{\tau_D^2}{D}
\max\left\{
    \frac{k}{D},
    \min\left\{\frac{\log(1/\delta)}{D},1\right\}
\right\}.
\]
Using \(k\le s_r\le2k\) and
\(\tau_D^2/D=\sigma^2(D+2)/D\asymp\sigma^2\), we obtain the lower bound
in \eqref{eq:boundedEnergyLowRankRate}.

For the upper bound, choose a measurable best rank-\(r\) approximation
of \(Y\) in Frobenius norm:
\(
\widehat M_r
\in
\operatorname*{arg\,min}_{\operatorname{rank}(A)\le r}
\|Y-A\|_F.
\)
Such a selection exists by choosing a Borel-measurable version of the
truncated singular value decomposition. For a matrix \(A\) and an integer
\(m\ge1\), define \(Z_m(A):=\sup\{\langle A,H\rangle_F:
\operatorname{rank}(H)\le m,\ \|H\|_F=1\}\).
Equivalently,
\begin{equation}
Z_m(A)
=
\left(
    \sum_{j=1}^{m\wedge d_1\wedge d_2}
    s_j(A)^2
\right)^{1/2},\label{eq:zmVariational}
\end{equation}
where \(s_1(A)\ge s_2(A)\ge\cdots\) are the singular values of \(A\).
Let \(\Delta:=\widehat M_r-M\). Since \(M\in\Theta_r\), the optimality
of \(\widehat M_r\) gives
\(
\|Y-\widehat M_r\|_F^2
\le
\|Y-M\|_F^2.
\)
Substituting \(Y=M+\tau_DU\) and expanding yields
\(
\|\Delta\|_F^2
\le
2\tau_D\langle U,\Delta\rangle_F.
\)
Moreover, \(\operatorname{rank}(\Delta)\le2r\), and hence
\(\langle U,\Delta\rangle_F\le Z_{2r}(U)\|\Delta\|_F\).
Combining the preceding inequalities, we obtain either \(\Delta=0\),
in which case the desired bound is immediate, or the following bound,
which therefore holds in all cases:
\begin{equation}
\|\widehat M_r-M\|_F
\le
2\tau_D Z_{2r}(U).
\label{eq:lowRankProjectionBasicInequality}
\end{equation}
Let \(G\in\mathbb R^{d_1\times d_2}\) have independent standard Gaussian
entries, and let \(T\) be independent of \(G\) with distribution
\(
\mathbb P(T\le t)=t^D,
\text{ } 0\le t\le1.
\)
Then
\(
U\stackrel{d}{=}T\frac{G}{\|G\|_F}.
\) The functional \(Z_{2r}\) is one-Lipschitz with respect to the Frobenius
norm. Indeed, for any matrices \(A\) and \(B\),
\[
|Z_{2r}(A)-Z_{2r}(B)|
\le
\sup_{\substack{\operatorname{rank}(H)\le2r\\
                 \|H\|_F=1}}
|\langle A-B,H\rangle_F|
\le
\|A-B\|_F.
\]
Let \(s_1(G)\ge s_2(G)\ge\cdots\) denote the singular values of \(G\). 
Since \(s_j(G)\le s_1(G)=\|G\|_{\mathrm{op}}\), it follows from~\Cref{eq:zmVariational} that
\(Z_{2r}(G)\le\sqrt{2r}\,\|G\|_{\mathrm{op}}\).
For a \(d_1\times d_2\) matrix with independent standard Gaussian
entries, the standard operator-norm bound \cite[Proposition~10.1]{HalkoMartinssonTropp2011} gives
\(\mathbb E\|G\|_{\mathrm{op}}\le\sqrt{d_1}+\sqrt{d_2}\).
Consequently, \(\mathbb E Z_{2r}(G)\le
\sqrt{2r}(\sqrt{d_1}+\sqrt{d_2})\). Set
\(
x:=\log(2/\delta).
\)
Since \(Z_{2r}\) is one-Lipschitz, Gaussian concentration
\cite[Proposition~10.3]{HalkoMartinssonTropp2011} gives
\(Z_{2r}(G)\le\mathbb E Z_{2r}(G)+\sqrt{2x}\)
with probability at least \(1-\delta/2\). Moreover, since
\(\|G\|_F^2\sim\chi_D^2\), the lower chi-square tail bound
\cite[Lemma~1]{LaurentMassart2000} gives
\(
\|G\|_F^2
\ge
D-2\sqrt{Dx}
\)
with probability at least \(1-\delta/2\). Under the assumption
\(x\le D/16\), the latter bound implies
\(
\|G\|_F\ge\sqrt{D/2}.
\)

On the intersection of these two events, \(T\le1\) and the positive
homogeneity of \(Z_{2r}\) give
\(Z_{2r}(U)\le Z_{2r}(G)/\|G\|_F\). Using the preceding expectation
bound and \(\sqrt{d_1}+\sqrt{d_2}\le\sqrt{2(d_1+d_2)}\), a union bound
therefore yields, with probability at least \(1-\delta\),
\begin{equation}
Z_{2r}(U)
\le
\frac{C_2}{\sqrt D}
\left\{
\sqrt{r(d_1+d_2)}
+
\sqrt{\log(2/\delta)}
\right\},
\label{eq:lowRankUniformBallWidth}
\end{equation}
where \(C_2>0\) is a universal constant. Combining
\eqref{eq:lowRankProjectionBasicInequality} and
\eqref{eq:lowRankUniformBallWidth}, and using
\((a+b)^2\le2(a^2+b^2)\), we obtain, with probability at least
\(1-\delta\),
\[
\ell_F(\widehat M_r,M)
\le
\frac{8C_2^2\tau_D^2}{D^2}
\left\{
r(d_1+d_2)+\log(2/\delta)
\right\}
\le
C_3\sigma^2
\frac{s_r+\log(2/\delta)}{D},
\]
where we used \(r(d_1+d_2)\le2s_r\) and
\(\tau_D^2/D=\sigma^2(D+2)/D\le3\sigma^2\). Since this bound is uniform
over \(M\in\Theta_r\), it gives the required upper quantile bound when
\(\log(2/\delta)\le D/16\). It remains to consider \(\log(2/\delta)>D/16\). Since \(M\) is a
feasible rank-\(r\) approximation, the optimality of \(\widehat M_r\)
and the triangle inequality give
\[
\|\widehat M_r-M\|_F
\le
\|\widehat M_r-Y\|_F+\|Y-M\|_F
\le
2\|Y-M\|_F
\le
2\tau_D.
\]
Consequently, \(\ell_F(\widehat M_r,M)\le4\tau_D^2/D\le12\sigma^2\).
Since \(s_r \geq 0\) and we are now considering
\(\log(2/\delta)>D/16\), we have
\(\min\{(s_r+\log(2/\delta))/D,1\}\ge1/16\).
Since the minimum above is at least \(1/16\), the deterministic bound
\(\ell_F(\widehat M_r,M)\le12\sigma^2\) yields
\[
\ell_F(\widehat M_r,M)
\le
C\sigma^2
\min\left\{
\frac{s_r+\log(2/\delta)}{D},1
\right\}
\]
for a sufficiently large \(C\).
Together with the preceding high-probability bound, this proves
\eqref{eq:boundedEnergyLowRankRate}.
\end{proof}
The role of Maximal Leakage is particularly transparent in this model,
because the usual KL-based Fano argument is vacuous, as explained next.
\begin{remark}
\label{rmk:lowRankNonKL}
For distinct \(M,M'\), each translated noise ball contains a set of
positive \(D\)-dimensional volume that lies outside the other.
Consequently, \eqref{eq:lowRankPairwiseDivergencesInfinite} holds, and
the usual pairwise-divergence packing argument is vacuous. The prior-based information quantities remain finite. Indeed,
\(I(W;Y)\le\ml{W}{Y}=\log\Lambda_{D,k}(R/\tau_D)<\infty\).
By the monotonicity of Sibson mutual information in its order, the same is true
of \(I_\alpha(W;Y)\) for every finite \(\alpha>1\). More precisely, as
\(R\to\infty\),
\begin{equation}
I(W;Y)=k\log R+O(1),
\label{eq:lowRankShannonGrowth}
\end{equation}
where \(D,k\), and \(\tau_D\) are fixed. The upper bound follows because \(\Lambda_{D,k}\) is a polynomial of
degree \(k\), and hence
\(\log\Lambda_{D,k}(R/\tau_D)=k\log R+O(1)\). For the reverse bound, set
\(Z_S:=\tau_D\Pi_SU\). Since \(\Pi_SY=W+Z_S\), data processing gives
\(I(W;Y)\ge I(W;W+Z_S)\).
Since \(W\) and \(Z_S\) are independent, the additive-noise identity
gives \(I(W;W+Z_S)=h(W+Z_S)-h(Z_S)\).
The entropy-power inequality implies \(h(W+Z_S)\ge h(W)\). Therefore,
\(I(W;W+Z_S)\ge h(W)-h(Z_S)=k\log R+O(1)\),
because \(h(W)=k\log R+\log\kappa_k\), whereas \(h(Z_S)\) is finite and
independent of \(R\), where \(h\) denotes the \(k\)-dimensional differential entropy. On the other hand, since
\(L_W(\rho)=(\sqrt{D\rho}/R)^k\), the binary KL divergence satisfies
\begin{equation}
d_{\mathrm{KL}}
\left(1-\delta\middle\|L_W(\rho)\right)
=
(1-\delta)k\log R+O(1)
\label{eq:lowRankBinaryKLGrowth}
\end{equation}
as \(R\to\infty\), for fixed \(\rho>0\) and
\(\delta\in(0,1)\). Consequently,
\[I(W;Y)
-
d_{\mathrm{KL}}
\left(1-\delta\middle\|L_W(\rho)\right)
=
\delta k\log R+O(1)
\longrightarrow\infty
\qquad (R\to\infty).\]
Thus, for every fixed \(\delta>0\), the binary-KL sufficient condition
\(I(W;Y)<d_{\mathrm{KL}}(1-\delta\|L_W(\rho))\)
fails for all sufficiently large \(R\). By contrast, the Maximal Leakage bound depends on the product
\(L_W(\rho)\exp\{\ml{W}{Y}\}\). Since
\[
L_W(\rho)
=
\left(\frac{\sqrt{D\rho}}{R}\right)^k
\quad\text{and}\quad
\exp\{\ml{W}{Y}\}
=
\Lambda_{D,k}(R/\tau_D)
\sim
\frac{\kappa_k\kappa_{D-k}}{\kappa_D}
\left(\frac{R}{\tau_D}\right)^k,
\]
the leading powers of \(R^k\) cancel, and
\[
L_W(\rho)\exp\{\ml{W}{Y}\}
\longrightarrow
\frac{\kappa_k\kappa_{D-k}}{\kappa_D}
\left(\frac{\sqrt{D\rho}}{\tau_D}\right)^k.
\]
Thus the Maximal Leakage converse remains nondegenerate in the
diffuse-prior limit. Total-variation arguments are not ruled out, but
Maximal Leakage provides a particularly tractable global converse here,
reducing the information term to the volume of a parallel body.
\end{remark}

To the best of our knowledge, the two-sided
finite-sample minimax-quantile rate bounds above for uniform Frobenius-ball
noise, and their exact Maximal Leakage lower-bound, have not appeared in the literature. 

\subsection{Approximate recovery: finite-order Sibson mutual information}
\label{sec:bsc-hamming}

Maximal Leakage is naturally suited to exact recovery, where success
requires identifying a single parameter. We now consider approximate
recovery under Hamming loss, for which a finite Sibson order can give a
strictly stronger converse.

Let \(W=(W_1,\ldots,W_d)\) be uniform on \(\{-1,+1\}^d\). Conditional on
\(W=w\), we observe \(X_i=w_iZ_i\), where \(Z_1,\ldots,Z_d\) are
independent and
\[
P(Z_i=-1)=p_{i,d},
\qquad
P(Z_i=1)=1-p_{i,d},
\qquad
0<p_{i,d}<\frac12.
\]
We use the Hamming loss
\(d_{\mathrm H}(\widehat w,w)
:=\sum_{i=1}^d\1_{\{\widehat w_i\neq w_i\}}\).
For \(r\in\{0,\ldots,d\}\), define
\[
S_d^\star(r)
:=
\sup_{\widehat W}
\inf_{w\in\{-1,+1\}^d}
P_{X\mid w}
\left\{
d_{\mathrm H}(\widehat W(X),w)\le r
\right\}.
\]
Let \(V_d(r):=\sum_{j=0}^r\binom dj\), and, for \(\alpha>1\), let
\(H_\alpha(p):=(1-\alpha)^{-1}
\log\{p^\alpha+(1-p)^\alpha\}\).

\begin{proposition}
\label{prop:heterogeneousSibsonConverse}
Let \(B_{i,d}\sim\operatorname{Bernoulli}(p_{i,d})\) be independent.
Then, for every \(r\in\{0,\ldots,d\}\) and \(\alpha>1\),
\begin{equation}
S_d^\star(r)
=
P\left\{\sum_{i=1}^dB_{i,d}\le r\right\}
\le
\left[
V_d(r)
\exp\left\{
-\sum_{i=1}^dH_\alpha(p_{i,d})
\right\}
\right]^{\frac{\alpha-1}{\alpha}}
\wedge1.
\label{eq:heterogeneousSibsonBound}
\end{equation}
Consequently, if the right-hand side is smaller than \(1-\delta\), then
\(\mathcal M_-(\delta)\ge r+1\).
\end{proposition}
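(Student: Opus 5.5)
Three things have to be shown: the exact identity for $S_d^\star(r)$, the Sibson upper bound, and the quantile consequence.

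\emph{Exact identity.} The model is a group family on $\{-1,+1\}^d$. Coordinatewise multiplication by $w$ acts transitively on parameters, and the map $x\mapsto w\odot x$ sends $P_{X\mid \mathbf 1}$ to $P_{X\mid w}$. The estimator $\widehat W(X)=X$ succeeds exactly when $\sum_i \1\{Z_i=-1\}\le r$, whatever $w$ is. Since $\1\{Z_i=-1\}\sim B_{i,d}$ independently, its success probability is $P\{\sum_i B_{i,d}\le r\}$ uniformly in $w$, and this gives ``$\ge$''.

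For ``$\le$'', bound the worst case by the average under the uniform prior, as in \eqref{eq:BayesianReduction}. The Bayes success probability is $2^{-d}\sum_x \max_{\widehat w}\sum_{w:d_{\mathrm H}(\widehat w,w)\le r}\prod_i P(Z_i=w_ix_i)$. Writing $w=\widehat w\odot s$ and $x=\widehat w\odot z$ turns the inner sum into $\sum_{s\in B(r)} \prod_i P(Z_i=s_i z_i)$. This quantity depends on $x$ only through $z$, and $z$ ranges over all of $\{-1,+1\}^d$ as $\widehat w$ varies. Hence the Bayes success equals $\max_{z}\sum_{s\in B(r)}P(Z=s\odot z)$, i.e.\ the largest probability that $Z$ lies in a Hamming ball of radius $r$.

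This maximum should be attained at the centre $\mathbf 1$. I will argue it coordinatewise by a rearrangement/swap step: flipping $z_i$ from $-1$ to $+1$ replaces $P(Z\in B(z,r))$ by a value that is no smaller, because $p_{i,d}<1/2$. Concretely, decompose by coordinate $i$. The balls $B(z,r)$ and $B(z',r)$ share the same structure on the other coordinates, with radius $r$ or $r-1$ depending on agreement at $i$. The difference is then $(1-2p_{i,d})\bigl[P(Z_{-i}\in B_{-i}(r))-P(Z_{-i}\in B_{-i}(r-1))\bigr]$ up to sign, which is $\ge0$. I expect this swap step to be the main technical point, and the sign bookkeeping must be checked carefully. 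Thus Bayes success $=P\{\sum B_{i,d}\le r\}$, which closes the equality.

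\emph{Sibson bound.} Apply \Cref{rmk:holderSibsonBound} with the uniform prior and $\rho=r+1$ on the loss $d_{\mathrm H}$, so that success is $d_{\mathrm H}<r+1$. The small-ball function is $L_W(r+1)=V_d(r)2^{-d}$. Sibson's closed form for discrete channels gives $\exp\{\tfrac{\alpha-1}{\alpha}I_\alpha\}=\sum_x\bigl(\sum_w 2^{-d}P(x\mid w)^\alpha\bigr)^{1/\alpha}$. The channel is a product of binary symmetric channels with uniform inputs, so this factorises over coordinates. Each factor is $2\cdot\bigl(\tfrac12(p^\alpha+(1-p)^\alpha)\bigr)^{1/\alpha}$, and therefore
\[
\exp\Bigl\{\tfrac{\alpha-1}{\alpha}I_\alpha\Bigr\}=2^{d(\alpha-1)/\alpha}\exp\Bigl\{-\tfrac{\alpha-1}{\alpha}\sum_i H_\alpha(p_{i,d})\Bigr\}.
\]
Multiplying by $L_W^{(\alpha-1)/\alpha}$ cancels the $2^d$ factors and gives \eqref{eq:heterogeneousSibsonBound}. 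The Bayes success bounds the minimax success by the identity above, and the trivial bound $\wedge 1$ is immediate.

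\emph{Quantile consequence.} If the right-hand side is below $1-\delta$, then every estimator has worst-case probability of $\{d_{\mathrm H}\ge r+1\}$ exceeding $\delta$. Hence $\{d_{\mathrm H}>r\}$ also has probability exceeding $\delta$, and since the loss is integer-valued, thresholds $r'<r+1$ all fail. Therefore \Cref{def:lower-minimax-quantile} yields $\mathcal M_-(\delta)\ge r+1$.
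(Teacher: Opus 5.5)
Your proof is correct and follows essentially the same route as the paper. The identity comes from the Bayes optimality of $\widehat W(X)=X$ under the uniform prior together with its equaliser property, and the bound comes from the H\"older--Sibson relaxation with $L_W(r+1)=V_d(r)2^{-d}$ and the coordinatewise factorisation of $I_\alpha$. Your change of variables and one-coordinate swap check out: flipping $z_i$ from $-1$ to $+1$ changes the ball probability by $(1-2p_{i,d})\bigl[P(Z_{-i}\in B_{-i}(r))-P(Z_{-i}\in B_{-i}(r-1))\bigr]\ge0$. This is simply an explicit proof of the stochastic-ordering step that the paper only asserts via the posterior error indicators.
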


\begin{proof}
Conditional on \(X=x\), the coordinates of \(W\) are independent and
\(P(W_i\neq x_i\mid X=x)=p_{i,d}\). For a candidate decision
\(\widehat w\), the posterior error indicators
\(\1_{\{\widehat w_i\neq W_i\}}\) are therefore independent Bernoulli
variables with parameters
\[
p_{i,d}
\quad\text{if }\widehat w_i=x_i,
\qquad
1-p_{i,d}
\quad\text{if }\widehat w_i=-x_i.
\]
Because \(p_{i,d}<1/2\), each parameter is minimised by choosing
\(\widehat w_i=x_i\). The resulting sum of error indicators is
stochastically no larger than for any other decision. Hence
\(\widehat W(x)=x\) maximises the posterior probability of every Hamming
ball.

For \(\widehat W(X)=X\), the error indicators are
\(\1_{\{Z_i=-1\}}\), so their joint distribution does not depend on
\(w\). Let \(s_r\) denote this estimator's probability of success within
Hamming distance \(r\). Since the estimator is Bayes optimal under the uniform
prior, every estimator \(\widetilde W\) satisfies
\[
\inf_w
P_{X\mid w}\!\left\{
d_{\mathrm H}(\widetilde W(X),w)\le r
\right\}
\le
2^{-d}\sum_w
P_{X\mid w}\!\left\{
d_{\mathrm H}(\widetilde W(X),w)\le r
\right\}
\le s_r.
\]
The estimator \(\widehat W(X)=X\) attains \(s_r\) for every \(w\).
It is therefore minimax, which proves the equality in
\eqref{eq:heterogeneousSibsonBound}.

For the uniform prior,
\(L_W(r+1)=V_d(r)/2^d\). Moreover,
\[
I_\alpha(W_i,X_i)
=
\frac{\alpha}{\alpha-1}
\log\left[
2\left\{
\frac{p_{i,d}^\alpha+(1-p_{i,d})^\alpha}{2}
\right\}^{1/\alpha}
\right]
=
\log2-H_\alpha(p_{i,d}).
\]
Substitution into the H\"older relaxation in terms of \(I_\alpha\) proves the
inequality.
\end{proof}

We next compare the exponent supplied by
\Cref{prop:heterogeneousSibsonConverse} with the exact success exponent.
Suppose that
\[
\nu_d:=\frac1d\sum_{i=1}^d\delta_{p_{i,d}}
\Longrightarrow\nu,
\]
where \(\nu\) is supported on a compact subset of \((0,1/2)\). We now specialise the nonasymptotic bound to proportional Hamming
recovery. Fix \(0<\rho<\int p\,\nu(\mathrm dp)\) and, for each \(d\),
set \(r=r_d:=\lfloor\rho d\rfloor\).
Define the exact lower-tail exponent
\begin{equation}
I_\nu(\rho)
:=
\sup_{s\le0}
\left\{
s\rho-
\int\log(1-p+pe^s)\,\nu(\mathrm dp)
\right\},
\label{eq:heterogeneousExactExponent}
\end{equation}
and the order-\(\alpha\) Sibson exponent
\begin{equation}
J_\nu(\rho,\alpha)
:=
\frac{\alpha-1}{\alpha}
\left\{
\int H_\alpha(p)\,\nu(\mathrm dp)-h_2(\rho)
\right\}.
\label{eq:heterogeneousSibsonExponent}
\end{equation}
Only $J_\nu(\rho,\alpha)>0$ yields a nontrivial exponential upper bound
on the success probability.

The next proposition shows that, under the stated entropy condition,
optimising over a finite Sibson order yields a strictly larger decay
exponent than Maximal Leakage, while remaining below the exact minimax
success exponent.

\begin{proposition}
\label{prop:finiteSibsonBeatsLeakage}
The exact minimax success exponent satisfies
\begin{equation}
\lim_{d\to\infty}
-\frac1d\log S_d^\star(r_d)
=
I_\nu(\rho).
\label{eq:heterogeneousExactLD}
\end{equation}
If
\[
h_2(\rho)
<
\int h_2(p)\,\nu(\mathrm dp),
\]
then \(J_\nu(\rho,\alpha)\) has a unique maximiser
\(\alpha^\star\in(1,\infty)\), determined by
\begin{equation}
\int
h_2\left(
\frac{p^{\alpha^\star}}
     {p^{\alpha^\star}+(1-p)^{\alpha^\star}}
\right)
\nu(\mathrm dp)
=
h_2(\rho).
\label{eq:heterogeneousOptimalOrder}
\end{equation}
Moreover,
\begin{equation}
0
<
J_\nu^\star(\rho)
\le
I_\nu(\rho),
\qquad
J_\nu^\star(\rho)
>
[J_\nu(\rho,\infty)]_+,
\label{eq:heterogeneousFiniteOrderBeatsLeakage}
\end{equation}
where
\(J_\nu(\rho,\infty)
=\int-\log(1-p)\,\nu(\mathrm dp)-h_2(\rho)\).
\end{proposition}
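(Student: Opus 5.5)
The plan is to handle the three assertions separately: the exact exponent \eqref{eq:heterogeneousExactLD} by a large-deviations computation, the shape of \(\alpha\mapsto J_\nu(\rho,\alpha)\) by differentiation, and the inequality \(J_\nu^\star\le I_\nu\) by comparing the nonasymptotic bound of \Cref{prop:heterogeneousSibsonConverse} with the exact exponent.

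\emph{Exact exponent.} By \Cref{prop:heterogeneousSibsonConverse}, \(S_d^\star(r_d)=P\{\sum_i B_{i,d}\le r_d\}\), which is a lower-tail probability for a sum of independent Bernoulli variables.
\begin{itemize}
\item Normalised log-moment generating function. We have \(\frac1d\log\mathbb E e^{s\sum_i B_{i,d}}=\int\log(1-p+pe^s)\,\nu_d(\mathrm dp)\). For each \(s\), the integrand is bounded and continuous on a compact subset of \((0,1/2)\) containing all supports. Weak convergence therefore gives pointwise convergence to \(\Lambda(s):=\int\log(1-p+pe^s)\,\nu(\mathrm dp)\).
\item Gärtner--Ellis. \(\Lambda\) is finite, differentiable and strictly convex on \(\mathbb R\), so the Gärtner--Ellis theorem applies to \(\frac1d\sum_iB_{i,d}\). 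The rate function is \(\Lambda^*(x)=\sup_s\{sx-\Lambda(s)\}\).
\item Restriction to \(s\le0\). Since \(\rho<\Lambda'(0)=\int p\,\nu(\mathrm dp)\), the supremum in \(\Lambda^*(\rho)\) is attained at some \(s\le 0\). Hence \(\Lambda^*(\rho)=I_\nu(\rho)\), and \(\Lambda^*\) is continuous and decreasing on \((0,\int p\,\nu(\mathrm dp))\).
\item Matching bounds. Use the closed set \((-\infty,\rho]\) for the upper bound, noting \(r_d/d\le\rho\). Use the open set \((-\infty,\rho-\eta)\) for the lower bound, noting \(r_d/d\ge\rho-1/d\). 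Let \(\eta\downarrow0\) and use continuity of \(\Lambda^*\) at \(\rho>0\).
\end{itemize}
This matching of the floor and of open versus closed sets is the only delicate point in this step.

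\emph{Shape of \(J_\nu\).} Write \(\varphi_p(\alpha)=\log(p^\alpha+(1-p)^\alpha)\), so that \(\frac{\alpha-1}{\alpha}H_\alpha(p)=-\varphi_p(\alpha)/\alpha\). Put \(q_\alpha(p)=p^\alpha/(p^\alpha+(1-p)^\alpha)\). A direct computation gives \(\varphi_p-\alpha\varphi_p'=h_2(q_\alpha(p))\). It follows that
\[
\partial_\alpha J_\nu(\rho,\alpha)=\frac{1}{\alpha^2}\Bigl\{\int h_2\bigl(q_\alpha(p)\bigr)\,\nu(\mathrm dp)-h_2(\rho)\Bigr\},
\]
where differentiation under the integral is justified by compact support. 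Since \(p<1/2\), the map \(\alpha\mapsto q_\alpha(p)\) decreases strictly from \(p\) (at \(\alpha=1\)) to \(0\) (as \(\alpha\to\infty\)). Because \(h_2\) is increasing on \([0,1/2]\), the integral decreases strictly from \(\int h_2(p)\,\nu(\mathrm dp)\) to \(0\).

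Under the hypothesis \(h_2(\rho)<\int h_2(p)\,\nu(\mathrm dp)\), the derivative is therefore positive near \(\alpha=1\). It changes sign exactly once, at the unique \(\alpha^\star\) solving \eqref{eq:heterogeneousOptimalOrder}, and is negative afterwards. Three conclusions follow.
\begin{itemize}
\item \(\alpha^\star\) is the unique maximiser of \(J_\nu(\rho,\cdot)\).
\item Since \(J_\nu(\rho,\alpha)\to0\) as \(\alpha\downarrow1\) (because \(H_\alpha\to h_2\) is bounded) and \(J_\nu\) increases strictly on \((1,\alpha^\star]\), we get \(J_\nu^\star>0\).
\item Since \(J_\nu\) decreases strictly on \([\alpha^\star,\infty)\) to its limit \(\int-\log(1-p)\,\nu(\mathrm dp)-h_2(\rho)=J_\nu(\rho,\infty)\), we get \(J_\nu^\star>J_\nu(\rho,\infty)\). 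Combined with \(J_\nu^\star>0\), this gives \(J_\nu^\star>[J_\nu(\rho,\infty)]_+\).
\end{itemize}
For the limit at infinity, use \(\varphi_p(\alpha)/\alpha\to\log(1-p)\) together with bounded convergence.

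\emph{Comparison \(J_\nu^\star\le I_\nu\).} Fix \(\alpha>1\) and take \(-\frac1d\log\) of \eqref{eq:heterogeneousSibsonBound}. Two limits are needed.
\begin{itemize}
\item \(\frac1d\log V_d(r_d)\to h_2(\rho)\), valid because \(\rho<1/2\).
\item \(\frac1d\sum_iH_\alpha(p_{i,d})\to\int H_\alpha\,\mathrm d\nu\), by weak convergence with a continuous integrand.
\end{itemize}
Together these give \(\liminf_d-\frac1d\log S_d^\star(r_d)\ge J_\nu(\rho,\alpha)\). By \eqref{eq:heterogeneousExactLD}, the left-hand side equals \(I_\nu(\rho)\). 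Taking \(\alpha=\alpha^\star\) yields \(J_\nu^\star\le I_\nu(\rho)\).

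I expect the main obstacle to be the large-deviations step, namely checking the Gärtner--Ellis hypotheses and the boundary issues uniformly along a triangular array. If that proves awkward, a fallback is a direct Chernoff bound for the upper estimate, plus an exponential-tilting argument at the optimal \(s^\star<0\) for the lower estimate.
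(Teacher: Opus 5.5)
Your proposal is correct and follows essentially the same route as the paper. The three steps match: Gärtner--Ellis with the reduction to $s\le0$ via $\Lambda'(0)=\int p\,\nu(\mathrm dp)>\rho$; the derivative identity $\partial_\alpha J_\nu(\rho,\alpha)=\alpha^{-2}\{\int h_2(q_\alpha(p))\,\nu(\mathrm dp)-h_2(\rho)\}$, which gives the unique crossing $\alpha^\star$ (the crossing also uses $h_2(\rho)>0$, which holds since $0<\rho<1/2$); and the comparison of the nonasymptotic Sibson bound with the exact exponent to get $J_\nu^\star(\rho)\le I_\nu(\rho)$. Your open/closed-set treatment of the floor $r_d=\lfloor\rho d\rfloor$ is slightly more careful than the paper's one-line appeal to continuity of $\Lambda^\star$ at $\rho$.
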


\begin{proof}
By independence of the \(B_{i,d}\),
\[
\frac1d\log
\mathbb E\exp\left\{
s\sum_{i=1}^dB_{i,d}
\right\}
=
\frac1d\sum_{i=1}^d
\log(1-p_{i,d}+p_{i,d}e^s)
=
\int\log(1-p+pe^s)\,\nu_d(\mathrm dp).
\]
For each fixed \(s\in\mathbb R\), the integrand is continuous on the
support of \(\nu\). Hence \(\nu_d\Rightarrow\nu\) implies
\[
\frac1d\log
\mathbb E\exp\left\{
s\sum_{i=1}^dB_{i,d}
\right\}
\longrightarrow
\int\log(1-p+pe^s)\,\nu(\mathrm dp).
\]
Let \(S_d:=\sum_{i=1}^dB_{i,d}\) and define
\[
\Lambda(s)
:=
\int\log(1-p+pe^s)\,\nu(\mathrm dp).
\]
The preceding calculation shows that
\(d^{-1}\log\mathbb E e^{sS_d}\to\Lambda(s)\) for every
\(s\in\mathbb R\). The function \(\Lambda\) is finite and continuously
differentiable on \(\mathbb R\), with
\[
\Lambda'(s)
=
\int
\frac{pe^s}{1-p+pe^s}\,
\nu(\mathrm dp).
\]
Since \(S_d/d\in[0,1]\), exponential tightness is automatic. The
G\"artner--Ellis theorem
\cite[Theorem~2.3.6]{DemboZeitouni2010} therefore implies that
\((S_d/d)\) satisfies a large-deviation principle with speed \(d\) and
rate function
\[
\Lambda^\star(x)
:=
\sup_{s\in\mathbb R}\{sx-\Lambda(s)\}.
\]

Now \(\Lambda'(0)=\int p\,\nu(\mathrm dp)>\rho\). Hence the minimiser of
\(\Lambda^\star\) over \((-\infty,\rho]\) is \(x=\rho\), and the
maximiser in the definition of \(\Lambda^\star(\rho)\) is nonpositive.
Therefore,
\[
\inf_{x\le\rho}\Lambda^\star(x)
=
\Lambda^\star(\rho)
=
\sup_{s\le0}
\left\{
s\rho-
\int\log(1-p+pe^s)\,\nu(\mathrm dp)
\right\}
=
I_\nu(\rho).
\]
Since \(r_d/d\to\rho\) and \(\Lambda^\star\) is continuous at \(\rho\),
the large-deviation upper and lower bounds give
\[
\lim_{d\to\infty}
-\frac1d
\log
P\left\{
S_d\le r_d
\right\}
=
I_\nu(\rho).
\]
Using
\(S_d^\star(r_d)=P\{S_d\le r_d\}\) proves
\eqref{eq:heterogeneousExactLD}.

Since \(\rho<1/2\), the Hamming-ball volume satisfies
\(d^{-1}\log V_d(r_d)\to h_2(\rho)\)
\cite[Chapter~11]{CoverThomas2006}. Moreover,
\(\nu_d\Rightarrow\nu\) implies
\(d^{-1}\sum_{i=1}^dH_\alpha(p_{i,d})
\to\int H_\alpha(p)\,\nu(\mathrm dp)\).
Taking negative logarithms in
\eqref{eq:heterogeneousSibsonBound} therefore gives
\[
\liminf_{d\to\infty}
-\frac1d\log S_d^\star(r_d)
\ge
[J_\nu(\rho,\alpha)]_+.
\]

To optimise this bound over \(\alpha\), set
\(q_\alpha(p):=
p^\alpha/\{p^\alpha+(1-p)^\alpha\}\). Since
\[
\frac{\alpha-1}{\alpha}H_\alpha(p)
=
-\frac1\alpha
\log\{p^\alpha+(1-p)^\alpha\},
\]
direct differentiation gives
\[
\frac{\partial}{\partial\alpha}
\left\{
\frac{\alpha-1}{\alpha}H_\alpha(p)
\right\}
=
\frac1{\alpha^2}h_2(q_\alpha(p)).
\]
Since \(0\le h_2(q_\alpha(p))\le\log2\), the derivative of the
integrand is uniformly bounded on every compact subinterval of
\((1,\infty)\). Differentiation under the integral \eqref{eq:heterogeneousSibsonExponent}  is therefore
justified by dominated convergence and yields
\begin{equation}
\frac{\partial}{\partial\alpha}J_\nu(\rho,\alpha)
=
\frac1{\alpha^2}
\left\{
\int h_2(q_\alpha(p))\,\nu(\mathrm dp)-h_2(\rho)
\right\}.
\label{eq:heterogeneousDerivative}
\end{equation}
For every \(p\in(0,1/2)\), the function
\(\alpha\mapsto q_\alpha(p)\) is continuous and strictly decreasing,
with \(q_1(p)=p\) and \(q_\alpha(p)\to0\) as
\(\alpha\to\infty\). Since \(h_2\) is strictly increasing on
\((0,1/2)\), the function
\[
\alpha\longmapsto
\int h_2(q_\alpha(p))\,\nu(\mathrm dp)
\]
is continuous and strictly decreasing from
\(\int h_2(p)\,\nu(\mathrm dp)\) to zero.

The assumption
\(h_2(\rho)<\int h_2(p)\,\nu(\mathrm dp)\), together with
\(h_2(\rho)>0\), therefore implies that there is a unique
\(\alpha^\star>1\) satisfying
\eqref{eq:heterogeneousOptimalOrder}. By
\eqref{eq:heterogeneousDerivative}, the derivative of
\(J_\nu(\rho,\alpha)\) is positive for
\(\alpha<\alpha^\star\) and negative for
\(\alpha>\alpha^\star\). Hence \(\alpha^\star\) is the unique maximiser.
Finally, since \(J_\nu(\rho,\alpha)\to0\) as \(\alpha\downarrow1\) and
the function initially increases, \(J_\nu^\star(\rho)>0\).
As \(\alpha\to\infty\),
\(H_\alpha(p)\to-\log(1-p)\). Dominated convergence therefore gives
\[
J_\nu(\rho,\infty)
=
\int-\log(1-p)\,\nu(\mathrm dp)-h_2(\rho),
\]
which is the untruncated exponent supplied by Maximal Leakage.

Since \(J_\nu(\rho,\alpha)\) is strictly decreasing for
\(\alpha>\alpha^\star\),
\[
J_\nu^\star(\rho)
=
J_\nu(\rho,\alpha^\star)
>
J_\nu(\rho,\infty).
\]
If \(J_\nu(\rho,\infty)\le0\), its effective exponent is zero, whereas
\(J_\nu^\star(\rho)>0\). Hence, in all cases,
\[
J_\nu^\star(\rho)
>
[J_\nu(\rho,\infty)]_+.
\]

Finally, \eqref{eq:heterogeneousSibsonBound} implies, for every
\(\alpha>1\),
\[
\liminf_{d\to\infty}
-\frac1d\log S_d^\star(r_d)
\ge
[J_\nu(\rho,\alpha)]_+.
\]
The left-hand side equals \(I_\nu(\rho)\) by
\eqref{eq:heterogeneousExactLD}. Taking the supremum over \(\alpha>1\)
and using \(J_\nu^\star(\rho)>0\) gives
\[
J_\nu^\star(\rho)\le I_\nu(\rho).
\]
\end{proof}
The preceding proposition gives a general condition under which a
finite Sibson order improves on Maximal Leakage. The following example
shows that the improvement can be qualitative: the Maximal Leakage
bound is trivial on the exponential scale, while the optimised
finite-order bound yields a strictly positive exponent. All numerical values in the following example are obtained by
one-dimensional root finding and optimisation, using natural
logarithms.

\begin{example}[Strict separation from Maximal Leakage]
\label{ex:heterogeneousSibsonSeparation}
Take
\[
\nu=\frac12\delta_{0.05}+\frac12\delta_{0.25},
\qquad
\rho=0.05.
\]
Then
\[
0
=
[J_\nu(0.05,\infty)]_+
<
J_\nu^\star(0.05) \simeq0.0439
<
I_\nu(0.05) \simeq0.0539.
\]
Maximal Leakage therefore gives only the trivial bound
\(S_d^\star(r_d)\le1\), whereas the optimised finite-order converse gives
\[
S_d^\star(r_d)
\le
\exp\{-(0.0439+o(1))d\}.
\]
Consequently, \(S_d^\star(r_d)\to0\) exponentially fast in $d$. Equivalently, for every estimator there exists a parameter \(w\) such that
\[
P_{X\mid w}\!\left\{d_{\mathrm H}(\widehat W,w)>r_d\right\}\longrightarrow1.
\]
In particular, for every fixed \(\delta\in(0,1)\), one has \(\mathcal M_-(\delta)\ge r_d+1\) for all sufficiently large \(d\), while Maximal Leakage yields only a trivial converse.
\end{example}

This section illustrates why the appropriate information measure depends on the resolution of the recovery problem. We consider a heterogeneous binary channel for which the exact minimax probability of recovery within Hamming distance \(r\) is available, and compare it with the converse obtained from Sibson mutual information of order \(\alpha\).

For exact recovery, the success set is a singleton, so Maximal Leakage naturally captures the problem. For approximate recovery with \(r_d\sim\rho d\), however, the success set is a Hamming ball containing exponentially many points. The converse then reflects a tradeoff between the volume of this ball and the likelihood-ratio moments controlled by \(I_\alpha\). Optimising over finite \(\alpha\) can exploit this tradeoff more effectively than Maximal Leakage, which in the example actually yields a trivial converse.
\subsection{Beyond classical information measures: Amemiya norms}
\label{sec:amemiya-outlier}

The preceding section showed that finite-order Sibson mutual information can be
better adapted than Maximal Leakage to approximate recovery. We now turn to
a different issue: the likelihood ratio may have an unbounded, non-polynomial
tail that is poorly represented by any fixed moment. Amemiya norms
allow the converse to be matched directly to this tail. Their role is
especially clear in one-coordinate localisation, where the success
probability is governed by the maximum of the coordinatewise likelihood
ratios and the inverse Young function determines the cost of searching over
the possible locations.

Let \(P\ll Q\) be probability measures on a measurable space
\(\mathsf X\), and write \(r:=\dd P/\dd Q\). For \(j\in[M]\), let
\(P_j:=Q^{\otimes(j-1)}\otimes P\otimes Q^{\otimes(M-j)}\). Thus, under
\(P_j\), the \(j\)-th coordinate is anomalous and all other coordinates have
law \(Q\). This is the standard one-outlier model; see, for example,
\cite{LiNitinawaratVeeravalli2014}. We allow an independent randomiser for
tie-breaking. This only enlarges the class of estimators, so every converse
below also applies to deterministic estimators. Define
\begin{equation}
S_M^\star
:=
\sup_{\widehat W}\min_{j\in[M]}P_j\{\widehat W=j\}.
\label{eq:oneCoordinateSuccess}
\end{equation}

Let \(\psi\) be a Young function and let \(\psi^\star\) be its complementary
Young function. We use the Amemiya norm
\(\|\cdot\|_{L_\psi^{\mathrm A}(Q)}\) defined above. The following lemma
isolates the general mechanism. The identity expresses the minimax
localisation success as the normalised expected maximum of the coordinatewise
likelihood ratios. The subsequent inequality controls this maximum through
an Amemiya norm adapted to the tail of \(r\), with
\(\psi^{-1}(M)\) capturing the cost of searching over \(M\) coordinates.

\begin{lemma}
\label{lem:oneCoordinateAmemiya}
For every Young function \(\psi\),
\begin{equation}
S_M^\star
=
\frac1M\Ee_{Q^{\otimes M}}
\max_{1\le j\le M}r(X_j)
\le
\frac{\|r\|_{L_\psi^{\mathrm A}(Q)}}
     {(\psi^\star)^{-1}(M)}
\le
\|r\|_{L_\psi^{\mathrm A}(Q)}
\frac{\psi^{-1}(M)}{M}.
\label{eq:oneCoordinateAmemiya}
\end{equation}
\end{lemma}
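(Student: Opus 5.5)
The plan has three parts: an exact identity for \(S_M^\star\), an Amemiya bound on \(\Ee\max_j r(X_j)\), and a final Young-function comparison.

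\textbf{The identity.} We follow the template of \Cref{thm:maximalLeakageExact}. Every \(P_j\) has density \(\prod_{i}q(x_i)\,r(x_j)\) with respect to \(Q^{\otimes M}\). Hence, for any randomised estimator with decision probabilities \(\pi_j(x)\), the uniform-prior average success is
\[
\frac1M\sum_j P_j\{\widehat W=j\}=\frac1M\Ee_{Q^{\otimes M}}\Big[\sum_j\pi_j(X)\,r(X_j)\Big]\le \frac1M\Ee_{Q^{\otimes M}}\max_j r(X_j),
\]
with equality for the MAP rule \(\widehat W\in\arg\max_j r(X_j)\), ties broken uniformly at random. The minimax success is at most the Bayes average, which gives ``\(\le\)''. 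For ``\(\ge\)'', the randomised MAP rule is an equaliser, because the model is invariant under coordinate permutations and uniform tie-breaking preserves this invariance. So its worst-case success equals its average success, as in the equaliser argument already used.

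\textbf{The Amemiya bound.} Write \(\max_j r(X_j)=\sum_j r(X_j)\,\mathbf 1\{A_j\}\), where \(A_j\) is the event that \(j\) is the (tie-broken) argmax, with the \(A_j\) disjoint. Then \(\Ee\max_j r(X_j)=\sum_j\Ee_Q[r(X_j)\,\phi_j(X_j)]\), where \(\phi_j(x):=\Pr(A_j\mid X_j=x)\in[0,1]\). By the symmetry of the MAP rule, \(\phi_j=\phi\) for a single function \(\phi\) with \(\Ee_Q\phi=1/M\), since \(\sum_j\Pr(A_j)=1\). So \(\Ee\max_j r(X_j)=M\,\Ee_Q[r\phi]\), and the identity becomes \(S_M^\star=\Ee_Q[r\phi]\). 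The generalised Hölder inequality with the Amemiya/Luxemburg pairing (\(\psi\) on \(r\), \(\psi^\star\) on \(\phi\)) gives \(\Ee_Q[r\phi]\le \|r\|_{L_\psi^{\mathrm A}(Q)}\|\phi\|_{L_{\psi^\star}^{\mathrm L}(Q)}\).

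It remains to bound the Luxemburg norm of a \([0,1]\)-valued \(\phi\) with mean \(1/M\); this is the main obstacle. Because \(\psi^\star\) is convex with \(\psi^\star(0)=0\), we have \(\psi^\star(\phi/c)\le \phi\,\psi^\star(1/c)\) for \(\phi\in[0,1]\). Hence \(\Ee_Q\psi^\star(\phi/c)\le \psi^\star(1/c)/M\), which is at most \(1\) once \(1/c=(\psi^\star)^{-1}(M)\). Thus \(\|\phi\|_{L^{\mathrm L}_{\psi^\star}}\le 1/(\psi^\star)^{-1}(M)\), which is the first inequality. The convexity reduction to indicator-like functions is the delicate point. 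A possible snag is that \(\psi^\star\) may take the value \(+\infty\); in that case we use the generalised inverse \(\sup\{s:\psi^\star(s)\le M\}\), which is what is needed.

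\textbf{The final comparison.} The last inequality is the classical Young-function relation \(u\le\psi^{-1}(u)(\psi^\star)^{-1}(u)\), applied at \(u=M\). It gives \(1/(\psi^\star)^{-1}(M)\le\psi^{-1}(M)/M\). To prove the relation, set \(s=\psi^{-1}(u)\) and \(t=(\psi^\star)^{-1}(u)\) and use Young's inequality \(st\le\psi(s)+\psi^\star(t)\). This gives \(st\le 2u\) in the crude form. The sharp form \(u\le st\) follows from the standard argument \(\psi^\star(\psi(s)/s)\le\psi(s)\), obtained from convexity via \(\psi(x)\ge x\psi(s)/s\) for \(x\ge s\). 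This yields \((\psi^\star)^{-1}(u)\ge u/\psi^{-1}(u)\).
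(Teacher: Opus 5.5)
Your proof is correct. It shares the paper's overall architecture: the MAP identity via a permutation-invariant equaliser Bayes rule, then an Amemiya--Luxemburg H\"older step, then the Young relation. The two arguments differ in where the H\"older inequality is applied.

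\textbf{The paper's route.} The paper works on the lifted space. Under the product reference $P_WQ^{\otimes M}$, the success event $E=\{\widehat W=W\}$ has probability exactly $1/M$ for \emph{every} estimator. The likelihood ratio is $r(X_W)$, whose law under the reference is $Q$. The bound is therefore literally \Cref{cor:amemiyaHolder} with small-ball probability $1/M$. Only the Luxemburg norm of an indicator is needed, and it equals $1/(\psi^\star)^{-1}(M)$ exactly.

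\textbf{Your route.} You condition down to a single coordinate. Your $\phi(x)$ is precisely $P_WQ^{\otimes M}(E\mid X_W=x)$. Your convexity bound $\psi^\star(\phi/c)\le\phi\,\psi^\star(1/c)$ is the conditional-Jensen fact that averaging an indicator cannot increase its Luxemburg norm. So this step is routine, not the delicate point you flagged. Two further remarks:
\begin{itemize}
\item The symmetry argument making $\phi_j$ independent of $j$ is dispensable. Replacing the $\phi_j$ by their average $\bar\phi=\frac1M\sum_j\phi_j$ gives $\Ee_Q\bar\phi=1/M$ and the same conclusion for any rule.
\item You prove $u\le\psi^{-1}(u)(\psi^\star)^{-1}(u)$ directly, via $\psi^\star(\psi(s)/s)\le\psi(s)$, rather than citing Rao--Ren. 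This is fine in the usual case where $\psi(\psi^{-1}(u))=u$.
\end{itemize}

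\textbf{Comparison.} The paper's route is shorter and exhibits the lemma as a direct instance of the general metaconverse, valid uniformly over estimators. Yours is self-contained and stays on $(\mathsf X,Q)$, where $\|r\|_{L^{\mathrm A}_\psi(Q)}$ is defined. The cost is one extra conditioning step, and it yields the same final bound.
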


\begin{proof}
Give \(W\) the uniform distribution on \([M]\). Relative to
\(Q^{\otimes M}\), the likelihood under \(W=j\) is \(r(X_j)\), so a MAP rule
selects a coordinate maximising \(r(X_j)\). Its average success probability
is the first expression in \eqref{eq:oneCoordinateAmemiya}. With uniform
randomised tie-breaking, this rule is permutation invariant and has the same
success probability under every \(P_j\). Conversely, the minimum success
probability of any rule is at most its average under the uniform prior.
Hence the MAP rule is an equaliser Bayes rule and is minimax, proving the
identity.

For the success event \(E:=\{\widehat W=W\}\), the reference probability is
\((P_WQ^{\otimes M})(E)=1/M\), and the corresponding likelihood ratio is
\(r(X_W)\). The Amemiya--Luxemburg H\"older inequality therefore gives
\[
P_{WX}(E)
\le
\frac{\|r\|_{L_\psi^{\mathrm A}(Q)}}
     {(\psi^\star)^{-1}(M)}.
\]
Equivalently, this is
\Cref{cor:amemiyaHolder} with small-ball probability \(1/M\). Finally,
complementary Young functions satisfy
\(u\le\psi^{-1}(u)(\psi^\star)^{-1}(u)\le2u\); see
\cite[Chapter~2]{RaoRen1991}. The lower inequality in this relation yields
the last bound in \eqref{eq:oneCoordinateAmemiya}.
\end{proof}

We next give an example in which a Young function chosen to reflect large
likelihood-ratio values yields the correct localisation rate, whereas
fixed-power and untailored exponential choices give weaker bounds. Let
\(N_1,\ldots,N_M\) be independent \(\Pois(\lambda)\) variables, where
\(\lambda>0\), let \(W\) be uniform on \([M]\) and independent of them, and
observe
\begin{equation}
X_j=N_j+\1\{j=W\},
\qquad j\in[M].
\label{eq:taggedPoissonModel}
\end{equation}
Thus one tagged arrival is added to an unknown stream. The anomalous law is
the size-biased Poisson law \(1+\Pois(\lambda)\); see
\cite[Section~2]{ArratiaGoldsteinKochman2019}. With
\(Q=\Pois(\lambda)\), its likelihood ratio is \(r(k)=k/\lambda\).

Set \(h(u):=(1+u)\log(1+u)-u\) and define the Bennett-type Young function
\(\psi_{\mathrm B}(u):=\exp\{h(u)\}-1\). Its complementary Young function
is denoted by \(\psi_{\mathrm B}^\star\). Such functions underlie
Bennett-type maximal inequalities \cite{Wellner2017}. For
\(N\sim Q=\Pois(\lambda)\), set
\begin{equation}
K_\lambda
:=
\left\|\frac{N}{\lambda}\right\|_
{L_{\psi_{\mathrm B}}^{\mathrm A}(Q)}
=
\inf_{t>0}
\frac{1+\Ee_Q\psi_{\mathrm B}(tN/\lambda)}{t}.
\label{eq:taggedPoissonAmemiyaNorm}
\end{equation}

The following is the main result of this section. It shows that one fixed
Young function, chosen to reflect the distribution of the likelihood ratio,
reproduces the exact localisation rate.

\begin{proposition}
\label{prop:taggedPoisson}
For every fixed \(\lambda>0\), \(K_\lambda<\infty\) and
\begin{equation}
S_{M,\lambda}^\star
=
\frac{1}{M\lambda}\Ee\max_{1\le j\le M}N_j
\sim
\frac{\log M}{\lambda M\log\log M}.
\label{eq:taggedPoissonExactRate}
\end{equation}
Moreover,
\begin{equation}
S_{M,\lambda}^\star
\le
\frac{K_\lambda}{(\psi_{\mathrm B}^\star)^{-1}(M)}
\le
K_\lambda\frac{\psi_{\mathrm B}^{-1}(M)}{M}
\asymp_\lambda
\frac{\log M}{M\log\log M}.
\label{eq:taggedPoissonAmemiyaRate}
\end{equation}
Hence the Bennett-type Amemiya converse is rate-sharp. Consequently,
\[
\inf_{\widehat W}\max_{j\in[M]}P_j\{\widehat W\neq j\}
=
1-S_{M,\lambda}^\star
=
1-\frac{\log M}{\lambda M\log\log M}\{1+o(1)\}
\longrightarrow1.
\]
In particular, under zero--one loss, for every fixed
\(\delta\in(0,1)\), one has
\(\cM(\delta)=\cM_-(\delta)=1\) for all sufficiently large \(M\).
\end{proposition}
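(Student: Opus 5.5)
The proof has four ingredients: the identity from \Cref{lem:oneCoordinateAmemiya}, the extreme-value asymptotics of i.i.d.\ Poisson maxima, finiteness of the Bennett--Amemiya norm, and the growth of \(\psi_{\mathrm B}^{-1}\).

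\emph{Identity.} Since \(r(k)=k/\lambda\), the first equality in \eqref{eq:oneCoordinateAmemiya} reads
\[
S_{M,\lambda}^\star=\frac{1}{M\lambda}\,\Ee\max_{j\le M}N_j .
\]
This gives the identity in \eqref{eq:taggedPoissonExactRate}.

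\emph{Asymptotics of \(\Ee\max_j N_j\).} Set \(x_M:=\log M/\log\log M\). The Poisson tail satisfies
\[
\log P(N\ge k)=-k\log k+O(k)
\qquad\text{for fixed }\lambda .
\]
Upper bound: for \(k=(1+\varepsilon)x_M\) we have \(k\log k=(1+\varepsilon+o(1))\log M\). Hence \(M\,P(N\ge k)\to0\), and summing \(\sum_{k'\ge k}M\,P(N\ge k')\) shows that \(\Ee\max_j N_j\le(1+\varepsilon)x_M+o(x_M)\).

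Lower bound: for \(k=(1-\varepsilon)x_M\) we have \(M\,P(N\ge k)\to\infty\). Then
\[
P(\max_j N_j<k)\le\exp\{-M\,P(N\ge k)\}\to0,
\]
so \(\Ee\max_j N_j\ge(1-\varepsilon)x_M(1-o(1))\).

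Letting \(\varepsilon\downarrow0\) gives \(\Ee\max_j N_j\sim x_M\), which is \eqref{eq:taggedPoissonExactRate}.

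\emph{Finiteness of \(K_\lambda\).} Take \(t\) small in \eqref{eq:taggedPoissonAmemiyaNorm}. The key comparison is
\[
\psi_{\mathrm B}(u)\le\exp\{(1+u)\log(1+u)\}=(1+u)^{1+u},
\]
so
\[
\Ee_Q\psi_{\mathrm B}(tN/\lambda)\le\sum_k e^{-\lambda}\frac{\lambda^k}{k!}\,(1+tk/\lambda)^{1+tk/\lambda}.
\]
Using \(k!\ge(k/e)^k\), the \(k\)-th term is at most
\[
\exp\{-k\log k+(tk/\lambda)\log k+O(k)\}.
\]
This is summable once \(t<\lambda\). Therefore \(K_\lambda<\infty\), and \(K_\lambda\) depends only on \(\lambda\).

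\emph{Chain of inequalities in \eqref{eq:taggedPoissonAmemiyaRate}.} The first two inequalities are exactly \eqref{eq:oneCoordinateAmemiya} with \(\psi=\psi_{\mathrm B}\). For the asymptotic, solve \(\exp\{h(u)\}-1=M\), i.e. \(h(u)=\log(M+1)\). Since \(h(u)=u\log u\,(1+o(1))\), the solution satisfies \(u\log u\sim\log M\). Hence
\[
\psi_{\mathrm B}^{-1}(M)\sim\frac{\log M}{\log\log M},
\]
which gives the stated \(\asymp_\lambda\).

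The rate-sharpness claim follows by comparing this with \eqref{eq:taggedPoissonExactRate}. The upper and lower bounds match up to the constant \(K_\lambda\lambda\) (together with the factor \(1\)–\(2\) coming from \((\psi^\star)^{-1}\)), so the Amemiya bound is sharp in rate but not necessarily in constant.

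\emph{Final claims.} The failure-probability statement is \(1-S^\star\) combined with \eqref{eq:taggedPoissonExactRate}. For the quantiles, under zero--one loss take \(\rho=1\) in \Cref{thm:npLowerBound} or \Cref{thm:maximalLeakageExact}. Once \(S^\star<1-\delta\), we get \(\cM_-(\delta)\ge1\). Since the loss is at most \(1\), the trivial estimator gives \(\cM(\delta)\le1\), and \(\cM\ge\cM_-\) closes the equality \(\cM(\delta)=\cM_-(\delta)=1\).

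The main obstacle is the Poisson extreme-value asymptotic. In particular, the upper bound on the expectation (rather than on the median) requires controlling the tail sum \(\sum_{k\ge(1+\varepsilon)x_M}M\,P(N\ge k)\). I would handle it with the ratio bound \(P(N\ge k+1)/P(N\ge k)\le\lambda/(k+1)\), which makes that sum geometric.
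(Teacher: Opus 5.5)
Your proposal is correct and follows essentially the same route as the paper: the identity from \Cref{lem:oneCoordinateAmemiya} with \(r(k)=k/\lambda\), a union-bound/tail-sum upper bound and a threshold-exceedance lower bound for \(\Ee\max_j N_j\), finiteness of \(K_\lambda\) at some \(t<\lambda\) via an \(-(1-t/\lambda)k\log k+O(k)\) estimate, and \(h(u)\sim u\log u\) for \(\psi_{\mathrm B}^{-1}(M)\). The differences are cosmetic and make your argument slightly more self-contained: you prove the convergence of the maximum directly via \(P(\max_j N_j<k)\le\exp\{-M\,P(N\ge k)\}\) rather than citing Kimber's theorem, and you use \(\psi_{\mathrm B}(u)\le(1+u)^{1+u}\) with \(k!\ge(k/e)^k\) instead of Stirling's formula.
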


\begin{proof}
Since \(r(k)=k/\lambda\), the identity in
\eqref{eq:taggedPoissonExactRate} follows directly from
\Cref{lem:oneCoordinateAmemiya}. The Poisson maximum theorem gives
\[
\frac{\log\log M}{\log M}\max_{1\le j\le M}N_j
\longrightarrow1
\qquad\text{in probability};
\]
see \cite{Kimber1983}. The same convergence holds in mean. Indeed, for
every integer \(k\ge\lambda\), the Chernoff bound obtained from
\(\Ee e^{tN}=\exp\{\lambda(e^t-1)\}\) is
\(Q\{N\ge k\}\le(e\lambda/k)^k\). For every fixed
\(\varepsilon>0\), the union bound and the tail-sum formula give
\begin{align*}
&
\Ee\left[
\max_{1\le j\le M}N_j
-
\left\lceil
(1+\varepsilon)\frac{\log M}{\log\log M}
\right\rceil
\right]_+
\\
&\qquad\le
M
\sum_{k>
\left\lceil(1+\varepsilon)\log M/\log\log M\right\rceil}
\left(\frac{e\lambda}{k}\right)^k
\\
&\qquad=
o\left(\frac{\log M}{\log\log M}\right).
\end{align*}
For completeness, the ratio of consecutive summands is at most
\(2\lambda/k\) for all sufficiently large \(k\), while the logarithm of
the first summand is \(-\varepsilon\log M+o(\log M)\). It follows that
\[
\limsup_{M\to\infty}
\frac{\log\log M}{\log M}
\Ee\max_{1\le j\le M}N_j
\le1+\varepsilon.
\]
For the reverse inequality,
\[
\Ee\max_{1\le j\le M}N_j
\ge
(1-\varepsilon)\frac{\log M}{\log\log M}
P\left\{
\max_{1\le j\le M}N_j
\ge
(1-\varepsilon)\frac{\log M}{\log\log M}
\right\},
\]
and the probability on the right tends to one by Kimber's theorem. Letting
\(\varepsilon\downarrow0\) therefore proves
\(\Ee\max_{j\le M}N_j\sim\log M/\log\log M\), hence
\eqref{eq:taggedPoissonExactRate}.

We next verify that \(K_\lambda\) is finite using its Amemiya representation.
Fix \(0<t<\lambda\). Stirling's formula gives, as \(n\to\infty\),
\[
\log\left\{
e^{-\lambda}\frac{\lambda^n}{n!}
\exp\left\{h\!\left(\frac{tn}{\lambda}\right)\right\}
\right\}
=
-\left(1-\frac{t}{\lambda}\right)n\log n+O_{\lambda,t}(n).
\]
The resulting series is summable, so
\(\Ee_Q\psi_{\mathrm B}(tN/\lambda)<\infty\). Evaluating the infimum in
\eqref{eq:taggedPoissonAmemiyaNorm} at this value of \(t\) proves that
\(K_\lambda<\infty\).

Apply \Cref{lem:oneCoordinateAmemiya} with
\(\psi=\psi_{\mathrm B}\) to obtain the first two bounds in
\eqref{eq:taggedPoissonAmemiyaRate}. Finally,
\(h(u)\sim u\log u\), and therefore
\(\psi_{\mathrm B}^{-1}(M)\sim\log M/\log\log M\). This proves the last
comparison and completes the proof.
\end{proof}

We now compare the conclusion with more standard information-theoretic bounds. Let
\(P=\operatorname{Law}(1+N)\) and \(Q=\operatorname{Law}(N)\). Since
\(I(W;X)\le D_{\mathrm{KL}}(P\|Q)=:C_\lambda\), where
\(C_\lambda=\Ee\log\{(N+1)/\lambda\}<\infty\), classical Fano gives
\(S_{M,\lambda}^\star\le(C_\lambda+\log2)/\log M\); see
\cite[Chapter~7]{CoverThomas2006}. This proves that success vanishes, but it
exceeds the exact rare-success scale by a factor of order
\(M\log\log M/(\log M)^2\).

The power choice \(\psi_p(u)=u^p/p\), for fixed \(p>1\), gives
\(S_{M,\lambda}^\star\lesssim_{\lambda,p}M^{-1+1/p}\). The untailored
exponential choice \(\psi_{\mathrm E}(u)=e^u-u-1\) gives only
\(S_{M,\lambda}^\star\lesssim_\lambda(\log M)/M\). In contrast,
\(\psi_{\mathrm B}\) inserts the missing factor \(1/\log\log M\) and
recovers the exact order in \eqref{eq:taggedPoissonExactRate}.

There is no contradiction with the role of Maximal Leakage in exact
recovery. In fact, by the integral representation in
\eqref{eq:maximalLeakageIntegralDefinition},
\begin{equation}
\exp\{\ml{W}{X}\}
=
\Ee_{Q^{\otimes M}}\max_{j\le M}r(N_j),
\qquad
S_{M,\lambda}^\star
=
\frac{\exp\{\ml{W}{X}\}}{M}.
\label{eq:taggedPoissonLeakageIdentity}
\end{equation}
Thus Maximal Leakage is exact here. The contribution of the
Amemiya bound is computational rather than an abstract improvement
over Maximal Leakage: it bounds this integral at the correct scale using a
single norm matched to the likelihood ratio. Likewise, allowing the power
\(p=p_M\) to vary with \(M\) can reconstruct the same scale from the full
moment profile. The advantage of the Bennett-type Amemiya
formulation is that one fixed Young function packages that profile and
produces the appropriate multiple-testing penalty through
\(\psi_{\mathrm B}^{-1}(M)\).

Finally, under zero--one loss, the minimax quantile equals one precisely when
the minimax success probability is below \(1-\delta\). Hence
\eqref{eq:taggedPoissonAmemiyaRate} implies
\[
\cM(\delta)=\cM_-(\delta)=1
\quad\text{whenever}\quad
K_\lambda\frac{\psi_{\mathrm B}^{-1}(M)}{M}<1-\delta.
\]
Together with \eqref{eq:taggedPoissonExactRate}, this identifies the
high-confidence transition, up to constants, at
\(1-\delta\asymp_\lambda(\log M)/(M\log\log M)\). This example shows that
the Amemiya bound is useful when the success probability is controlled by
the tail of the likelihood ratio. A suitable Young function recovers the
correct scale even when fixed-order divergence bounds do not.
\section{Conclusion}

We developed a loss-adapted Neyman--Pearson metaconverse for minimax
quantiles. The central principle is that the small-ball function captures
how the prior concentrates in loss balls, while the inverse
Neyman--Pearson function controls the probability of successful
estimation. Relaxing this Neyman--Pearson bound in different ways yields
converses based on \(f\)-informativity, Sibson mutual information, Maximal
Leakage, and Amemiya norms, with classical Fano and Le Cam inequalities
appearing as special cases.

The applications show that different recovery criteria favour different
information measures. Maximal Leakage is naturally suited to exact recovery and
is exact for symmetric models admitting an equaliser rule. Finite-order Sibson
mutual information can be strictly stronger for approximate recovery, where
success permits a neighbourhood of the true parameter. Amemiya norms can
capture the effect of rare observations that fixed-order moments miss. The Gaussian weighted stochastic block
model and the bounded-energy low-rank denoising problem further
demonstrate that the framework produces explicit finite-sample minimax
bounds in both discrete and continuous settings.

Several questions remain open. It would be useful to identify general
conditions under which the Neyman--Pearson metaconverse is attained, or
under which a particular relaxation is sharp up to constants or
exponents. Another direction is to develop systematic procedures for
selecting the Sibson order or the Young function from the structure of
the success event and the likelihood ratio. A further extension would
condition the metaconverse on an auxiliary random variable, as in Xu and
Raginsky~\cite{XuRaginsky2017}. Finally, extending the
framework to interactive settings, sequential procedures, and
constrained decision rules may yield similarly unified
confidence-dependent converses in more general statistical problems.

\bibliographystyle{plainnat}
\bibliography{references}
\appendix
\section{Proofs}
\label{app:proofs}

\begin{proof}[Proof of~\Cref{thm:npLowerBound}]
Fix an estimator $\widehat \theta$, an auxiliary prior $P_W$, a channel $P_{X|W}$ and an arbitrary distribution over $\mathcal X$, $Q_X$. 
One has that
\begin{equation}
    P_{WX}(\ell(\widehat \theta(X),W)<\rho) = \mathbb E_{P_{WX}}\left[\1_{\left\{\ell(\widehat \theta(X),W)<\rho\right\}}\right].
\end{equation}
Moreover, one has that
\begin{align}
    E_{P_WQ_X}\left[\1_{\left\{\ell(\widehat \theta(X),W)<\rho\right\}}\right] &= \int P_{W}(\ell(\widehat \theta(x),W)<\rho) Q_X(\mathrm dx)\\
    &\leq \sup_{\vartheta \in \Theta} P_{W}(\ell(\vartheta,W)<\rho) \\
    &= L_W(\rho).
\end{align}
Consequently, one has that for every $Q_X$ and $P_W$:
\begin{equation}
    \beta_{P_{WX}(\ell(\widehat \theta(X),W)<\rho)}(P_{WX},P_WQ_X) \leq L_W(\rho).
\end{equation}
Thus, one can deduce that:
\begin{align}
    P_{WX}(\ell(\widehat \theta(X),W)<\rho) &\leq \sup\{ \alpha \in [0,1]: \beta_\alpha(P_{WX},P_WQ_X)\leq L_W(\rho)\}\\ &= \beta_{\circ}(P_{WX},P_WQ_X)^{-1}(L_W(\rho)).
\end{align}
Since this holds for every $Q_X$, one has:
\begin{equation}
    P_{WX}(\ell(\widehat \theta(X),W)<\rho) \leq \inf_{Q_X}\beta_{\circ}(P_{WX},P_WQ_X)^{-1}(L_W(\rho)).
\end{equation}
Hence, to conclude:
\begin{equation}
    P_{WX}(\ell(\widehat \theta(X),W)\geq \rho) \geq 1-\inf_{Q_X}\beta_{\circ}(P_{WX},P_WQ_X)^{-1}(L_W(\rho)).
\end{equation}
Since the minimax risk is always larger than the auxiliary Bayesian average $P_{WX}(\ell(\widehat \theta(X),W)\geq \rho)$ one can conclude that
\begin{equation}
\inf_{\widehat\theta}
\sup_{\theta}
\sup_{P_{X\mid\theta}\in\mathscr P_{X\mid\theta}}
P_{X\mid\theta}
\left(\ell(\widehat\theta(X),\theta)\ge\rho\right)
\ge
1-\inf_{Q_X}\beta_{\circ}(P_{WX},P_WQ_X)^{-1}(L_W(\rho)).
\end{equation}
\end{proof}

\begin{proof}[Proof of~\Cref{thm:fDivLowerBound}]
Set \(u:=L_W(\rho)\in[0,1]\) and
\(b:=\beta_{\circ}^{-1}(P_{WX},P_WQ_X)(u)\), and take
\(\alpha\in(u,b)\). By the definition of the generalised inverse,
\(\beta_\alpha(P_{WX},P_WQ_X)\leq u\). Take \(\epsilon>0\) such that
\(u+\epsilon<\alpha\). By the definition of \(\beta_\alpha\), there
exists a function \(\varphi_\epsilon\) such that
\begin{align}
    \mathbb{E}_{P_{WX}}[\varphi_\epsilon]&\geq \alpha\\
    \mathbb{E}_{P_WQ_X}[\varphi_\epsilon] &\leq \beta_\alpha(P_{WX},P_WQ_X)+\epsilon \leq u+\epsilon.
\end{align}
Consequently, one has that
\begin{equation}
     \mathbb{E}_{P_WQ_X}[\varphi_\epsilon] \leq u+\epsilon < \alpha \leq  \mathbb{E}_{P_{WX}}[\varphi_\epsilon].
\end{equation}
By the data-processing inequality for $f$-divergences one has that
\begin{equation}
    D_f(P_{WX}\|P_WQ_X) \geq d_f(\mathbb{E}_{P_{WX}}[\varphi_\epsilon]\|\mathbb{E}_{P_WQ_X}[\varphi_\epsilon]),
\end{equation}
where $d_f(p\|q)=qf(p/q)+(1-q)f((1-p)/(1-q))$ denotes the binary $f$-divergence. For a given $q$ the map $p \mapsto d_f(p\|q)$ is non-decreasing for $p\in[q,1]$. Similarly, for fixed $p$, the map $q\mapsto d_f(p\|q)$ is non-increasing for $q\in [0,p]$. 
One can thus show that \begin{align}
    d_f(\mathbb{E}_{P_{WX}}[\varphi_\epsilon]\|\mathbb{E}_{P_WQ_X}[\varphi_\epsilon]) &\geq d_f(\mathbb{E}_{P_{WX}}[\varphi_\epsilon]\| u+\epsilon)\\
    &\geq d_f(\alpha\|u+\epsilon).
\end{align}
Therefore, taking the limit $\epsilon\to 0$ one recovers the following for $\alpha \in (u, \beta_{\circ}^{-1}(P_{WX},P_WQ_X)(u)) $
\begin{equation}
    D_f(P_{WX}\|P_WQ_X) \geq d_f(\alpha\|u).
\end{equation}
Moreover, by lower semi-continuity of $d_f(\cdot\|u)$, letting $\alpha \uparrow \beta_{\circ}(P_{WX},P_WQ_X)^{-1}(u)$ yields
\begin{equation}
    D_f(P_{WX}\|P_WQ_X) \geq d_f(\beta_{\circ}(P_{WX},P_WQ_X)^{-1}(u)\|u). 
\end{equation}
Given that $d_f(\cdot \|u)$ is non-decreasing on $[u,1]$ we reach the following bound:
\begin{equation}
    \beta_{\circ}(P_{WX},P_WQ_X)^{-1}(u) \leq d_{f}^{-1}(D_f(P_{WX}\|P_WQ_X)\|u),
\end{equation}
where $d_f^{-1}(\cdot \|u)$ is the generalised inverse of
$p\mapsto d_f(p\|u)$ over $p\in[u,1]$.
Using this estimate in~\Cref{thm:npLowerBound} yields the result.
\end{proof}

\begin{proof}[Proof of~\Cref{cor:fFano}]
Since $P_W$ is uniform over
$\{\theta_1,\ldots,\theta_M\}$, for every $\vartheta\in\Theta$,
\begin{align}
    P_W\bigl(\ell(\vartheta,W)<\rho\bigr)
    &=
    \frac{1}{M}
    \left|
    \left\{
    j\in[M]:
    \ell(\vartheta,\theta_j)<\rho
    \right\}
    \right|\\
    &\leq \frac{1}{M}.
\end{align}
It follows that
\begin{equation}
    L_W(\rho)
    =
    \sup_{\vartheta\in\Theta}
    P_W\bigl(\ell(\vartheta,W)<\rho\bigr)
    \leq \frac{1}{M}.
\end{equation}
Applying~\Cref{thm:fDivLowerBound} with
$u=1/M$ yields the first claim. The second claim follows from the
$f$-informativity formulation in the preceding remark.
\end{proof}

\begin{proof}[Proof of~\Cref{cor:leCam}]
Let $W$ be uniformly distributed over $\{\theta_1,\theta_2\}$ and let
\[
    P_{X\mid W=\theta_i}=P_i,
    \qquad i\in\{1,2\}.
\]
By the separation assumption of the balls one has that for every $\vartheta\in\Theta$,
\begin{equation}
    P_W\left\{
    \ell(\vartheta,W)<\rho
    \right\}
    \leq\frac12,
\end{equation}
and therefore the following bound holds for the small-ball probability:
\begin{equation}
    L_W(\rho)\leq\frac12.
\end{equation}

Consider now the $f$-divergence generated by
\[
    f(t)=\frac12|t-1|,
\]
which corresponds to total variation. Its binary counterpart is
\begin{equation}
    d_f(p\|q)=|p-q|.
\end{equation}
Hence, for $p\in[q,1]$,
\begin{equation}
    d_f^{-1}(v\|q)
    =
    (q+v)\wedge1.
\end{equation}
Applying~\Cref{thm:fDivLowerBound} with $u=1/2$ yields
\begin{equation}
    P_{WX}
    \left\{
    \ell(\widehat\theta(X),W)\geq\rho
    \right\}
    \geq
    1-
    \left[
    \frac12+
    \inf_{Q_X}
    \operatorname{TV}(P_{WX},P_WQ_X)
    \right].
\end{equation}

Since $W$ is uniform over two points,
\begin{align}
    \operatorname{TV}(P_{WX},P_WQ_X)
    &=
    \frac12
    \left\{
    \operatorname{TV}(P_1,Q_X)
    +
    \operatorname{TV}(P_2,Q_X)
    \right\}.
\end{align}
By the triangle inequality,
\begin{equation}
    \operatorname{TV}(P_1,P_2)
    \leq
    \operatorname{TV}(P_1,Q_X)
    +
    \operatorname{TV}(P_2,Q_X),
\end{equation}
and therefore
\begin{equation}
    \inf_{Q_X}
    \operatorname{TV}(P_{WX},P_WQ_X)
    \geq
    \frac12\operatorname{TV}(P_1,P_2).
\end{equation}
On the other hand, choosing $Q_X=P_1$ gives
\begin{equation}
    \operatorname{TV}(P_{WX},P_WP_1)
    =
    \frac12\operatorname{TV}(P_1,P_2),
\end{equation}
so that
\begin{equation}
    \inf_{Q_X}
    \operatorname{TV}(P_{WX},P_WQ_X)
    =
    \frac12\operatorname{TV}(P_1,P_2).
\end{equation}
Consequently,
\begin{equation}
    P_{WX}
    \left\{
    \ell(\widehat\theta(X),W)\geq\rho
    \right\}
    \geq
    \frac12
    \left\{
    1-\operatorname{TV}(P_1,P_2)
    \right\}.
\end{equation}
The rest of the proof follows from similar arguments as above.
\end{proof}

\begin{proof}[Proof of~\Cref{cor:renyiSibson}]
By the data-processing inequality for R\'enyi divergence, for every $Q_X$,
\begin{equation}
\beta_{\circ}(P_{WX},P_WQ_X)^{-1}\bigl(L_W(\rho)\bigr)
\leq
d_\alpha^{-1}
\left(
D_\alpha(P_{WX}\|P_WQ_X)
\middle\|
L_W(\rho)
\right).
\end{equation}
Using this estimate in~\Cref{thm:npLowerBound} yields
\begin{equation}
P_{WX}\bigl(\ell(\widehat\theta(X),W)\geq\rho\bigr)
\geq
1-
\inf_{Q_X}
d_\alpha^{-1}
\left(
D_\alpha(P_{WX}\|P_WQ_X)
\middle\|
L_W(\rho)
\right).
\end{equation}

Now suppose that
\[
I_\alpha(W,X)
<
d_\alpha
\left(
1-\delta
\middle\|
L_W(\rho)
\right).
\]
By definition of Sibson mutual information, there exists a distribution
$\widetilde Q_X$ such that
\[
D_\alpha(P_{WX}\|P_W\widetilde Q_X)
<
d_\alpha
\left(
1-\delta
\middle\|
L_W(\rho)
\right).
\]
Since $p\mapsto d_\alpha(p\|L_W(\rho))$ is non-decreasing on
$[L_W(\rho),1]$, it follows that
\[
d_\alpha^{-1}
\left(
D_\alpha(P_{WX}\|P_W\widetilde Q_X)
\middle\|
L_W(\rho)
\right)
<
1-\delta.
\]
Hence
\[
\inf_{Q_X}
d_\alpha^{-1}
\left(
D_\alpha(P_{WX}\|P_WQ_X)
\middle\|
L_W(\rho)
\right)
<
1-\delta,
\]
and the conclusion follows.
\end{proof}

\begin{proof}[Proof of~\Cref{cor:amemiyaHolder}]
Let
\[
E_\rho
:=
\left\{
(x,w):
\ell(\widehat\theta(x),w)<\rho
\right\}.
\]
For an arbitrary auxiliary distribution $Q_X$, one has that
\begin{align}
P_{WX}(E_\rho)
&=
\mathbb E_{P_WQ_X}
\left[\frac{\mathrm dP_{WX}}{\mathrm d(P_WQ_X)}\1_{E_\rho}
\right] \\
&\overset{(a)}{\leq} \left\|
\frac{\mathrm dP_{WX}}{\mathrm d(P_WQ_X)}
\right\|_{L_\Phi^{\mathrm A}(P_WQ_X)}
\left\|
\1_{E_\rho}
\right\|_{L_\Psi^{\mathrm L}(P_WQ_X)} \\
&= \left\|
\frac{\mathrm dP_{WX}}{\mathrm d(P_WQ_X)}
\right\|_{L_\Phi^{\mathrm A}(P_WQ_X)} \frac{
1
}{
\Psi^{-1}
\left(
1/(P_WQ_X)(E_\rho)
\right)
} \\ 
&\overset{(b)}{\leq} \frac{
\left\|
\frac{\mathrm dP_{WX}}{\mathrm d(P_WQ_X)}
\right\|_{L_\Phi^{\mathrm A}(P_WQ_X)}
}{
\Psi^{-1}
\left(
1/L_W(\rho)
\right)
}.
\end{align}
where (a) follows from the generalised H\"older inequality, and (b) follows
from the monotonicity of $\Psi^{-1}$ together with
$(P_WQ_X)(E_\rho)\leq L_W(\rho)$.
Moreover, since this holds for every $Q_X$,
\begin{equation}
P_{WX}(E_\rho)
\leq
\frac{\inf_{Q_X} 
\left\|
\frac{\mathrm dP_{WX}}{\mathrm d(P_WQ_X)}
\right\|_{L_\Phi^{\mathrm A}(P_WQ_X)}
}{
\Psi^{-1}
\left(
1/L_W(\rho)
\right)
},
\end{equation}
which yields
\begin{equation}
P_{WX}\bigl(\ell(\widehat\theta(X),W)\geq\rho\bigr)
\geq
1-\frac{\inf_{Q_X}
\left\|
\frac{\mathrm dP_{WX}}{\mathrm d(P_WQ_X)}
\right\|_{L_\Phi^{\mathrm A}(P_WQ_X)}
}{
\Psi^{-1}
\left(
1/L_W(\rho)
\right)
}.
\end{equation}
The minimax-quantile conclusion follows from~\Cref{thm:npLowerBound}.
\end{proof}

\begin{proof}[Proof of~\Cref{rmk:holderLowerBound}]
The result follows from~\Cref{cor:amemiyaHolder} with
$\Phi(x)=x^\alpha/\alpha$, whose complementary Young function is
$\Psi(x)=x^{\alpha'}/\alpha'$, where
$1/\alpha+1/\alpha'=1$. The equivalent information form follows from the
variational representation of Sibson mutual information and the identity between
R\'enyi divergence and the $L^\alpha$-norm of the Radon--Nikodym derivative.
\end{proof}

\begin{proof}[Proof of~\Cref{thm:maximalLeakageExact}]
The loss-ball packing property implies that every estimator induces a
decoder on $\mathcal V$ whose exact-recovery success probability is at
least its success probability at threshold $\rho$. Since the MAP rule
maximises average exact-recovery success under the uniform prior, for
every estimator $\widehat\theta$,
\begin{align}
&\frac1M
\sum_{j=1}^M
P_{X\mid\theta_j}
\left(
\ell\bigl(\widehat\theta(X),\theta_j\bigr)<\rho
\right)
\notag\\
&\qquad\leq
P_{WX}
\left(
\widehat\theta_{\mathrm{MAP}}(X)=W
\right).
\label{eq:mapAverageOptimality}
\end{align}

By~\eqref{eq:mapEqualizer},
\begin{equation}
P_{WX}
\left(
\widehat\theta_{\mathrm{MAP}}(X)=W
\right)
=
s_{\mathrm{MAP}},
\end{equation}
and the worst-case success probability of the MAP rule is also
$s_{\mathrm{MAP}}$. Hence, for every $\widehat\theta$,
\begin{align}
&\min_{j=1,\ldots,M}
P_{X\mid\theta_j}
\left(
\ell\bigl(\widehat\theta(X),\theta_j\bigr)<\rho
\right)
\notag\\
&\qquad\leq
\frac1M
\sum_{j=1}^M
P_{X\mid\theta_j}
\left(
\ell\bigl(\widehat\theta(X),\theta_j\bigr)<\rho
\right)
\notag\\
&\qquad\leq s_{\mathrm{MAP}}.
\end{align}
The MAP rule attains the right-hand side, so
\begin{equation}
\sup_{\widehat\theta\in\mathcal E}
\min_{j=1,\ldots,M}
P_{X\mid\theta_j}
\left(
\ell\bigl(\widehat\theta(X),\theta_j\bigr)<\rho
\right)
=
s_{\mathrm{MAP}}.
\end{equation}
Taking complements proves~\eqref{eq:minimaxMapEquality}.

For a uniform prior, the operational characterisation of maximal
leakage gives
\begin{equation}
P_{WX}
\left(
\widehat\theta_{\mathrm{MAP}}(X)=W
\right)
=
\frac{\exp\{\ml{W}{X}\}}{M}.
\label{eq:mapLeakageIdentity}
\end{equation}
Moreover, by the Neyman--Pearson converse and its Maximal Leakage
relaxation,
\begin{align}
P_{WX}
\left(
\widehat\theta_{\mathrm{MAP}}(X)=W
\right)
&\leq
\inf_{Q_X}
\beta_{\circ}(P_{WX},P_WQ_X)^{-1}
\left(\frac1M\right)\\
&\leq
\frac{\exp\{\ml{W}{X}\}}{M}.
\end{align}
The first and last quantities coincide by~\eqref{eq:mapLeakageIdentity};
therefore both inequalities are equalities. This proves
\eqref{eq:mapNPLeakageEquality} and completes the proof.
\end{proof}
\end{document}